\documentclass[10pt]{iopart}
\usepackage[latin9]{inputenc}
\setcounter{tocdepth}{2}
\usepackage{bm}
\usepackage{amstext}
\usepackage{amssymb}
\usepackage{graphicx}
\usepackage{esint}

\makeatletter

\DeclareTextSymbolDefault{\textquotedbl}{T1}
\providecommand{\tabularnewline}{\\}

\usepackage{iopams}
\usepackage{setstack}

\usepackage{lineno}
\usepackage{iopams}
\modulolinenumbers[5]
\eqnobysec
\usepackage{xcolor}
\usepackage{multirow}
\providecommand{\tabularnewline}{\\}
\usepackage{amsfonts}
\usepackage{amsthm}
\usepackage{mathrsfs}
\usepackage{subfigure}
\usepackage{soul}
\usepackage{etoolbox}

\DeclareMathAlphabet\EuScript{U}{eus}{m}{n}
\SetMathAlphabet\EuScript{bold}{U}{eus}{b}{n}

\renewcommand{\S}{\mathcal{S}}
\newcommand{\E}{\mathcal{E}}
\newcommand{\F}{\mathcal{F}}

\newcommand{\Fp}{\mathcal{F}_p}
\newcommand{\Fn}{\mathcal{F}_{\mbox{\tiny N}}}
\newcommand{\Fc}{\mathcal{F}_{\mbox{\tiny C}}}
\newcommand{\Fa}{\mathcal{F}_{\mbox{\tiny A}}}
\newcommand{\Fq}{\mathcal{F}_{\mbox{\tiny Q}}}
\newcommand{\Fmin}{\mathcal{F}_{\rm min}}
\newcommand{\Fmax}{\mathcal{F}_{2}}

\newcommand{\Fave}{\mathcal{F}_{\rm ave}}
\newcommand{\Fuj}{\mathcal{F}_{1}}
\newcommand{\Fam}{\mathcal{F}_{\mbox{\tiny AM}}}
\newcommand{\Fgm}{\mathcal{F}_{\mbox{\tiny GM}}}
\newcommand{\Fhm}{\mathcal{F}_{\mbox{\tiny HM}}}
\newcommand{\Ns}{N_{\rm samples}}
\newcommand{\Ftele}{\F_{\rm tele}}

\newcommand{\Id}{\mathbf{1}}
\newcommand{\ii}{{\rm i}}

\def\bra#1{\mathinner{\langle{#1}|}}
\def\ket#1{\mathinner{|{#1}\rangle}}
\def\defeq{\mathrel{\mathop:}=}

\newcommand{\proj}[1]{\left| #1 \right\rangle\!\!\left\langle #1 \right|}

\newtheorem{theorem}{Theorem}\newtheorem{corollary}{Corollary}[theorem]
\newtheorem{lemma}[theorem]{Lemma}

%
%

\def\@mkboth#1#2{}
\newlength\appendixwidth
\preto\appendix{\addtocontents{toc}{\protect\patchl@section}}
\newcommand{\patchl@section}{%
  \settowidth{\appendixwidth}{\textbf{Appendix }}%
  \addtolength{\appendixwidth}{1.5em}%
  \patchcmd{\l@section}{1.5em}{\appendixwidth}{}{\ddt}%
}


\makeatother

\begin{document}

\title{Quantum fidelity measures for mixed states}

\author{Yeong-Cherng Liang$^{1}$, Yu-Hao Yeh$^{1}$, Paulo E. M. F. Mendonça$^{2,3}$,
Run Yan Teh$^{4}$, Margaret D. Reid$^{4,5}$ and Peter D. Drummond$^{4,5}$}

\address{$^{1}$ Department of Physics, National Cheng Kung University, Tainan
701, Taiwan}

\ead{ycliang@mail.ncku.edu.tw}

\address{$^{2}$Academia da Força Aérea, Caixa Postal 970, 13643-970, Pirassununga,
SP, Brazil}

\ead{pmendonca@gmail.com}

\address{$^{3}$Melbourne Graduate School of Education, University of Melbourne,
Melbourne, VIC 3010, Australia }

\address{$^{4}$Centre for Quantum and Optical Science, Swinburne University
of Technology, Melbourne, VIC 3122, Australia}

\ead{rteh@swin.edu.au, mdreid@swin.edu.au, pdrummond@swin.edu.au}

\address{$^{5}$Institute of Theoretical Atomic, Molecular and Optical Physics
(ITAMP), Harvard University, Cambridge, Massachusetts, USA. }

\submitto{\RPP}
\begin{abstract}
Applications of quantum technology often require fidelities to quantify
performance. These provide a fundamental yardstick for the comparison
of two quantum states. While this is straightforward in the case of
pure states, it is much more subtle for the more general case of mixed
quantum states often found in practice. A large number of different
proposals exist. In this review, we summarize the required properties
of a quantum fidelity measure, and compare them, to determine which
properties each of the different measures has. We show that there
are large classes of measures that satisfy all the required properties
of a fidelity measure, just as there are many norms of Hilbert space
operators, and many measures of entropy. We compare these fidelities,
with detailed proofs of their properties. We also summarize briefly
the applications of these measures in teleportation, quantum memories
and quantum computers, quantum communications, and quantum phase-space
simulations. 
\end{abstract}
\maketitle
\tableofcontents{}

\ioptwocol

\section{Introduction}

Fidelity is a central concept to quantum information. It provides
a mathematical prescription for the quantification of the \emph{degree
of similarity} of a pair of quantum states. In practice, there are
many situations where such a comparison is useful. For example, since
any experimental preparation of a quantum state is limited by imperfections
and noise, one is generally interested to find how close the state
actually produced is to the state whose production was intended. This
is a common issue in quantum communications and quantum computing,
where one is interested in either generating or sending precisely
defined quantum states in the face of noise and other sources of error.

Another common application arises in the context of entanglement quantification~\cite{Guehne2009}:
the closer a given quantum state is to the set of separable states,
the less entangled it is, and vice-versa (see, however,~\cite{Rosset2012}).
Measuring and computing fidelities between quantum states is at the
heart of various quantum information tasks. In recent years, fidelity
measure has also been applied extensively to study quantum phase transitions.
For a review on this subject, see~\cite{gu2010fidelity} and references
therein.

For pure states, fidelity is well-defined. Yet pure states, by their
nature, are exactly what one does \emph{not} expect in a noisy, real-world
environment. Moreover, in large quantum systems, one needs to measure
exponentially many parameters in order fully determine the quantum
states, thus making the task infeasible in practice. Instead, one
could hope to obtain some partial information about the produced quantum
states, for example, by performing tomography on some (random) subsets
of the multipartite quantum states. Importantly, a generic multipartite
pure state is highly entangled across any bipartition~\cite{Hayden2006},
hence its reduced states on subsets are typically mixed. Thus, more
realistically, one must expect to deal with impure or mixed target
states that are obtained by examining a subsystem obtained by tracing
over a larger environment.

In all real-world experiments, there is also noise or coupling to
an environment. This is an essential part of the quantum-classical
transition, since it is the coupling to an environment that allows
measurements. From the point of view of a quantum technologist or
engineer, the environment causes decoherence, and this is the main
challenge in many quantum technology applications. Just as relevant
is the fact that one often wishes to analyze the performance of one
component\textemdash for example, a quantum gate\textemdash embedded
in a larger device, so that the environment is an essential part of
the system of interest, as explained above. Thus, one may argue that
the fidelity for mixed states is generic, and therefore the most practical
type of fidelity. Pure state fidelity represents an idealized case
only, which is typically non-scalable. For mixed states, however,
fidelity has no clearly unique definition, and a number of different
approaches exist.

Here the question is really one of distance: given two\emph{ density
matrices}, how close are they to each other in the appropriate Hilbert
space? This is an important issue, for example, if one wishes to understand
how accurately a given approximate calculation replicates that of
some target density matrix. The concept of distance in any vector
space is never a uniquely defined concept without further considerations
being applied.

There are other situations where mixed state fidelity might seem to
be important, but frequently they are more subtle than would appear
at first sight. For example, while it is known \cite{WoottersClone}
that one cannot clone quantum states, there is great interest in the
idea of optimal cloning \cite{BuzekHilleryPhysRevA.54.1844,ScaraniRMP.77.1225},
in which a state is copied as well as is allowed by quantum theory.
This requires one to quantify the fidelity of the clone or copy, in
order to decide if a given cloning strategy is really optimal. Similarly,
while quantum memories and quantum teleportation are permitted by
quantum mechanics, the real world of the laboratory leads to inevitable
noise and errors. Again, it is a mixed state fidelity that is important,
since the original state that is teleported or stored was typically
not a pure state originally. These applications may involve multiple
states as well, which requires not just fidelities, but averages over
them.

In this review, we analyze these issues by considering measures of
fidelity that apply to mixed as well as pure states. In Section II,
we review the properties of a fidelity measure as defined by Josza
\cite{jozsa1994fidelity}, which apply to the mixed state case, and
illustrate that these are satisfied by a large number of proposed
definitions of fidelity measure. This leads us to compare the different
properties of these measures, which is the main purpose of this review.
In Sections 3 and 4 we summarize the respective mathematical properties,
and make comparisons of the values the different fidelity measures
can have. We choose for specific comparison finite-dimensional ``qudit''
states with varying degrees of purity, and also make comparisons based
on randomly generated density operators. Finally in Section 5, we
review the application of fidelity measures to quantum information
protocols such as teleportation~\cite{bennett1993}, quantum memories~\cite{Lvovsky:2009aa}
and quantum gates~\cite{nielsen2000quantum}. Here, the fidelity
measure can indicate a level of security for a quantum state transfer,
or indicate the effectiveness of a logic operation. We present a useful
tool\textemdash phase space fidelity\textemdash for evaluating the
theoretical prediction of fidelity where systems and theories are
complex.

As applications are based on experimental measurements, we conclude
the review with a short discussion of the different types of experimental
fidelity measurements that have been reported in the literature. These
range from measurements on atomic states such as in ion traps, to
photonic qubit states that might imply entanglement based on post-selection.
The issue of defining the appropriate Hilbert space for the fidelity
measurement is also discussed. In the Summary, we present our main
conclusion, which is that for some situations different fidelity measures
to the commonly used Uhlmann-Josza measure may be advantageous. The
Appendix gives detailed proofs of nobel results where the proofs are
too lengthy for the main part of the review.

\section{Fidelity measures for mixed states}

\label{Sec:MixedFidelity}

\subsection{Measured and relevant Hilbert spaces\label{subsec:Relevant-and-irrelevant}}

Given the wide variety of quantum technologies involved, it is not
surprising that there are many ways to measure fidelity. These may
depend on the applications envisaged, or simply on what is measurable
in an experiment. Yet these are quite different issues. What is feasible
in an experiment may not be the fidelity measurement that is needed.
To understand this point, we must introduce the concept of measured
and relevant Hilbert space, which is fundamental to understanding
mixed state fidelity.

To motivate this analysis, we note that in certain types of relatively
well-structured quantum states, such as the Greenberger-Horne-Zeilinger~\cite{GHZ}
states, large-dimensional fidelity measurements have been reported,
with 10-14 qubit Hilbert spaces \cite{monz201114,song201710,chen2017observation}
being treated in ion-trap, photonic and superconducting quantum circuit
environments. Such measurements are especially important in allowing
the detection of decoherence and super-decoherence, where the decoherence
rates increase with system size \cite{reid2014quantum,galve2017microscopic}.

These experiments are impressive demonstrations, but scaling them
to even larger sizes is likely to become increasingly difficult owing
to their exponential complexity, leading to exponentially many measurements
being required in a general tomographic measurement. For this reason,
we expect measurements of mixed state fidelity to become more common
in large Hilbert spaces, as explained in the Introduction.

The first difficulty in carrying out a physical fidelity measurement
is to identify the relevant Hilbert space. All quantum systems are
coupled to other modes of the universe, but the state of Betelgeuse
is usually only relevant if one wishes to communicate over a 640 light
year distance. Thus, a useful fidelity measurement only measures the
relevant fidelity. Suppose the global quantum state is divided into
the relevant and irrelevant parts, with an orthogonal basis $\left|\psi_{i}\right\rangle _{{\rm rel}}\left|\phi_{i}\right\rangle _{{\rm irr}}$:
\begin{equation}
\left|\Psi\right\rangle =\sum_{ij}C_{ij}\left|\psi_{i}\right\rangle _{{\rm rel}}\left|\phi_{j}\right\rangle _{{\rm irr}}
\end{equation}

Next, the relevant density matrix is obtained by the usual partial
trace procedure, so that 
\begin{equation}
\rho=\sum_{ijk}C_{ij}C_{kj}^{*}\left|\psi_{i}\right\rangle _{{\rm rel}}{\left\langle \psi_{k}\right|_{{\rm rel}}}
\end{equation}
The irrelevant portion, $\left|\phi\right\rangle _{{\rm irr}}$, of
the quantum state may change in time. Yet given a large enough separation,
and for localized interactions, this will generally not alter the
relevant density matrix, owing to the trace over irrelevant parts.
However, if the product does not factorize initially, so the relevant
and irrelevant parts are entangled, then one does always have a mixed
state. This is the generic situation, and one should not assume that
a state is pure, without tomographic measurements that verify purity.

While subdivision is necessary, \emph{where should the dividing line
be}? For example, the internal states of all the different ions in
an ion trap, and their motional states, may all be relevant to an
ion-trap quantum computer. However, the state of motion of the vacuum
chamber will not be. Thus, the mixed state fidelity is a \emph{relative}
measure. It depends on the relevant Hilbert spaces. One will get different
results depending on how large the relevant Hilbert space is, which
is application-dependent.

Yet the relevant space in many cases may still be much larger than
the one in which the fidelity is \emph{measured}. For example, one
might only be able to measure the internal states of one or two ions
\cite{leibfried2003experimental}. The total relevant Hilbert space
can be larger than this. This means that there is some lost of information
which may be important to the application. One is still entitled to
claim to have measured a fidelity, but it is clearly not the only
relevant fidelity.

This means that the definition and understanding of a mixed state
fidelity is very central to quantum technologies. The original idea
of fidelity \cite{schumacher1995quantum}, as first introduced to
the quantum information community, is not immediately applicable to
an arbitrary pair of density matrices. However, it is implicit in
this definition that for a pair of pure states $\rho=\proj{\phi}$
and $\sigma=\proj{\psi}$, their fidelity should be defined by the
transition probability between the two states, i.e., 
\begin{equation}
\F(\proj{\phi},\proj{\psi})=\left|\langle\psi|\phi\rangle\right|^{2}.\label{Eq:Fidelity:AllPure}
\end{equation}
As pointed out subsequently \cite{jozsa1994fidelity}, this is indeed
a natural candidate for a fidelity measure since it corresponds to
the closeness of states in the usual geometry of Hilbert space.

When one of the quantum states, say, $\rho$ is mixed, there also
exists a generalization of Eq.~\eref{Eq:Fidelity:AllPure} in terms
of the transition probability between the two states, namely, 
\begin{equation}
\F(\rho,\proj{\psi})=\bra{\psi}\rho\ket{\psi}.\label{Eq:Fidelity:1mixed}
\end{equation}
Note that this was also implicitly defined in the original fidelity
measure ~\cite{schumacher1995quantum}. In \cite{mendonca2008},
this expression has also been referred as Schumacher's fidelity.

\subsection{Desirable properties of mixed state fidelities}

\label{sec:jozsa_n_additions}

As one might expect, not every function of two density matrices provides
a physically reasonable generalization of Eq.~\eref{Eq:Fidelity:1mixed}
to a pair of mixed states.

For example, at first glance it may seem that 
\begin{equation}
\F(\rho,\sigma)=\tr(\rho\,\sigma),\label{Eq:Fidelity:HSInnerProduct}
\end{equation}
serves as a useful generalization of Eq.~\eref{Eq:Fidelity:1mixed}
to an arbitrary pair of mixed states. However, the Hilbert-Schmidt
inner product leads to an unsatisfactory generalization of fidelity
\cite{jozsa1994fidelity}. For example, let us denote by $\Id_{d}$
the identity operator acting on a $d$-dimensional Hilbert space.
Then, adopting Eq.~\eref{Eq:Fidelity:HSInnerProduct} as the mixed
state fidelity would imply that all pairs of density matrices for
a two-state or qubit system of the form {$({\Id_{2}}/{2},\proj{\phi})$,
are just as similar as the identical pair $({\Id_{2}}/{2},{\Id_{2}}/{2})$.

This problem is soluble through a suitable normalization, as we show
below, but it illustrates the need for a suitable definition of a
fidelity measure. In order to avoid such difficulties, Jozsa proposed
the following list of fidelity axioms \cite{jozsa1994fidelity}, which
should be satisfied by any sensible generalization of Eq.~\eref{Eq:Fidelity:1mixed}
to a pair of mixed states: 
\begin{itemize}
\item[J1a)] \label{J1a} $\F(\rho,\sigma)\in[0,1]$ 
\item[J1b)] \label{J1b} $\F(\rho,\sigma)=1$ \emph{if and only if} $\rho=\sigma$ 
\item[J2)] \label{J2} $\F(\rho,\sigma)=\F(\sigma,\rho)$ 
\item[J3)] \label{J3} $\F(\rho,\sigma)=\tr(\rho\,\sigma)$ if either $\rho$
or $\sigma$ is a pure state 
\item[J4)] \label{J4} $\F(U\rho\,U^{\dagger},U\sigma U^{\dagger})=\F(\rho,\sigma)$
for all unitary $U$ 
\end{itemize}
Henceforth, we shall refer to this set of conditions as Jozsa's axioms.

Apart from these, it is convenient to append to this list the requirement
that any fidelity measure should vanish when applied to quantum states
of orthogonal support, i.e., 
\begin{itemize}
\item[J1c)] \label{J1c} $\F(\rho,\sigma)=0$ \emph{if and only if} $\rho\,\sigma=0$ 
\end{itemize}
Throughout, the requirements (J1)-(J4) will be taken as the most basic
requirements to be satisfied by any generalization of fidelity measure
for a pair of mixed states.

One may be interested in further subtleties in determining fidelity,
beyond the Jozsa axioms, depending on the application of the measure.
For example, in the case of a mixed system, the idea of a state-by-state
fidelity could be important. One may wish to investigate a cloning,
communication or quantum memory experiment. Suppose, in the experiment,
the input state is an unknown qudit state of dimension $d$, and has
maximum entropy. In other words, one has $\rho={\Id_{d}}/{d}$. According
to Jozsa's criterion, the highest fidelity output state $\sigma$
would be another, identical, maximal entropy state. Yet this would
not be a useful criterion on its own \textendash{} it is a necessary,
but not sufficient measure.

Here, one might wish to have the maximum fidelity for every expected
input state, pure or mixed, with an appropriate weight given by their
relative probability. This requires an understanding of the correlations
between the input and output states, given a communication alphabet
$\rho_{A},\rho_{B},\ldots$, which is related to the conditional information
measures found in communication theory \cite{caves1994quantum}. Such
more general issues cannot be investigated by just using simple fidelity
measures of the Jozsa type, and more sophisticated process fidelity
measures \cite{gilchrist2005distance} are needed, which we discuss
in detail later. However, even in this case, {a fidelity measure}
can be useful provided it is applied relative to every input density
matrix in the relevant communication alphabet, and then averaged with
its probability of appearance. This is called the average fidelity:
\begin{equation}
\langle\F(\rho,\sigma)\rangle=\sum P_{j}\F(\rho_{j},\sigma_{j}),\label{eq:fidav}
\end{equation}
where the pair $\rho_{j},\sigma_{j}$ occur with probability $P_{j}$.
However, the average fidelity may involve averages over mixed state
fidelities. Hence this concept is a relative one, and depends on the
precise definition of fidelity used.

\subsection{Uhlmann-Josza fidelity}

The most {widely-employed} generalization of Schumacher's fidelity
that has been proposed in the literature is the {\em Uhlmann-Jozsa
(U-J) fidelity} $\Fuj$ ~\cite{jozsa1994fidelity,uhlmann1976transition},
as the maximal {\em transition probability} between the purification
of a pair of density matrices $\rho$ and $\sigma$. Our choice of
notation for this fidelity will become evident in Section~\ref{Sec:NormBased}:
\begin{equation}
\Fuj(\rho,\sigma)\defeq\max_{\ket{\psi},\ket{\varphi}}{|\langle\psi|\varphi\rangle|^{2}}=\left(\tr\sqrt{\sqrt{\rho}\sigma\sqrt{\rho}}\right)^{2}\,.\label{eq:fid}
\end{equation}

In order to understand this definition, we note that here, $\ket{\psi}$
is what is called a purification of $\rho$. A purification is a state
in a notional extension of the Hilbert space $\mathcal{H}$ of $\rho$
to an enlarged space $\mathcal{H}'=\mathcal{H}\otimes\mathcal{H}_{2}$,
such that $\ket{\psi}$ is a member of this larger space. The limitation
on $\ket{\psi}$ is that when the projector, $\proj{\psi}$, is traced
over the {auxiliary} Hilbert space $\mathcal{H}_{2}$, it reduces
to $\rho,$ i.e, 
\[
\rho=\tr_{\mathcal{H}_{2}}\left[\proj{\psi}\right].
\]
This fidelity measure is often referred simply as {\em the fidelity}.
Nevertheless, the reader should beware that some authors (e.g., those
of~\cite{gu2010fidelity,nielsen2000quantum,luo2004informational,audenaert2008asymptotic})
have referred, instead, to the square root of $\Fuj(\rho,\sigma)$
as the fidelity. We show below that this fidelity measure, $\Fuj(\rho,\sigma)$,
is one of a large class of similar norm-based measures called $\Fp(\rho,\sigma)$.

Given the wide use of this definition of $\Fuj(\rho,\sigma)$, one
might naturally wonder if there is any need to search further, especially
as this definition carries with it a number of desirable properties.
There are also some difficulties with this approach, however, and
we list them here: 
\begin{itemize}
\item The U-J fidelity requires one to calculate or measure traces of square
roots of matrices. This is not trivial in cases of large or infinite
density matrices. 
\item The conceptual basis of the U-J fidelity is that both the density
matrices being compared are derived from an identical enlarged space
$\mathcal{H}'$. This is not always true in many applications of fidelity. 
\item Since the U-J fidelity is a maximum over purifications, the measured
U-J fidelity on a subspace is always greater or equal to the true
relevant fidelity for two pure states. This may introduce a bias in
estimating pure-state fidelities given a measurement over a reduced
Hilbert space. 
\end{itemize}
This leads to an obvious mathematical question: 
\begin{itemize}
\item \emph{Does the Josza set of requirements lead uniquely to the U-J
fidelity, or do other alternatives exist?} 
\end{itemize}
The purpose of this review is to answer this question. We show that,
indeed, other alternatives do exist that satisfy the Josza axioms.
A similar type of situation exists for the quantum entropy, where
it is known that there are many entropy-like measures. Some are more
suitable for given applications than others (see, e.g.,~\cite{Hu:JMP:2006,Muller:JMP:2013,Dupuis:2013}
and references therein).

\subsection{Alternative fidelities}

One of the first proposals of a fidelity measure alternative to $\Fuj$
was that provided by Chen~\emph{et al.} in~\cite{chen2002alternative}:
\begin{equation}
\Fc(\rho,\sigma)\defeq\frac{1-r}{2}+\frac{1+r}{2}\Fn(\rho,\sigma),\label{Eq:Fc}
\end{equation}
where $r=\frac{1}{d-1}$, $d$ is the dimension of the state space,
and 
\begin{equation}
\Fn(\rho,\sigma)\defeq\tr(\rho\,\sigma)+\sqrt{1-\tr{(\rho^{2})}}\sqrt{1-\tr{(\sigma^{2})}}.\label{Eq:Fn}
\end{equation}
For two-dimensional quantum states, $\Fuj|_{d=2}=\Fc|_{d=2}=\Fn|_{d=2}$~\cite{mendonca2008}.
Moreover, $\Fc$ admits a hyperbolic geometric interpretation in terms
of the generalized Bloch vectors~\cite{Kimura2003,Byrd2003}.

In 2008, $\Fn$ itself was proposed as an alternative fidelity measure
in~\cite{mendonca2008}; the same quantity was also independently
introduced in \cite{miszczak2009sub} by the name of super-fidelity,
as it provides an {\em upper bound} on $\Fuj$.

At about the same time, the square of the quantum affinity $A(\rho,\sigma)$
was proposed in \cite{ma2008geometric} (see also \cite{Raggio1984})
as a fidelity measure, by the name of {\em $A$-fidelity}: 
\begin{equation}
\Fa(\rho,\sigma)\defeq\left[\tr\left(\sqrt{\rho}\sqrt{\sigma}\right)\right]^{2}.\label{Eq:Fa}
\end{equation}
It is worth noting that, in contrast to $\Fn$, the $A$-fidelity
$\Fa$ provides a {\em lower bound} on $\Fuj$.

The super-fidelity $\Fn$ clearly does not satisfy the axiom (J1c).
In response to this \cite{wang2008alternative}, one may introduce
the quantity 
\begin{equation}
\Fgm(\rho,\sigma)\defeq\frac{\tr(\rho\,\sigma)}{\sqrt{\tr(\rho^{2})\,\tr(\sigma^{2})}}\,,\label{Eq:FGm}
\end{equation}
which is, instead, incompatible with axiom (J3). $\Fgm$ can be seen
as the Hilbert-Schmidt inner product between $\rho$ and $\sigma$
normalized by the geometric mean (GM) of their purities $\tr{(\rho^{2})}$
and $\tr{(\sigma^{2})}$.

Lastly, let us point out another quantity of special interest. The
non-logarithmic variety of the quantum Chernoff bound, $\Fq$, is
defined by Audenaert {\em et al.}~\cite{audenaert2007discriminating}
as: 
\begin{equation}
\Fq(\rho,\sigma):=\min_{0\le s\le1}\tr(\rho^{s}\,\sigma^{1-s}).\label{Eq:Fq}
\end{equation}
This quantity was not originally proposed as a fidelity measure. Instead,
it is related to the (asymptotic) probability of error incurred in
discriminating between quantum states $\rho$ and $\sigma$ when one
has access to arbitrarily many copies of them. Nonetheless, we will
include $\Fq$ in our subsequent discussion as it does have many desirable
properties of a fidelity measure.

In fact, amongst all the generalized fidelities formulas proposed
so far, only $\Fq$ and $\Fuj$ fully comply with Jozsa's axioms (cf.
Table~\ref{tbl:JAxiomsCheck:ExistingMeasure}). These will be our
main focus in the review out of these previously known fidelities,
although we point out that there is also an infinite class of norm-based
fidelities that comply with Jozsa's axioms as well as these two. We
note however that $\Fq$ is also computationally challenging. Not
only does it involve fractional powers, but one must optimize over
a continuous set of candidate measures, each involving different fractional
powers.

\begin{table}[!h]
\caption{\label{tbl:JAxiomsCheck:ExistingMeasure} Compatibility of existing
fidelity measures against {Jozsa's} axioms.}
\begin{tabular}{c|cccccc}
 & J1a  & J1b  & J1c  & J2  & J3  & J4 \tabularnewline
\hline 
\hline 
$\Fuj$  & $\surd$  & $\surd$  & $\surd$  & $\surd$  & $\surd$  & $\surd$ \tabularnewline
$\Fq$  & $\surd$  & $\surd$  & $\surd$  & $\surd$  & $\surd$  & $\surd$ \vspace{0.1cm}
 \tabularnewline
\hline 
$\Fn$  & $\surd$  & $\surd$  & $\times$  & $\surd$  & $\surd$  & $\surd$ \tabularnewline
$\Fc$  & $\surd$  & $\surd$  & $\times$  & $\surd$  & $\times$  & $\surd$ \tabularnewline
$\F_{A}$  & $\surd$  & $\surd$  & $\surd$  & $\surd$  & $\times$  & $\surd$ \tabularnewline
$\Fgm$  & $\surd$  & $\surd$  & $\surd$  & $\surd$  & $\times$  & $\surd$ \tabularnewline
\end{tabular}
\end{table}

\subsection{Norm-based fidelities}

\label{Sec:NormBased}

In addition, there are many norm-based fidelity measures that satisfy
these axioms. Consider the operator $A=\sqrt{\rho}\sqrt{\sigma}$
whose properties are closely related to the overlap of two density
matrices. There is an infinite class of unitarily invariant norms
of linear operators, called the Schatten-von-Neumann norms (or more
commonly Schatten norms), which can be used to measure the size of
such operators. These are defined, for $p\ge1$, as~\cite{bhatia97}:
\begin{equation}
\left\Vert A\right\Vert _{p}\equiv\left(\tr\left[\left(AA^{\dagger}\right)^{p/2}\right]\right)^{1/p}.
\end{equation}
These norms all satisfy Hölder and triangle inequalities, so that
${\left\Vert A_{1}A_{2}\right\Vert _{p}}\le\left\Vert A_{1}\right\Vert _{2p}\left\Vert A_{2}\right\Vert _{2p}$.
This particular inequality can be deduced, for example, from Corollary
IV.2.6 of~\cite{bhatia97}. They are not yet suitable as fidelity
measures, as they must be normalized appropriately to satisfy the
fidelity axioms. Hence, keeping this in mind, we define a $p$-fidelity
as: 
\begin{equation}
\Fp\left(\rho,\sigma\right)\defeq\frac{\left\Vert \sqrt{\rho}\sqrt{\sigma}\right\Vert _{p}^{2}}{\max\left[\left\Vert \sigma\right\Vert _{p}^{2},\left\Vert \rho\right\Vert _{p}^{2}\right]}.
\end{equation}

The proof that the axioms are satisfied is given in ~\ref{App:p-norm}.
We note that, for the special case of $p=1$, $\Fuj(\rho,\sigma)$
is exactly the Uhlmann-Josza fidelity. This follows since for any
Hermitian density matrix $\sigma$, $\left\Vert \sigma\right\Vert _{1}^{2}={\left(\tr\sigma\right)^{2}}=1$,
which is the same for all density matrices. Hence, the normalizing
term in this case is

\begin{equation}
\max\left[\left\Vert \sigma\right\Vert _{1}^{2},\left\Vert \rho\right\Vert _{1}^{2}\right]=\max\left[{\left(\tr\sigma\right)^{2}},{\left(\tr\rho\right)^{2}}\right]=1.
\end{equation}

In this review, we focus on $\F_{1}(\rho,\sigma)$ and $\F_{2}(\rho,\sigma)$,
which have especially desirable properties. In particular, $\F_{2}(\rho,\sigma)$,
which is defined as: 
\begin{equation}
\F_{2}(\rho,\sigma)\defeq\frac{\tr(\rho\,\sigma)}{\max\left[\tr(\rho^{2}),\tr(\sigma^{2})\right]}\,,\label{eq:Fmax}
\end{equation}
uses the Hilbert-Schmidt operator measure $\left\Vert A\right\Vert _{2}$,
which {is} often simpler to calculate than $\left\Vert A\right\Vert _{1}$.
In fact, all of the even order $p$-fidelities can be easily evaluated,
as they reduce to the form:

\begin{equation}
\F_{2p}(\rho,\sigma)\defeq\frac{\left\{ \tr\left[\left(\rho\,\sigma\right)^{p}\right]\right\} ^{1/p}}{\max\left\{ \left[\tr(\rho^{{2p}})\right]^{1/p},\left[\tr(\sigma^{{2p}})\right]^{1/p}\right\} }\,.\label{eq:Fmax-1}
\end{equation}

We finally note that a very similar type of circumstance occurs for
entropy, which is also sometimes used to calculate distances between
two density matrices. The traditional von Neumann entropy measure
involves a logarithm, and is often difficult to compute or measure.
This can be generalized to the Rényi entropy \cite{renyi1961measures},
which is:

\begin{equation}
S_{p}\left(\rho\right)=\frac{p}{1-p}\ln\left\Vert \rho\right\Vert _{p}.
\end{equation}

The Rényi entropy reduces to the usual von Neumann entropy, $S\left(\rho\right)$
in the limit of $p\rightarrow1$, just as the generalized fidelity
defined above reduces to the Uhlmann-Josza fidelity, in the same limit.
Both generalizations have advantages in simplifying computations \cite{hastings2010measuring}.

\subsection{Hilbert-Schmidt fidelities}

Although the measure $\Fgm$ does not comply with all of Jozsa's axioms,
its functional form suggests alternatives that are also worth investigating,
using the Hilbert-Schmidt norm. An example is the norm-based fidelity
$\F_{2}(\rho,\sigma)$ in the previous subsection, which is a special
case of $\Fp(\rho,\sigma)$.

To this end, note that for an arbitrary symmetric, non-vanishing function
$f$ that takes the purity of $\rho$ and $\sigma$ as arguments,
one can introduce the functional of $f[\tr(\rho^{2}),\tr(\sigma^{2})]$
\begin{equation}
\F_{f}(\rho,\sigma)=\frac{\tr(\rho\,\sigma)}{f[\tr(\rho^{2}),\tr(\sigma^{2})]},\label{eq:Ff}
\end{equation}
which are easily seen to satisfy a number of Jozsa's axioms. Specifically,
the symmetric property of $f$ and the cyclic property of trace guarantees
that the axiom (J2) is satisfied by $\F_{f}$, whereas the non-vanishing
nature of $f$ guarantees that (J1c) is fulfilled. In addition, the
fact that $f$ only takes the purity of $\rho$ and $\sigma$ as arguments
ensures that $\F_{f}$ complies with (J4).

The advantage of measures like this is that they only involve the
use of operator expectation values, in the sense that $\tr(\rho\,\sigma)$
is the expectation of $\sigma$ given the state $\rho$, or vice-versa.
Such measures tend to be readily expressed and {accessible} using
standard quantum mechanical techniques {applicable to infinite-dimensional}
Hilbert spaces. By contrast, measures involving nested square roots
of operators as found with $\Fuj(\rho,\sigma)$ {are not} so readily
calculated using standard quantum techniques in large Hilbert spaces,
which is important when one is treating bosonic cases. Similar issues
arise in the large Hilbert spaces that occur in many-body theory \cite{hastings2010measuring}.

Two classes of functions naturally fit into the above requirements,
namely, means (arithmetic, geometric or harmonic) and extrema (minimum
or maximum) of the purities of $\rho$ and $\sigma$. Other than the
geometric mean and the maximum\textemdash which give, respectively
$\Fgm$ and $\F_{2}$\textemdash the other functions give, explicitly,
\begin{eqnarray}
\Fam(\rho,\sigma) & \defeq\frac{2\tr(\rho\,\sigma)}{\tr(\rho^{2})+\tr(\sigma^{2})}\,,\nonumber \\
\Fhm(\rho,\sigma) & \defeq\frac{\tr(\rho\,\sigma)\left[{\tr(\rho^{2})+\tr(\sigma^{2})})\right]}{2\tr(\rho^{2})\tr(\sigma^{2})}\,,\label{Eq:NewMeasures}\\
\Fmin(\rho,\sigma) & \defeq\frac{\tr(\rho\,\sigma)}{\min\left[\tr(\rho^{2}),\tr(\sigma^{2})\right]},\nonumber 
\end{eqnarray}
where AM and HM stand for arithmetic and harmonic mean, respectively.
The compatibility of these new measures against Jozsa's axioms is
summarized in Table~\ref{tbl:JAxiomsCheck:NewMeasure}.

\begin{table}[!h]
\caption{\label{tbl:JAxiomsCheck:NewMeasure} Compatibility of $\F_{2}$ and
the candidate fidelity measures defined in \eref{Eq:NewMeasures}
against {Jozsa's} axioms.}
\begin{tabular}{c|cccccc}
 & J1a  & J1b  & J1c  & J2  & J3  & J4 \tabularnewline
\hline 
\hline 
$\F_{2}$  & $\surd$  & $\surd$  & $\surd$  & $\surd$  & $\surd$  & $\surd$ \vspace{0.1cm}
 \tabularnewline
\hline 
$\Fam$  & $\surd$  & $\surd$  & $\surd$  & $\surd$  & $\times$  & $\surd$ \tabularnewline
$\Fhm$  & $\times$  & $\times$  & $\surd$  & $\surd$  & $\times$  & $\surd$ \tabularnewline
$\Fmin$  & $\times$  & $\times$  & $\surd$  & $\surd$  & $\times$  & $\surd$ \tabularnewline
\end{tabular}
\end{table}

The (in)consistency of Eq.~\eref{Eq:NewMeasures} with axiom (J1c),
(J2), (J3) and (J4) can be verified easily either by inspection or
by the construction of counter-examples. Likewise, the normalization
of $\F_{2}$ follows easily from Cauchy-Schwarz inequality whereas
the \emph{incompatibility} of $\Fmin$ and $\Fhm$ with (J1a) can
be verified easily, for example, by considering the following pair
of $3\times3$ density matrices: 
\begin{equation}
\rho=\Pi_{0},\quad\sigma=\frac{3}{4}\Pi_{0}+\frac{1}{8}(\Pi_{1}+\Pi_{2}),
\end{equation}
where, for convenience, we denote 
\begin{equation}
\Pi_{i}\defeq\proj{i},
\end{equation}
as the rank-1 projector corresponding to the $i$-th computational
basis state, labelled ${|0\rangle,|1\rangle,...}$. Explicitly, one
finds that 
\begin{equation}
\Fhm(\rho,\sigma)=\frac{153}{152},
\end{equation}
and 
\begin{equation}
\Fmin(\rho,\sigma)=\frac{24}{19},
\end{equation}
both greater than 1, thus being incompatible with Jozsa axiom J1a.
As for the normalization of $\Fam$ (and $\Fgm$), its proofs can
be found in ~\ref{Sec:Proofs}.

In what follows, we will investigate the compatibility of the fidelity
measures listed in Table~\ref{tbl:JAxiomsCheck:ExistingMeasure}
and Table~\ref{tbl:JAxiomsCheck:NewMeasure} against other desirable
properties that have been considered. We will, however, dismiss $\Fhm$
and $\Fmin$ from our discussion as they do not even meet the basic
requirement of normalization. Apart from these we will mainly focus
on $\F_{2}$. This meets all the required fidelity axioms, and will
be termed the Hilbert-Schmidt fidelity.

The great advantage of $\F_{2}$ in computational terms is that as
well as complying with the extended version of Josza's axioms, it
is also relatively straightforward to compute and to measure. It only
involves expectation values of {Hermitian} operators, which are
computable and measurable with a variety of standard techniques in
quantum mechanics.

\section{Auxiliary fidelity properties}

\label{Sec:OtherProperties}

Let us now look into other auxiliary properties that have been discussed
in the literature. We shall focus predominantly in the three measures
that satisfy all the Jozsa axioms, namely, $\F_{1}$, $\F_{2}$ and
$\F_{Q}$. However, for completeness, we also provide a summary of
our understanding of the various properties of those candidate fidelity
measures that satisfy J1a, J1b, J2 and J4.

\subsection{Concavity properties}

A fidelity measure $\F(\rho,\sigma)$ is said to be separately concave
if it is a concave function of any of its argument. More precisely,
$\F(\rho,\sigma)$ is concave in its first argument if for arbitrary
density matrices $\sigma$, $\rho_{i}$ and arbitrary $p_{i}\ge0$
such that $\sum_{i}p_{i}=1$, 
\begin{equation}
\F\left(\sum_{i}p_{i}\rho_{i},\sigma\right)\geq\sum_{i}p_{i}\F(\rho_{i},\sigma).\label{Eq:Concave1st}
\end{equation}
By the symmetry of $\F$ {[}Jozsa's axiom (J2){]}, i.e., $\F(\rho,\sigma)=\F(\sigma,\rho)$,
a fidelity measure that is concave in its first argument is also concave
in its second argument.

A stronger concavity property is also commonly discussed in the literature.
Specifically, $\F(\rho,\sigma)$ is said to be jointly concave in
both of its arguments if: 
\begin{equation}
\F\left(\sum_{i}p_{i}\rho_{i},\sum_{j}p_{j}\sigma_{j}\right)\geq\sum_{i}p_{i}\F(\rho_{i},\sigma_{i}).\label{Eq:ConcaveJoint}
\end{equation}
This is a stronger concavity property in the sense that if Eq.~\eref{Eq:ConcaveJoint}
holds, so must Eq.~\eref{Eq:Concave1st}. This can be seen by setting
$\sigma_{j}=\sigma$ for all $j$ in \eref{Eq:ConcaveJoint}. Conversely,
if $\F(\rho,\sigma)${} is not separately concave, it also cannot
be jointly concave. Essentially, concavity property of a fidelity
measure tells us how the average state-by-state fidelity compares
with the fidelity between the two resulting ensembles of density matrices,
a point which we will come back to in Sec.~\ref{Sec:PureVsMixed}.
A summary of the concavity properties of the various fidelity measures
considered can be found in Table~\ref{tbl:Concavity}.

\begin{table}[!h]
\caption{\label{tbl:Concavity} Summary of the concavity properties of various
fidelity measures. The first column gives a list of the various measures,
while the second and third column give the compatibility of each measure
against the concavity property. {An asterisk $^{*}$ means that the
square root of the measure satisfies the property. Throughout, we
use the symbol $^{\ddag}$ to indicate that a particular property
was\textemdash to our knowledge\textemdash not discussed in the literature
previously. A question mark $?$ indicates that no counterexample
has been found.} }
\begin{tabular}{c|c|c}
 & Separate  & Joint \tabularnewline
\hline 
\hline 
$\Fuj$  & $\surd$~\cite{uhlmann1976transition}  & $\times${*} \tabularnewline
$\Fmax$  & $\times^{\ddag}$  & $\times$ \tabularnewline
$\Fq$  & $\surd$  & $\surd$~\cite{audenaert2007discriminating} \tabularnewline
\hline 
$\Fn$  & $\surd$  & $\surd$~\cite{mendonca2008} \tabularnewline
$\Fc$  & ?  & ? \tabularnewline
$\Fgm$  & $\times$~\cite{wang2008alternative}  & $\times$ \tabularnewline
$\Fam$  & $\times^{\ddag}$  & $\times$ \tabularnewline
$\F_{A}$  & ${?}^{*}$~\cite{luo2004informational}  & $\times^{\ddag*}$\cite{luo2004informational} \tabularnewline
\end{tabular}
\end{table}

To see that Eq.~\eref{Eq:Concave1st} does not hold for $\Fmax$
(as well as $\Fgm$ and $\Fam$), it suffices to set $p_{1}=p_{2}=\frac{1}{2}$
and consider the qubit or $2\times2$ density matrices $\rho_{1}=\frac{1}{10}(\Pi_{0}+9\Pi_{1})$,
$\rho_{2}=\frac{1}{5}(\Pi_{0}+4\Pi_{1})$ and $\sigma=\frac{1}{5}(3\Pi_{0}+2\Pi_{1})$.

On the other hand, to see that $\Fuj$ and $\Fa$ are not jointly
concave, it suffices to consider $p_{1}=\frac{49}{100}$, $p_{2}=\frac{1}{2}$,
$p_{3}=\frac{1}{100}$ together with the qutrit or $3\times3$ density
matrices $\rho_{1}=\Pi_{2}$, $\sigma_{1}=\Pi_{0}$, $\rho_{2}=\sigma_{2}=\Pi_{1}$,
$\rho_{3}=\frac{1}{5}{(3\Pi_{1}+2\Pi_{2})}$, $\sigma_{3}=\frac{1}{5}(2\Pi_{0}+3\Pi_{1})$
in \eref{Eq:ConcaveJoint}.

\subsection{Multiplicativity under tensor products}

A fidelity measure $\F(\rho,\sigma)$ is said to be multiplicative
if for all density matrices $\rho_{i}$, $\sigma_{i}$ and for all
integer $n\ge2$, 
\begin{equation}
\F\left(\bigotimes_{i=1}^{n}\rho_{i},\bigotimes_{j=1}^{n}\sigma_{j}\right)=\prod_{i=1}^{n}\F(\rho_{i},\sigma_{i});\label{Eq:Multiplicative}
\end{equation}
likewise $\F(\rho,\sigma)$ is said to {\em super-multiplicative}
if 
\begin{equation}
\F\left(\bigotimes_{i=1}^{n}\rho_{i},\bigotimes_{j=1}^{n}\sigma_{j}\right)\ge\prod_{i=1}^{n}\F(\rho_{i},\sigma_{i}).\label{Eq:SuperMultiplicative}
\end{equation}
Clearly, if $\F(\rho,\sigma)$ is (super)multiplicative for $n=2$,
it is also (super)multiplicative in the general scenario.

Two special instances of multiplicativity under tensor products are
worth mentioning. The first of which concerns the comparison of two
quantum states when one has access to $n$ copies of each state. In
this case, a (super)multiplicative measure $\F(\rho,\sigma)$ is also
(super)multiplicative in its tensor powers, i.e., 
\begin{equation}
\F(\rho^{\otimes n},\sigma^{\otimes n})=\left[\F(\rho,\sigma)\right]^{n}.\label{Eq:MultiplicativeTensorPower}
\end{equation}
The other special instance concerns the scenario when $\rho$ and
$\sigma$ are each appended with an uncorrelated state $\tau$. In
this case, multiplicativity demands 
\begin{equation}
\F\left(\rho\otimes\tau,\sigma\otimes\tau\right)=\F(\rho,\sigma)\F(\tau,\tau)=\F(\rho,\sigma).
\end{equation}
Intuitively, if $\F(\rho,\sigma)$ is also a measure of the overlap
between $\rho$ and $\sigma$, one would expect that multiplicativity
is satisfied, at least, in these two special instances. In Table~\ref{tbl:Multiplicativity},
we summarize the multiplicativity properties of the various fidelity
measures. For a proof of the (super)multiplicativity of $\Fmax$,
$\Fq$ (and $\Fc$), and a counterexample showing that $\Fam$ is
in general not multiplicative nor supermultiplicative, we refer the
reader to ~\ref{App:CountExamples}.

\begin{table}[!h]
\caption{\label{tbl:Multiplicativity} Summary of the multiplicativity of the
various fidelity measures. The first column gives the list of candidate
measures $\F$. From the second to the fourth column, we have, respectively,
the multiplicativity of the various measures $\F$ under the addition
of an uncorrelated ancillary state, under tensor powers and under
the general situation of \eref{Eq:Multiplicative}.}
\begin{tabular}{c|c|c|c}
 & Ancilla  & Tensor powers  & General \tabularnewline
\hline 
\hline 
$\Fuj$  & $\surd$~\cite{jozsa1994fidelity}  & $\surd$~\cite{jozsa1994fidelity}  & $\surd$~\cite{jozsa1994fidelity} \tabularnewline
$\F_{2}$  & $\surd^{\ddagger}$  & $\surd^{\ddagger}$  & Super$^{\ddagger}$ \tabularnewline
$\Fq$  & $\surd$~\cite{audenaert2008asymptotic}  & $\surd^{\ddagger}$  & Super$^{\ddagger}$ \vspace{0.1cm}
 \tabularnewline
\hline 
$\Fn$  & Super  & Super  & Super~\cite{mendonca2008}\tabularnewline
$\Fc$  & Super  & Super  & Super$^{\ddagger}$ \tabularnewline
$\Fgm$  & $\surd$~\cite{wang2008alternative}  & $\surd$~\cite{wang2008alternative}  & $\surd$~\cite{wang2008alternative} \tabularnewline
$\Fam$  & $\surd^{\ddagger}$  & $\times^{\ddag}$  & $\times^{\ddag}$ \tabularnewline
$\F_{A}$  & $\surd$~\cite{luo2004informational}  & $\surd$~\cite{luo2004informational}  & $\surd$~\cite{luo2004informational}\tabularnewline
\end{tabular}
\end{table}

\subsection{Monotonicity under quantum operations}

The physical operation of appending a given quantum state $\rho$
by a fixed quantum state $\tau$ discussed above is an example of
what is known as a completely positive trace preserving (CPTP) map.
If we denote a general CPTP map by $\mathcal{E}:\rho\to\mathcal{E}(\rho)$,
it is often of interest to determine, for all density matrices $\rho$
and $\sigma$, if the inequality: 
\begin{equation}
\F\left(\E(\rho),\E(\sigma)\right)\ge\F(\rho,\sigma).\label{Eq:MonotonicUp}
\end{equation}
is satisfied for a given candidate fidelity measure $\F$. In particular,
a fidelity measure that satisfies this inequality is said to be non-contractive
(or equivalently, monotonically non-decreasing) under quantum operations.

In this regard, it is worth noting that a measure $\F$ that (1) complies
with the requirement of unitary invariance (J4), (2) is invariant
under the addition of an uncorrelated ancillary state and is either
(3a) non-contractive under partial trace operation or is (3b) jointly
concave is also non-contractive under general quantum operations.
The sufficiency of conditions (1), (2) and (3a) follow directly from
the Stinespring representation (see, e.g.,~\cite{preskill2015lecture})
of CPTP maps while that of (1), (2) and (3b) also make use of a specific
representation of the partial trace operation as a convex mixture
of unitary transformations, see, e.g., Eq.~(33) of~\cite{Carlen2008}.
For example, since $\Fa$ is monotonic~\cite{luo2004informational}
under partial trace operation (likewise for $\Fuj$ and $\Fq$), the
above sufficiency condition allows us to conclude that $\Fa$ is also
monotonic under general quantum operations.

Apart from the partial trace operation and the extension of a quantum
state by a fixed ancillary state, the measurement of a quantum state
in some fixed basis followed by forgetting the measurement outcome
is another class of CPTP maps that one frequently encounters in the
context of quantum information. In particular, if each measurement
operator (the Kraus operator) is a rank-1 projector, the post-measurement
state would be the corresponding eigenstate. In this case, an evaluation
of the fidelity between the different outputs of the CPTP map corresponds
to an evaluation of the fidelity between the corresponding classical
probability distributions.

\begin{table}[!h]
\caption{\label{tbl:Monotonicity} Summary of the behavior of fidelity measures
under quantum operations. The first column gives the list of candidate
measures $\F$. From {the} second to the fourth column, we have,
respectively, the non-decreasing monotonicity of the measures $\F$
under partial trace operation, under projective measurements (see
{text for details}) and under general quantum operations. {That
is, we mark an entry with a tick $\protect\surd$ if Eq.}~\ref{Eq:MonotonicUp}
{holds for the corresponding CPTP map.}}
\begin{tabular}{c|c|c|c}
 & Partial trace  & Projection  & General \tabularnewline
\hline 
\hline 
$\Fuj$  & $\surd$  & ${\surd}$  & $\surd$~ \cite{nielsen2000quantum}\tabularnewline
$\F_{2}$  & $\times^{\ddag}$  & $\times^{\ddag}$  & $\times$ \tabularnewline
$\Fq$  & $\surd$  & $\surd$  & $\surd$~\cite{audenaert2008asymptotic} \vspace{0.1cm}
 \tabularnewline
\hline 
$\Fn$  & $\times$~\cite{mendonca2008}  & ?  & $\times$ \tabularnewline
$\Fc$  & $\times^{\ddag}$  & ?  & $\times$ \tabularnewline
$\Fgm$  & $\times^{\ddag}$  & $\times^{\ddag}$  & $\times$ \tabularnewline
$\Fam$  & $\times^{\ddag}$  & $\times^{\ddag}$  & $\times$\tabularnewline
$\F_{A}$  & $\surd$~\cite{luo2004informational}  & $\surd$~\cite{luo2004informational}  & $\surd$~\cite{Raggio1984} \tabularnewline
\end{tabular}
\end{table}

The monotonicity of the various candidate fidelity measures for the
few different CPTP maps discussed above is summarized in Table~\ref{tbl:Monotonicity}.
That $\Fmax$ may be contractive under partial trace operation can
be seen by considering the two-qubit density matrices $\rho=\Pi_{1}\otimes\rho_{B}$,
$\sigma=\frac{1}{2}\Id_{2}\otimes\sigma_{B}$ where 
\[
\rho_{B}=\left[\begin{array}{cc}
0.3 & 0.3\\
0.3 & 0.7
\end{array}\right],\quad\sigma_{B}=\left[\begin{array}{cc}
0.06 & 0.2\\
0.2 & 0.94
\end{array}\right]
\]
and the partial trace of $\rho$ and $\sigma$ over subsystem B. As
for the monotonicity of $\Fmax$, $\Fgm$ and $\Fam$ under projective
measurements, one may verify that these measures may be indeed contractive
by considering the qubit density matrices 
\begin{eqnarray}
\rho={\left[\begin{array}{cc}
0.35 & -0.25-0.2\,\ii\\
-0.25+0.2\,\ii & 0.65
\end{array}\right]},\nonumber \\
\sigma={\left[\begin{array}{cc}
0.82 & -0.2-0.24\,\ii\\
-0.2+0.24\,\ii & 0.18
\end{array}\right],}
\end{eqnarray}
and a rank-1 projective measurement in the computational basis. For
the monotonicity of $\Fc$, $\Fgm$ and $\Fam$ under the partial
trace operation, we refer the reader to~\ref{App:CountExamples}
for counterexamples.

The monotonicity under partial trace obtained in $\F_{1}$ and $\Fq$
means that a mixed state fidelity measured according to $\F_{1}$
and $\Fq$ is greater than or equal to the corresponding fidelity
when evaluated on a larger Hilbert space. This implies that there
is a bias in using these measured fidelities on a smaller Hilbert
space as an estimator for the fidelity on an enlarged relevant Hilbert
space, when using $\F_{1}$ and $\Fq$.

This potentially undesirable property is not shared by $\F_{2}$.
However, the general question of which fidelity is the best unbiased
estimator under partial trace operations from a randomized extension
of the measured Hilbert space appears to be an open problem.

\label{Sec:Metric}

\subsection{Metrics}

Intuitively, one expects that if $\F(\rho,\sigma)$ is a measure of
the degree of similarity or overlap between $\rho$ and $\sigma$,
a proper distance measure, i.e., a metric can be constructed via some
functionals of $\F(\rho,\sigma)$ which vanishes for $\rho=\sigma$.

In this section, we review what is known about the metric properties
of three functionals of $\F(\rho,\sigma)$, namely, $\arccos[\sqrt{\F(\rho,\sigma)}]$,
$\sqrt{1-\sqrt{\F(\rho,\sigma)}}$ and $\sqrt{1-\F(\rho,\sigma)}$
for the various fidelity measures discussed in the previous section.
Following the literature, one may want to refer to these functionals,
respectively, as the \emph{modified} Bures angle~\cite{nielsen2000quantum},
the \emph{modified} Bures distance~\cite{Hubner1992} and the \emph{modified}
sine distance~\cite{Rastegin06}. The results are summarized in Table~\ref{tbl:RelatedMetrics}
{while a proof of the respective metric properties can be found in
}~\ref{App:Met}.

\begin{table}[!h]
\caption{\label{tbl:RelatedMetrics} Metric properties for some functionals
of the fidelity measures $\F=\F(\rho,\sigma)$, as discussed in Sec.~\ref{Sec:MixedFidelity}.}
\begin{tabular}{c|ccc}
 & $\arccos[\sqrt{\F}]$  & $\sqrt{1-\sqrt{\F}}$  & $\sqrt{1-\F}$ \tabularnewline
\hline 
\hline 
$\Fuj$  & $\surd$~\cite{gilchrist2005distance}  & $\surd$~\cite{gilchrist2005distance}  & $\surd$~\cite{gilchrist2005distance}\tabularnewline
$\F_{2}$  & $\times^{\ddagger}$  & $\times^{\ddagger}$  & $\surd^{\ddagger}$ \tabularnewline
$\Fq$  & $\times^{\ddag}$  & $\times^{\ddag}$  & $\times^{\ddag}$ \vspace{0.1cm}
 \tabularnewline
\hline 
$\Fn$  & $\times$~\cite{mendonca2008}  & $\times$~\cite{mendonca2008}  & $\surd$~\cite{mendonca2008}~\tabularnewline
$\Fc$  & ?  & ?  & $\surd^{\ddagger}$ \tabularnewline
$\Fgm$  & $\times^{\ddag}$  & $\times^{\ddag}$  & $\surd^{\ddagger}$ \tabularnewline
$\Fam$  & $\times^{\ddag}$  & $\times^{\ddag}$  & ? \tabularnewline
$\F_{A}$  & ?  & $\surd$~\cite{Raggio1984}~  & $\surd^{\ddag}$ \tabularnewline
\end{tabular}
\end{table}

Somewhat surprisingly, despite the fact that $\Fq$ shares many nice
properties with $\Fuj$, none of these functionals derived from $\Fq$
actually behave like a metric for the space of density matrices. This
can be verified, for example, by noticing a violation of the triangle
inequality for all these functionals of $\Fq$ under the choice 
\begin{eqnarray}
\rho=\frac{1}{10}(3\Pi_{0}+7\Pi_{1}),\,\nonumber \\
\sigma=\frac{1}{100}(\Pi_{0}+99\Pi_{1}),\,\nonumber \\
\tau=\frac{1}{5}(\Pi_{0}+4\Pi_{1}).
\end{eqnarray}
As for a violation of the triangle inequality by the other functionals
presented in the table, it is sufficient to consider the following
qutrit density matrices: 
\begin{eqnarray}
\rho=\frac{1}{5}(\Pi_{1}+4\Pi_{2}),\,\nonumber \\
\sigma=\Pi_{0},\,\nonumber \\
\tau=\frac{1}{5}\Pi_{0}+\frac{1}{20}\Pi_{1}+\frac{3}{4}\Pi_{2}.
\end{eqnarray}

Notice that other than the functionals considered above, it was also
shown in~\cite{ma2008geometric} that $\max_{\tau}|\Fn(\rho,\tau)-\Fn(\sigma,\tau)|$,
a functional constructed from the super-fidelity $\Fn$, is also a
metric. In fact, with very similar arguments, the same authors showed
in~\cite{ma2009pla} that the same functional with $\Fuj$ instead
of $\Fn$ is also a metric.

\section{Comparisons, bounds, and relations between measures }

\label{Sec:Comparisons}

To understand the differences between the fidelity measures, one must
compare them quantitatively. In some cases there are rigorous bounds
that relate the fidelities, while in other cases comparisons are made
graphically. As a rough guide, the average behavior of full-rank,
random density matrices indicates that when making comparisons with
$p>1$, our numerics for small $p$ and $d$ suggest that: 
\begin{equation}
\F_{p}(\rho,\sigma)\lesssim\Fuj(\rho,\sigma)\lesssim\Fq(\rho,\sigma)
\end{equation}
often holds. However, this is not a hard and fast rule. In qubit cases
these are rather strict bounds (at least for $p=2$). In larger Hilbert
spaces, however, these inequalities are only approximately true, with
exceptions that strongly depend on the rank of the density matrices
being compared.

These different cases are explained below.

\subsection{Bounds}

First, we provide a summary of inequalities relating some of these
fidelity measures, or some bounds on them. To begin with, it was established
in~\cite{raggio1982comparison} (see also~\cite{Raggio1984} and~\cite{luo2004informational})
that 
\begin{equation}
\Fuj(\rho,\sigma)\le\sqrt{\Fa(\rho,\sigma)}\le\sqrt{\Fuj}(\rho,\sigma).\label{Eq:FujVsFa}
\end{equation}
Later, in~\cite{audenaert2008asymptotic} (see also~\cite{audenaert2007discriminating}),
these inequalities were rediscovered and extended to 
\begin{equation}
\Fuj(\rho,\sigma)\le{\Fq}(\rho,\sigma)\le\sqrt{\Fa(\rho,\sigma)}\le\sqrt{\Fuj(\rho,\sigma)},\label{Eq:FujVsFq}
\end{equation}
where the second of these inequalities follows directly from the definition
of $\Fa$ and $\Fq$ given, respectively, in Eqs.~\eref{Eq:Fa}
and \eref{Eq:Fq}. At about the same time, $\Fn$ was also shown~\cite{miszczak2009sub}
to be an upper bound on $\Fuj$, i.e., 
\begin{equation}
\Fuj(\rho,\sigma)\le{\Fn}(\rho,\sigma).\label{Eq:FujVsFn}
\end{equation}
Here, we add to this list by showing that $\Fmax(\rho,\sigma)$ actually
provides a lower bound to $\Fn(\rho,\sigma)$.

\begin{theorem}\label{teo1} For arbitrary Hermitian matrices $\rho$
and $\sigma$ such that $\tr(\rho^{2})\le1$ and $\tr(\sigma^{2})\le1$
\begin{equation}
\F_{2}(\rho,\sigma)\leq\Fn(\rho,\sigma)\,.\label{eq:Fminlb}
\end{equation}
\end{theorem}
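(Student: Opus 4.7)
The plan is to reduce the matrix inequality to a scalar one by setting $a=\tr(\rho^2)$, $b=\tr(\sigma^2)$, and $c=\tr(\rho\sigma)$, and then use the Cauchy--Schwarz inequality for the Hilbert--Schmidt inner product, $c^{2}\le ab$, which holds for arbitrary Hermitian $\rho,\sigma$.

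First I would invoke the symmetry in $\rho,\sigma$ of both $\F_{2}$ and $\Fn$ (Jozsa axiom J2) to assume without loss of generality that $a\ge b$, so that the denominator of $\F_{2}$ is simply $a$. If $\rho$ or $\sigma$ is zero the inequality is trivial (or $\F_{2}$ is undefined), so I assume $a>0$. The target inequality \eref{eq:Fminlb} is then
\begin{equation}
\frac{c}{a}\;\le\;c+\sqrt{(1-a)(1-b)},
\end{equation}
which on rearrangement is
\begin{equation}
\frac{c\,(1-a)}{a}\;\le\;\sqrt{(1-a)(1-b)}.\label{eq:keyred}
\end{equation}
Note that $0\le b\le a\le 1$, so both factors under the square root are nonnegative.

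Next I would split into cases according to the sign of $c$. If $c\le 0$, then the left-hand side of \eref{eq:keyred} is nonpositive while the right-hand side is nonnegative, so the inequality is immediate. If $c>0$, I would further separate the boundary case $a=1$ (for which both sides vanish) and the generic case $a<1$. In the latter case I divide both sides of \eref{eq:keyred} by $\sqrt{1-a}>0$ and square, reducing the goal to
\begin{equation}
\frac{c^{2}(1-a)}{a^{2}}\;\le\;1-b.
\end{equation}
Now I apply Cauchy--Schwarz, $c^{2}\le ab$, to bound the left-hand side by $b(1-a)/a$, and the desired inequality becomes $b(1-a)\le a(1-b)$, which simplifies to $b\le a$, precisely our standing assumption.

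The main step, and really the only nontrivial one, is the reduction \eref{eq:keyred} and the recognition that Cauchy--Schwarz in the Hilbert--Schmidt form $|\tr(\rho\sigma)|^{2}\le \tr(\rho^{2})\tr(\sigma^{2})$ closes the gap exactly when combined with $a\ge b$. No obstacle more serious than careful case analysis of the sign of $c$ and the boundary $a=1$ should arise; the bookkeeping on $1-a,1-b\ge 0$ is guaranteed by the hypotheses $\tr(\rho^{2}),\tr(\sigma^{2})\le 1$, which is exactly why the statement of Theorem~\ref{teo1} carries those hypotheses rather than requiring $\rho,\sigma$ to be density matrices.
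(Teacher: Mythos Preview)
Your proof is correct and follows essentially the same route as the paper's own proof. The paper parametrizes $\rho,\sigma$ via an orthonormal Hermitian basis so that $\tr(\rho^{2})=u^{2}$, $\tr(\sigma^{2})=v^{2}$, $\tr(\rho\sigma)=\vec{u}\cdot\vec{v}$, and then argues exactly as you do: assume $u>v$, rearrange, and close with $\vec{u}\cdot\vec{v}\le uv\le u^{2}$ (Cauchy--Schwarz plus the ordering). Your version is in fact slightly more careful, treating the sign of $c$ and the boundary $a=1$ explicitly where the paper's ``if and only if'' reformulation tacitly divides by $\sqrt{1-\max(u^{2},v^{2})}$.
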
 \begin{proof}

First, let us rewrite an arbitrary pair of $d$-dimensional density
matrices $\rho$, $\sigma$ using an orthonormal basis of Hermitian
matrices $\vec{\Upsilon}=(\Upsilon_{0},\Upsilon_{1},\ldots,\Upsilon_{d^{2}-1})$,
\begin{equation}
\rho=\vec{u}\cdot\vec{\Upsilon}\quad\mbox{and}\quad\sigma=\vec{v}\cdot\vec{\Upsilon},\label{eq:param}
\end{equation}
where $\vec{u}$, $\vec{v}\in\mathbb{R}^{d^{2}}$ are the expansion
coefficients of $\rho$ and $\sigma$ in the basis $\vec{\Upsilon}$.
We may now rewrite $\F_{2}$ and $\Fn$ as 
\begin{eqnarray}
\F_{2}(\rho,\sigma)=\frac{\vec{u}\cdot\vec{v}}{\max(u^{2},v^{2})}\,,\nonumber \\
\Fn(\rho,\sigma)=\vec{u}\cdot\vec{v}+\sqrt{1-u^{2}}\sqrt{1-v^{2}}\,,\label{eq:f2-geom}
\end{eqnarray}
where $u=\|\vec{u}\|_{2}$ and $v=\|\vec{v}\|_{2}$.

It is straightforward to check that inequality~(\ref{eq:Fminlb})
holds true if and only if the following inequality is true: 
\begin{equation}
\frac{\vec{u}\cdot\vec{v}}{\max(u^{2},v^{2})}\leq\sqrt{\frac{1-\min(u^{2},v^{2})}{1-\max(u^{2},v^{2})}}\,,\label{eq:bound}
\end{equation}
which obviously holds if $u=v$. For definiteness, let $u>v$ and
we have 
\begin{equation}
\frac{\vec{u}\cdot\vec{v}}{\sqrt{1-v^{2}}}\leq\frac{u^{2}}{\sqrt{1-u^{2}}}\,.
\end{equation}
This is easily seen to hold since, for $u>v$, the numerator of the
l.h.s. is dominated by the numerator of the r.h.s., whereas the denominator
of the l.h.s. dominates the denominator of the r.h.s.. The case where
$v>u$ is completely analogous. \end{proof} \begin{corollary}\label{cor1}
For arbitrary \emph{qubit} density matrices $\rho$ and $\sigma$,
\begin{equation}
\F_{2}(\rho,\sigma)\leq\Fuj(\rho,\sigma)\,.\label{Eq:F2vsF1}
\end{equation}
\end{corollary}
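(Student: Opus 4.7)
The plan is to chain together two facts already established in the excerpt. Since any density matrix $\rho$ satisfies $\tr(\rho^2)\le 1$ (with equality iff pure), the hypotheses of Theorem~\ref{teo1} are satisfied for any pair of qubit density matrices $\rho,\sigma$. Thus I would first invoke Theorem~\ref{teo1} to obtain
\begin{equation*}
\F_{2}(\rho,\sigma)\le \Fn(\rho,\sigma),
\end{equation*}
which holds in arbitrary dimension, and in particular for $d=2$.

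Next I would use the fact, stated immediately after Eq.~\eref{Eq:Fn} and attributed to~\cite{mendonca2008}, that in the two-dimensional case the super-fidelity and the Uhlmann--Jozsa fidelity coincide, i.e.\ $\Fn|_{d=2}=\Fuj|_{d=2}$. Combining this identity with the bound from Theorem~\ref{teo1} immediately yields
\begin{equation*}
\F_{2}(\rho,\sigma)\le \Fn(\rho,\sigma) = \Fuj(\rho,\sigma),
\end{equation*}
which is the claim of the corollary.

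There is effectively no ``hard part'' in this argument, since both ingredients are imported from results already at our disposal. The only thing worth double-checking is that the qubit identity $\Fn = \Fuj$ holds without any extra assumption on the ranks or purities of $\rho$ and $\sigma$; this can be verified directly by substituting the Bloch-vector parametrizations $\rho=\tfrac12(\Id_2 + \vec{r}\cdot\vec{\sigma})$, $\sigma=\tfrac12(\Id_2 + \vec{s}\cdot\vec{\sigma})$ into Eqs.~\eref{eq:fid} and \eref{Eq:Fn} and observing that both reduce to $\tfrac12\bigl(1+\vec{r}\cdot\vec{s} + \sqrt{(1-|\vec r|^2)(1-|\vec s|^2)}\bigr)$. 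Once this is noted, the corollary follows in a single line.
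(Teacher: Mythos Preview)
Your proposal is correct and follows exactly the same route as the paper: invoke Theorem~\ref{teo1} to get $\F_{2}\le\Fn$, then use the qubit identity $\Fn|_{d=2}=\Fuj|_{d=2}$ from~\cite{mendonca2008}. The Bloch-vector verification you add is a nice sanity check but is not needed for the argument.
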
 \begin{proof} This follows trivially from Theorem~\ref{teo1}
and the fact that for qubit density matrices $\Fuj(\rho,\sigma)=\Fn(\rho,\sigma)$,
see~\cite{mendonca2008}. \end{proof}

From the inequality of arithmetic and geometric means, as well as
the definitions given in Eqs.~\eref{Eq:FGm}, \eref{Eq:NewMeasures}
and \eref{eq:Fmax}, it is easy to see that the following bounds
hold: 
\begin{equation}
\Fmax(\rho,\sigma)\le\Fam(\rho,\sigma)\leq\Fgm(\rho,\sigma)\,.\label{Eq:F2-gm-am}
\end{equation}
Besides, some straightforward calculation starting from the definitions
given in Eqs.~\eref{Eq:Fc} and \eref{Eq:Fn} leads to 
\begin{equation}
\Fn(\rho,\sigma)\leq\Fc(\rho,\sigma)\,.\label{Eq:FnvsFc}
\end{equation}

Some remarks are now in order. Given that $\Fn(\rho,\sigma)=\Fuj(\rho,\sigma)$
for qubit states and the fact that $\Fq(\rho,\sigma)$ and $\Fn(\rho,\sigma)$
both provide an upper bound on $\Fuj(\rho,\sigma)$, one may wonder: 
\begin{itemize}
\item[1)] Could $\Fn(\rho,\sigma)$ provide a lower bound on $\Fq(\rho,\sigma)$,
or the other way around? 
\item[2)] Since $\Fa(\rho,\sigma)$ and the sub-fidelity introduced in~\cite{miszczak2009sub}
both provide a lower bound on $\Fuj(\rho,\sigma)$, could it be that
one of these quantities also lower bounds the other? 
\item[3)] Could it be that Eq.~\eref{Eq:F2vsF1} also holds for higher-dimensional
density matrices? 
\end{itemize}
Here, let us note that counterexamples to \emph{all} of the above
conjectures can be easily found by considering pairs of qutrit density
matrices. However, we leave open the possibility of bounding these
quantities using nonlinear (including polynomial) functionals of the
other fidelity measures.

\subsection{Comparisons: interpolated qubit states}
\begin{center}
\begin{figure}[h!]
\includegraphics[width=0.95\columnwidth]{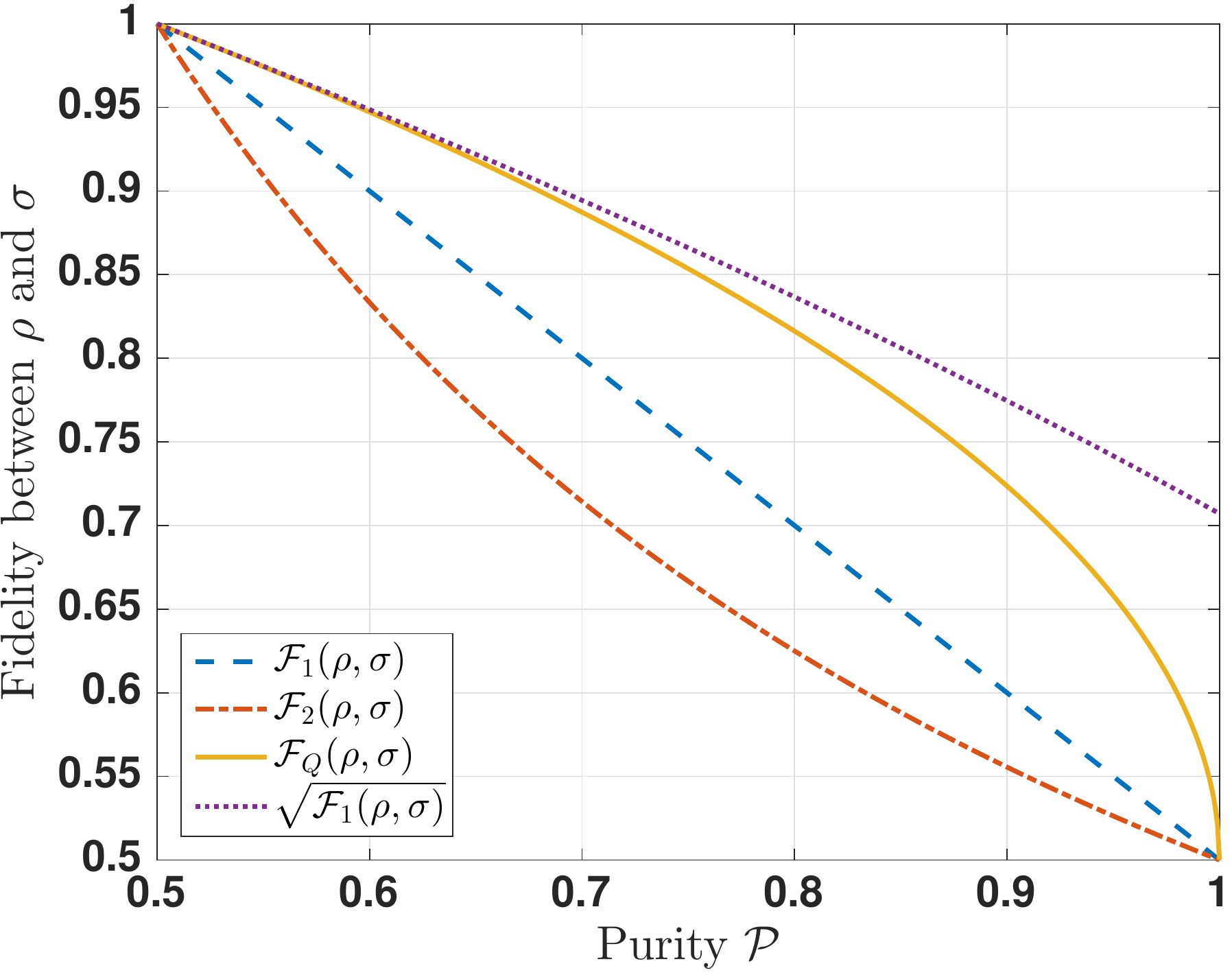} \caption{\label{Fig:FidComp}The fidelity between the single-qubit (mixed)
states given in Eq.~\eref{Eq:ExampleStates} as a function of purity
for the Uhlmann-Jozsa fidelity $\Fuj$, Eq.~\eref{eq:fid}, the
Hilbert-Schmidt fidelity $\Fmax$, Eq.~\eref{eq:Fmax} and the
non-logarithmic variety of the quantum Chernoff bound $\Fq$, Eq.~\eref{Eq:Fq}.
For comparison, cf. Eq.~\eref{Eq:FujVsFq}, we have also included
in the plot the square root of $\Fuj$ (sometimes referred to as \emph{the
fidelity}, see~\cite{nielsen2000quantum}) as a function of the purity
{$\mathcal{P}$}.}
\end{figure}
\par\end{center}

How large are the differences between these measures? In order to
understand this, we first turn to a simple but concrete example. To
this end, consider two families of qubit density matrices that interpolate
between maximally mixed states\textemdash that are necessarily identical\textemdash and
pure states that are distinct, with an interpolation parameter of
$r\in[0,1]$: 
\begin{eqnarray}
\rho(r)=\frac{1}{2}\left(\Id_{2}+r\sigma_{x}\right),\quad\sigma(r)=\frac{1}{2}\left(\Id_{2}+r\sigma_{z}\right).\label{Eq:ExampleStates}
\end{eqnarray}

The purity of a density operator $\rho$ is given by $\mathcal{P}={\tr\ensuremath{\left(\rho^{2}\right)}}$.
For the two quantum states in this comparison, their purities are
set to be equal and are parametrized by $r$, i.e., 
\begin{eqnarray}
{\mathcal{P}} & =\frac{1}{2}\left(1+r^{2}\right)\,,\label{eq:purity}
\end{eqnarray}
where $r=0$ corresponds to a maximally mixed state while $r=1$ corresponds
to a pure state. Note that although these density matrices might look
at first as though they are a sum of two mixed states, they are not.
In the limit of $r=1$, each becomes a distinct pure state, with a
non-vanishing inner product.

In Figure~\ref{Fig:FidComp}, we show the result of the fidelities
$\Fuj(\rho,\sigma)$ {[}cf. Eq.~\eref{eq:fid}{]}, $\Fmax(\rho,\sigma)$
{[}cf. Eq.~\eref{eq:Fmax}{]} and $\Fq(\rho,\sigma)$ {[}cf. Eq.~\eref{Eq:Fq}{]}
as a function of the purity $P$, Eq.~\eref{eq:purity}, of the
states given in Eq.~\eref{Eq:ExampleStates}. It is worth noting
that for these states, the inequalities of Eqs.~\eref{Eq:FujVsFn},
\eref{Eq:F2-gm-am}, and \eref{Eq:FnvsFc} are all saturated,
thereby giving $\Fuj(\rho,\sigma)=\Fn(\rho,\sigma)=\Fc(\rho,\sigma)$
and $\Fmax(\rho,\sigma)=\Fam(\rho,\sigma)=\Fgm(\rho,\sigma)$. Moreover,
for these states, it can also be verified that $\sqrt{\Fa}(\rho,\sigma)=\Fq(\rho,\sigma)$,
thereby saturating the second inequality in Eq.~\eref{Eq:FujVsFq}.

\subsection{Comparisons: random density matrices}

As a second type of comparison, we generate two random density matrices
and compare them, in a Hilbert space of arbitrary dimension. In general,
these are generated by taking a random Gaussian matrix $\bm{g}$,
whose elements $g_{ij}$ are complex random numbers of unit variance.
Starting from a complex Gaussian matrix, called a Ginibre ensemble~\cite{ginibre1965statistical},
a random, positive-semidefinite density matrix with unit trace is
generated \cite{zyczkowski2011generating} by letting: 
\begin{equation}
\rho=\frac{gg^{\dagger}}{\tr\left(gg^{\dagger}\right)}.\label{eq:randomGaussian}
\end{equation}
Some investigations of random matrix fidelity have been carried out
previously using the $\Fuj$ fidelity~\cite{zyczkowski2005average},
while here we focus on the comparative behavior of the different fidelity
measures. 
\begin{center}
\begin{figure}[h!]
\includegraphics[width=1\columnwidth]{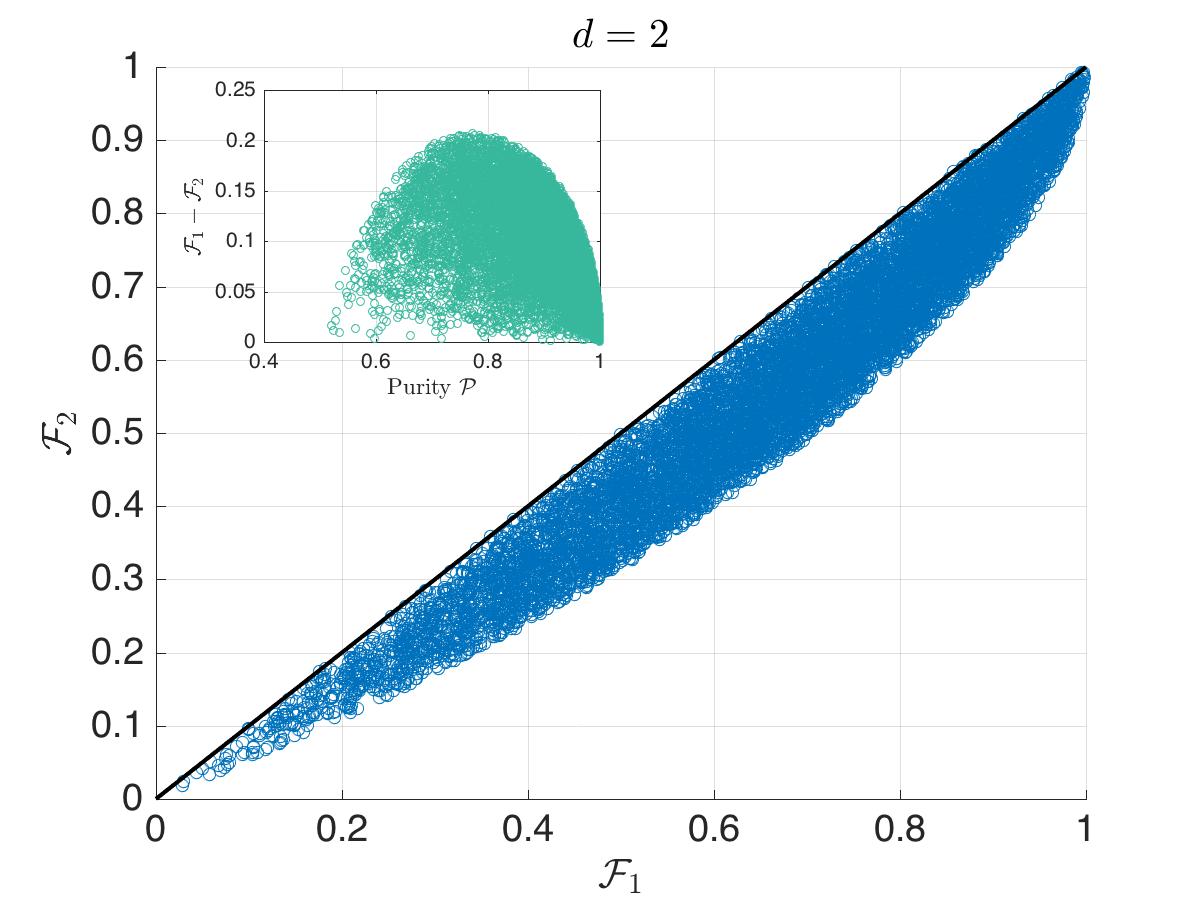} \caption{\label{Fig:F2-F1:d2} Scatter plot showing the Uhlmann-Jozsa fidelity
$\Fuj$, Eq.~\eref{eq:fid}, and the Hilbert-Schmidt fidelity $\Fmax$,
Eq.~\eref{eq:Fmax}, for $10^{4}$ random pairs of single-qubit
(mixed) density matrices. The black line satisfying $\Fuj=\Fmax$
is a guide for the eye. The inset shows a scatter plot of the difference
$\Fuj-\Fmax$ vs the {\em maximum} purity {$\mathcal{P}$} of
these pairs of density matrices. }
\end{figure}
\par\end{center}

In Figure~\ref{Fig:F2-F1:d2}, we compare plots with two qubit random
matrices, and give scatter plots of both $\Fuj-\Fmax$ against the
maximum purity of the pair, as well as $\Fmax$ against $\Fuj$. We
note that for this case, as expected from Eq.~\eref{Eq:F2vsF1},
$\Fuj\ge\Fmax$ for $d=2$. In addition, there is a lower bound on
the state purity, since $\mathcal{P}\ge0.5$ for qubit density matrices.
\begin{center}
\begin{figure}[h!]
\includegraphics[width=1\columnwidth]{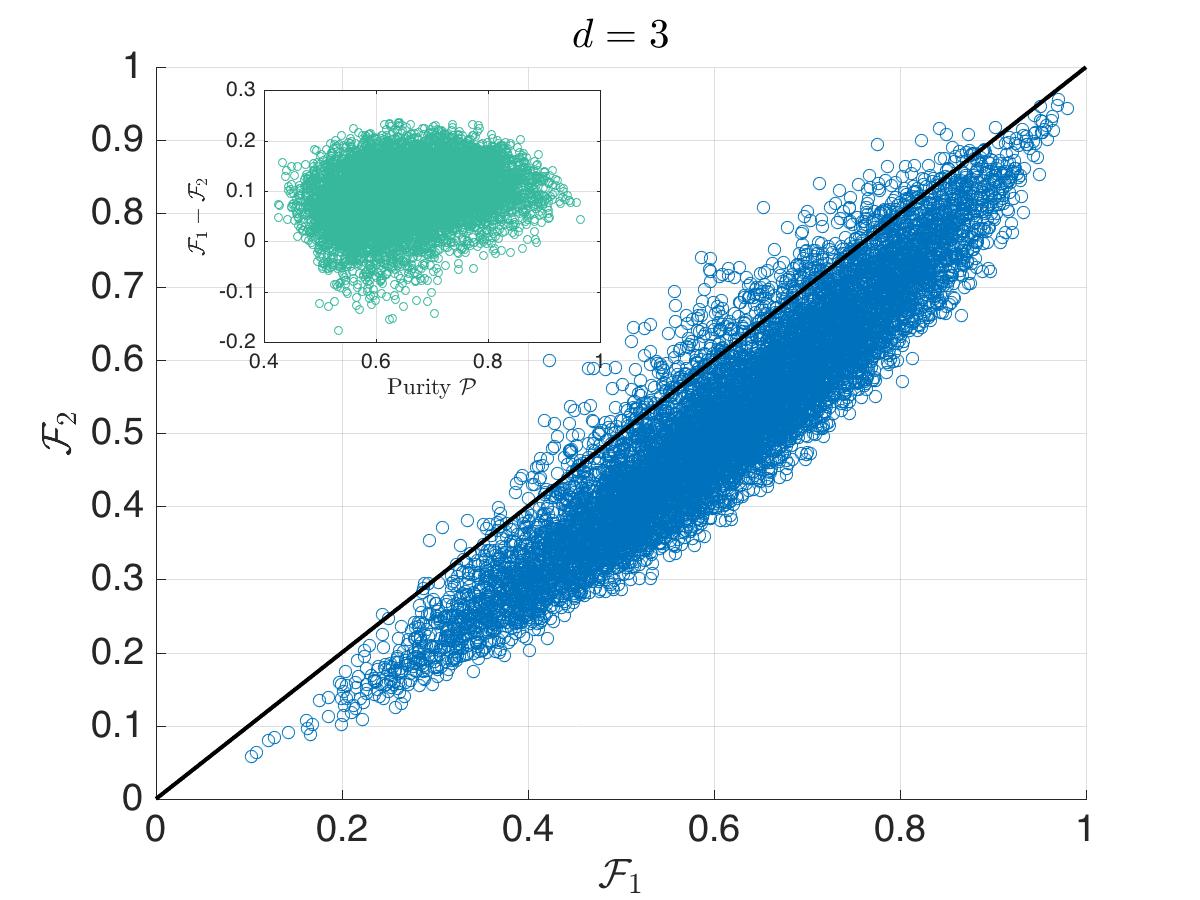} \caption{\label{Fig:F2-F1:d3} Scatter plot showing the Uhlmann-Jozsa fidelity
$\Fuj$, Eq.~\eref{eq:fid}, and the {Hilbert-Schmidt} fidelity
$\Fmax$, Eq.~\eref{eq:Fmax}, for $10^{4}$ random pairs of single-qutrit
(mixed) density matrices. The black line satisfying $\Fuj=\Fmax$
is a guide for the eye. The inset shows a scatter plot of the difference
$\Fuj-\Fmax$ vs the {\em maximum} purity {$\mathcal{P}$ of these}
pairs of density matrices.}
\end{figure}
\par\end{center}

\begin{center}
\begin{figure}[h!]
\includegraphics[width=1\columnwidth]{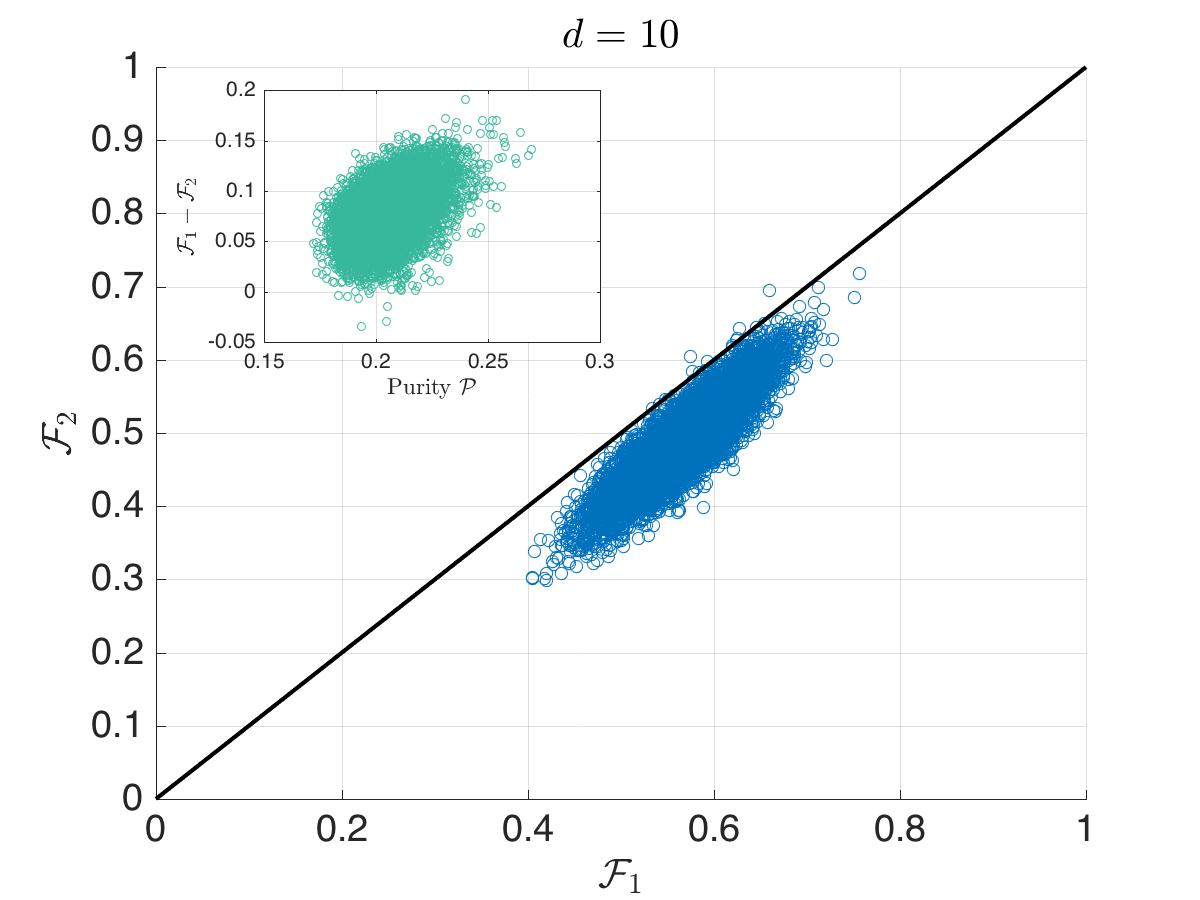} \caption{\label{Fig:F2-F1:d10} Scatter plot showing the Uhlmann-Jozsa fidelity
$\Fuj$, Eq.~\eref{eq:fid}, and the Hilbert-Schmidt fidelity $\Fmax$,
Eq.~\eref{eq:Fmax}, for $10^{4}$ random pairs of $10\times10$
density matrices. The black line satisfying $\Fuj=\Fmax$ is a guide
for the eye. The inset shows a scatter plot of the difference $\Fuj-\Fmax$
vs the {\em maximum} purity $\mathcal{P}$ of these pairs of density
matrices.}
\end{figure}
\par\end{center}

In Figure~\ref{Fig:F2-F1:d3}, we compare plots with two qutrit random
matrices, again with scatter plots of both $\Fuj-\Fmax$ against the
maximum purity of the pair, together with the two fidelities plotted
against each other. For this case, with $d=3$, both $\Fuj>\Fmax$,
and $\Fuj<\Fmax$ are possible. As a concrete example of the latter
case, one only has to consider a comparison of two diagonal qutrit
density matrices, where $p$ is a real coefficient such that $0<p<1$:
\begin{equation}
\rho=\left(1-p\right)\Pi_{1}+p\Pi_{2},\quad\sigma=\left(1-p\right)\Pi_{0}+p\Pi_{2}.
\end{equation}
This has the property that $\Fuj<\Fmax$ for any value of $p$ such
that $0\neq p\neq1$, since $\Fmax$ is divided by the maximum purity,
and this is less than unity. However, although extremely simple, this
is also atypical. In almost all cases, the two density matrices being
compared are not simultaneously diagonal. For the average case, we
can see instead that $\langle\Fuj\rangle>\langle\Fmax\rangle$. This
corresponds to the intuitive expectation that, since $\Fuj$ is a
maximum over purifications, it will generally have a bias towards
high values.

Finally in the case of two random matrices of larger dimension, we
compare two $10\times10$ qudit matrices, with the other details as
previously. For this case, just as with $d=3$, both $\Fuj>\Fmax$,
and $\Fuj<\Fmax$ are possible, although the fraction of cases with
$\Fuj>\Fmax$ has increased substantially. Again, for the average
case, $\langle\Fuj\rangle>\langle\Fmax\rangle$. As expected, there
is a lower bound on the state purity, since since $\mathcal{P}\ge1/d$
in a $d-$dimensional density matrix. Here one has $d=10$, so $\mathcal{P}\ge0.1$
in a $10$ dimensional density matrix. The average fidelity is greatly
reduced in this case, since the larger Hilbert space dimension reduces
the probability that two random matrices will be similar. 
\begin{center}
\begin{figure}[h!]
\includegraphics[width=1\columnwidth]{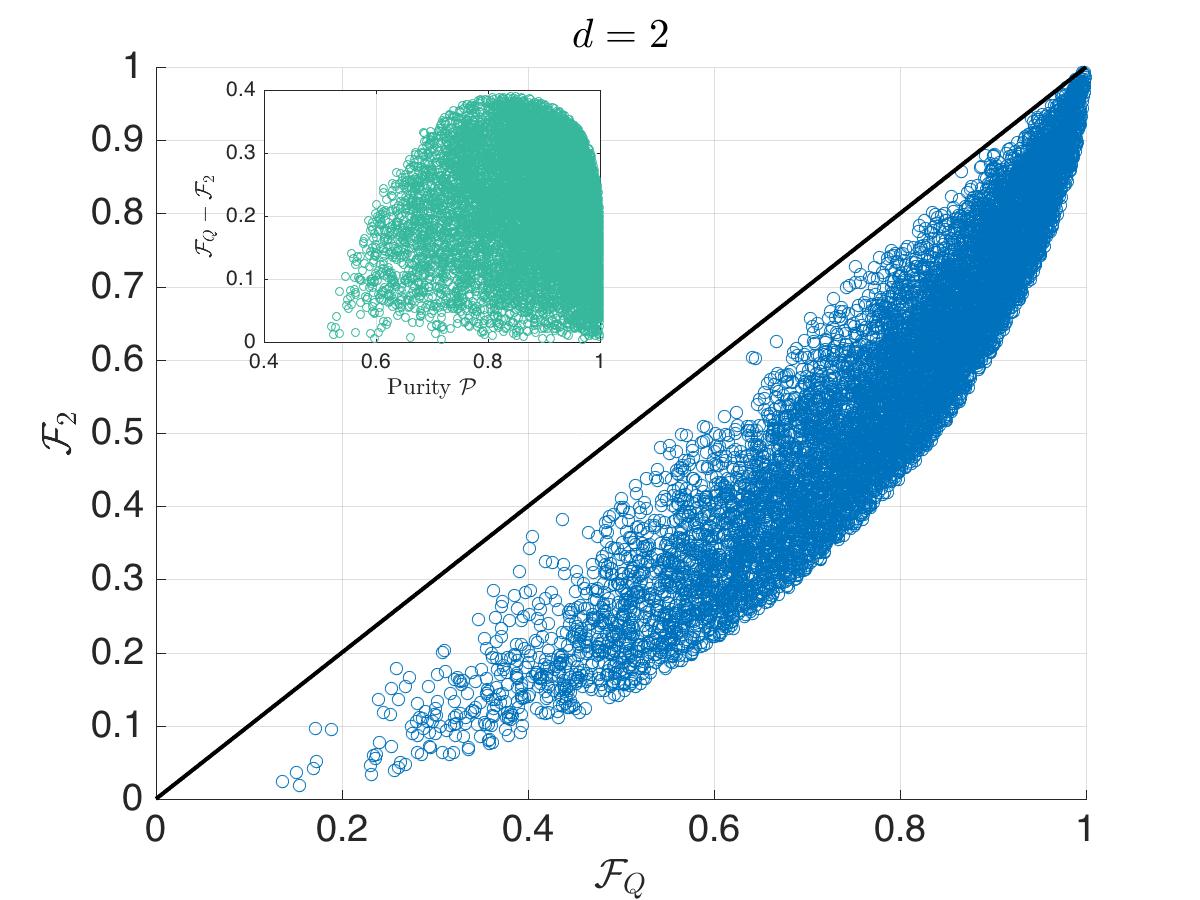} \caption{\label{Fig:F2-Fq:d2} Scatter plot showing the non-logarithmic variety
of the quantum Chernoff bound $\Fq$, Eq.~\eref{Eq:Fq}, and the
Hilbert-Schmidt fidelity $\Fmax$, Eq.~\eref{eq:Fmax}, for $10^{4}$
random pairs of single-qubit (mixed) density matrices. The black line
satisfying $\Fq=\Fmax$ is a guide for the eye. The inset shows a
scatter plot of the difference $\Fq-\Fmax$ vs the {\em maximum}
purity {$\mathcal{P}$} of these pairs of density matrices.}
\end{figure}
\par\end{center}

\begin{center}
\begin{figure}[h!]
\includegraphics[width=1\columnwidth]{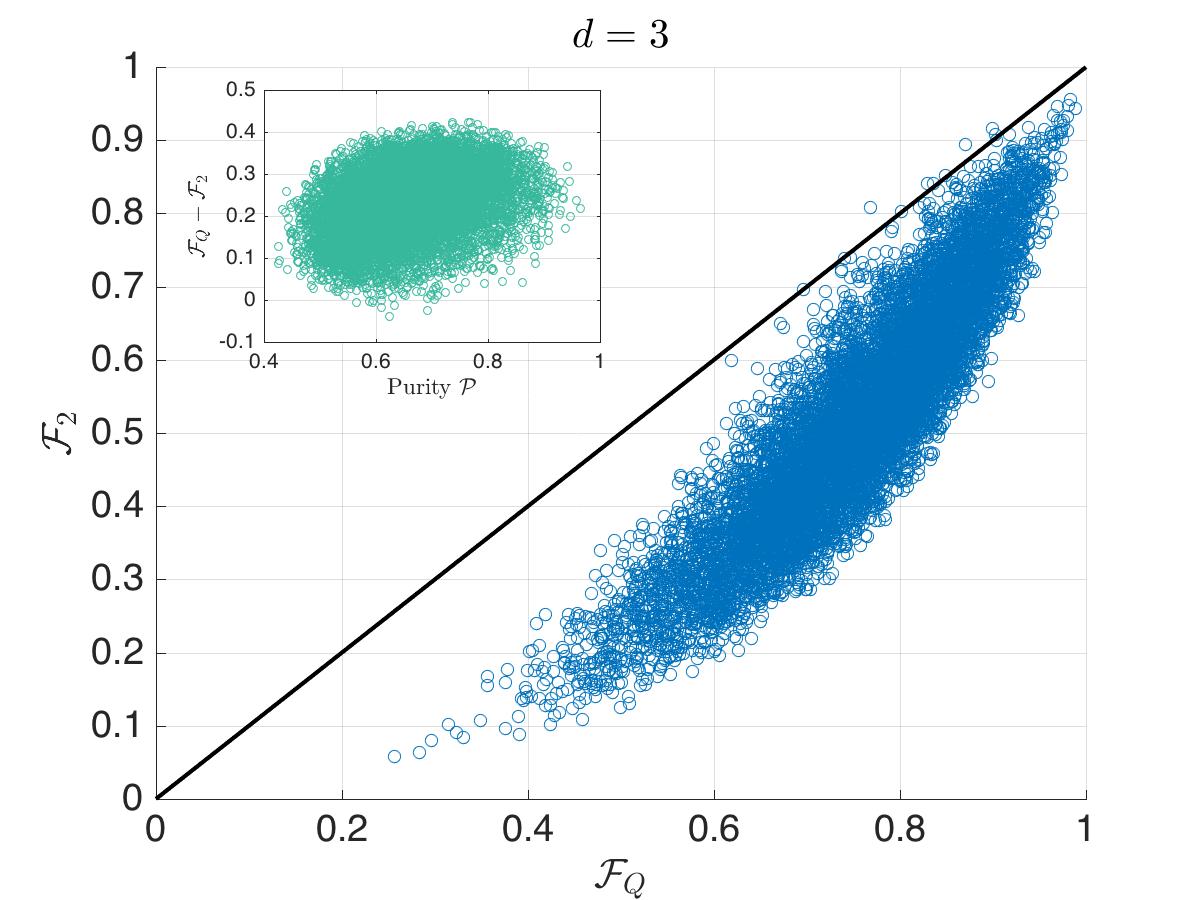} \caption{\label{Fig:F2-Fq:d3} Scatter plot showing the non-logarithmic variety
of the quantum Chernoff bound $\Fq$, Eq.~\eref{Eq:Fq}, and the
{Hilbert-Schmidt} fidelity $\Fmax$, Eq.~\eref{eq:Fmax}, {for
$10^{4}$ random pairs of single-qutrit (mixed) density matrices.}
The black line satisfying $\Fq=\Fmax$ is a guide for the eye. The
inset shows a scatter plot of the difference $\Fq-\Fmax$ vs the {\em
maximum} purity {$\mathcal{P}$} of these pairs of density matrices.}
\end{figure}
\par\end{center}

\begin{center}
\begin{figure}[h!]
\includegraphics[width=1\columnwidth]{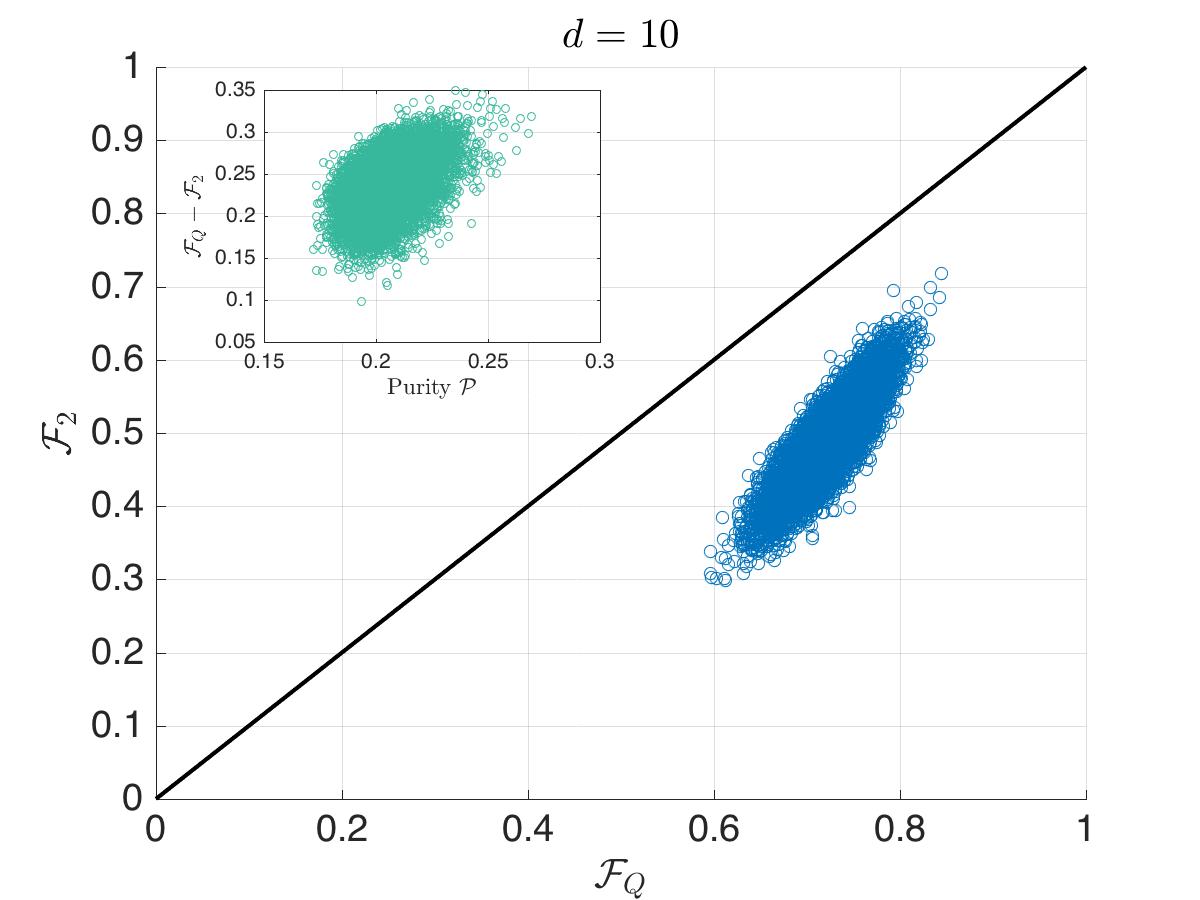} \caption{\label{Fig:F2-Fq:d10} Scatter plot showing the non-logarithmic variety
of the quantum Chernoff bound $\Fq$, Eq.~\eref{Eq:Fq}, and the
Hilbert-Schmidt fidelity $\Fmax$, Eq.~\eref{eq:Fmax}, for $10^{4}$
random pairs of single-qutrit (mixed) density matrices. The black
line satisfying $\Fq=\Fmax$ is a guide for the eye. The inset shows
a scatter plot of the difference $\Fq-\Fmax$ vs the {\em maximum}
purity {$\mathcal{P}$} of these pairs of density matrices.}
\end{figure}
\par\end{center}

The corresponding plots comparing $\Fmax$ and $\Fq$ can be found,
respectively, in Figure~\ref{Fig:F2-Fq:d2}, Figure~\ref{Fig:F2-Fq:d3}
and Figure~\ref{Fig:F2-Fq:d10}.

\section{Applications}

\subsection{Fidelity in quantum physics}

Most current applications of quantum fidelity take place in the context
of quantum technology. Since fidelity is a relative measure, the most
appropriate fidelity depends on the proposed application. These technologies
usually have a well-defined purpose. For example, one may use a quantum
technology like squeezing \cite{caves1981quantum,walls1983squeezed}
or Einstein-Poldosky-Rosen (EPR) correlation~\cite{EPR}/steering~\cite{wiseman2007,reid2009colloquium,ma2017proposal}
to measure gravitational waves \cite{abbott2016observation,einstein1918gravitationswellen}
in a more sensitive way. As well as precision metrology, other applications
include enhanced communications \cite{caves1994quantum}, cryptography
\cite{gisin2002quantum}, quantum computing \cite{steane1998quantum},
many-body quantum simulators \cite{fialko2015fate,bernien2017probing},
quantum data processing and storage \cite{schumacher1996quantum},
and the growing area of quantum thermodynamics \cite{millen2016perspective}.

Each of these fields has their own criteria for success. However,
the components of the technology ultimately depend on the realization
of certain quantum states and their processing. Hence, knowledge of
fidelity can help to measure how close one is to achieving the required
quantum states that are utilized at a given stage in a quantum process~\cite{nielsen2000quantum},
see, e.g.,~\cite{Yang2014,Kaniewski2016,Sekatski1802} and references
therein. This concept is also applicable more widely, in fundamental
physics problems like quantum phase transitions.

A fidelity measure, as with some other (operational) figures of merit
(see, e.g.,~\cite{Chen:PRL:2016,Cavalcanti:PRA:2016,Rosset:PRX:2018}),
is sometimes~\cite{Sekatski1802} used to prove that a device is
quantum in nature, rather than operating at a simply classical level.
More generally, one can analyze quantum operations in terms of process
fidelity and average fidelity, as explained below.

\subsection{Average fidelity in quantum processes with pure input states}

\label{Sec:PureVsMixed}

Jozsa's axioms do not lead to a unique function of two density matrices
that generalizes Eq.~\eref{Eq:Fidelity:1mixed} to a pair of mixed
states. In this section, we would like to examine a desirable feature
of a fidelity measure that arises naturally in a quantum communication
scenario \cite{caves1994quantum}, although this concept also arises
in other types of quantum processes. This also allows us to relate
a fidelity measure as a measure of the degree of similarity between
two quantum states to the notion of fidelity originally introduced
by Schumacher~\cite{schumacher1995quantum}. We focus especially
on the three candidate fidelities that satisfy the axioms, and have
a physical interpretation: $\Fuj$, $\Fq$ and $\F_{2}$.

We note that the uniform average fidelity over all possible pure input
states is often termed ${\cal F}_{{\rm ave}}$. Like the more complex
process fidelity, this can be used as a means of analyzing performance
of quantum logic gates~\cite{knill2008randomized,harty2014high}.
Here we allow ${\cal F}_{{\rm ave}}$ to have non-uniform initial
probabilities, and we note that, for pure states, the fidelities being
averaged are not dependent on the \emph{choice} of fidelity measure,
since these are unique {[}by Jozsa's axiom (J3){]} if one of the density
matrices being compared is a pure state.

Suppose we attempt to copy, store, perform idealized logic operations
or transmit a number of pure states $\rho_{j}$, each occurring with
probability $p_{j}$. Imagine further that the pure states $\rho_{j}$,
are the desired output of the logic operations (communication protocol).
Evidently, this combination of states can also be described by the
mixed state $\rho=\sum_{j}p_{j}\rho_{j}$. Now, let us further imagine
that during the physical processes of interest, there is a probability
of error $\epsilon$, in which case a random error state $\rho_{0}$
is generated. For simplicity, we shall assume (unless otherwise stated)
that $\rho_{0}$ is orthogonal to $\rho_{j}$, for all $j\neq0$.

A physical scenario that matches the above description consists of
transmitting photon number states through some quantum channel, but
with probability $\epsilon$, the Fock state $\rho_{j}=\ket{j}\!\!\bra{j}$
gets transformed to the vacuum state $\ket{0}\!\!\bra{0}$, or some
other undesirable Fock state that corresponds to an error. However,
we do not assume here that the inputs are necessarily orthogonal,
as there are applications in cryptography~\cite{gisin2002quantum}
where non-orthogonality serves as an important part of the communication
protocol.

Let us denote the output state for the transmission of $\rho_{j}$
as $\sigma_{j}=\epsilon\rho_{0}+(1-\epsilon)\rho_{j}$. The overall
combination of the output states can also be described by the following
mixed state: 
\begin{equation}
\sigma=\sum_{j}p_{j}\sigma_{j}=\epsilon\rho_{0}+(1-\epsilon)\rho.
\end{equation}

The mixed state fidelity can now be examined from two points of view:
the {\em average state-by-state fidelity}, and also the {\em
fidelity as a mixed state}. For a sensible generalization of Schumacher's
fidelity to a pair of mixed states, it seems reasonable to require
that the average state-by-state fidelity corresponds exactly to the
mixed state fidelity, at least, under some circumstances.

To this end, let us first define the average state-by-state fidelity
{as} a function of the probabilities and states: 
\begin{equation}
\Fave\left(\bm{p},\bm{\rho},\bm{\sigma}\right)=\sum_{j}p_{j}\F(\rho_{j},\sigma_{j}).
\end{equation}

Since the inputs are individually pure states, the fidelity is just
the state overlap, and one then obtains:

\begin{equation}
\Fave\left(\bm{p},\bm{\rho},\bm{\sigma}\right)=\sum_{j}p_{j}\tr(\rho_{j}\sigma_{j})=1-\epsilon.
\end{equation}
In the case of $\F_{2}$, the average fidelity matches the mixed fidelity,
so that 
\begin{equation}
\Fave\left(\bm{p},\bm{\rho},\bm{\sigma}\right)=\F(\rho,\sigma).\label{Eq:AveVsMixed}
\end{equation}
whenever the average signal state $\rho$ has equal to or larger purity
than that of the average output state $\sigma$, i.e., $\tr(\sigma^{2})\le\tr(\rho^{2})$.
This is a common situation, although not universal.

If all (input) states including the error state are orthogonal to
each other, and hence diagonal in the same basis, it is easy to verify
that the Uhlmann-Jozsa fidelity $\Fuj$, the non-logarithmic variety
of the quantum Chernoff bound, $\Fq$, as well as the $A$-fidelity
$\Fa$ comply with the desired requirement of Eq.~\eref{Eq:AveVsMixed},
without any additional conditions. For a proof, see~\ref{Sec:Proofs}.

It is also not difficult to check that if $\rho$ is a pure state,
$\Fn$ also satisfies the desired requirement; although it is a rare
situation that a pure state would be used for communications in this
way, owing to {the} extremely small information content. As for
$\Fc$, $\Fam$ and $\Fgm$, there does not appear to be a generic
situation where Eq.~\eref{Eq:AveVsMixed} will hold true for all
error probabilities $0<\epsilon<1$.

Let us analyze briefly the case where average output state has a greater
purity than the signal state. At first this seems unlikely, but in
fact it can occur. If the error state $\rho_{0}$ is a pure state,
and this occurs with probability one, then the output is always a
pure state. This could be a vacuum state, owing to an extremely serious
error condition: the channel being completely absorbing. Under these
conditions, one has that $\Fave=0$, which is expected: every input
of the alphabet results in an incorrect output. This is true of any
fidelity measure that satisfies our generalized Josza axiom list,
due to the orthogonality condition, and hence it is true of $\Fuj$,
$\Fq$ and $\F_{2}$.

Finally, we note that the most general error state $\rho_{0}$ depends
on the input state $\rho_{j}$, and is neither pure nor orthogonal
to all other inputs. Under these circumstances, $\Fave\left(\bm{p},\bm{\rho},\bm{\sigma}\right)$
is still well-defined but it may not correspond to any of the mixed
fidelities defined here.

\subsection{Average fidelity in quantum processes with mixed input states}

\label{Sec:PureVsMixed-1}

A quantum process with pure input states is an ideal scenario that
is unlikely to occur in reality. As well as the outputs, even the
inputs are most likely to be mixed states, as they are typically multi-mode
and will be subject to timing jitter, losses and spontaneous emission
noise, to name just a few possible error conditions.

In this case, we may have a scenario like the one given above, except
that all the states are mixed. We again assume for simplicity that
errors in the output channel are {\em orthogonal} to all inputs,
as before. The average state-by-state fidelity is now a function of
the probabilities, states and the fidelity measure: 
\begin{equation}
\Fave\left(\bm{p},\bm{\rho},\bm{\sigma}\right)=\sum_{j}p_{j}\F(\rho_{j},\sigma_{j}).
\end{equation}
In general, for $\Fuj$, $\Fq$, and $\Fa$, the situation becomes
complex, as there is no guarantee that all the states involved can
be diagonalized in the same basis. If they do, we will again end up
with Eq.~\eref{Eq:AveVsMixed}. This makes an evaluation of the
corresponding fidelities generally difficult, since, for example,
the products of matrix square roots that occur in $\Fuj$ are extremely
nontrivial in general.

As for $\F_{2}$, a reduction holds if the individual input states
$\rho_{j}$ and the average state $\rho$ has a degraded purity owing
to errors {{[}i.e., $\tr(\rho_{j}^{2})\ge\tr(\sigma_{j}^{2})$ and
$\tr(\rho^{2})\ge\tr(\sigma^{2})${]}}, although, as noted above,
this may not be true for an extremely lossy channel. With this simplification
the fidelity for each possible input is the same:

\begin{equation}
\F_{2}(\rho_{j},\sigma_{j})=\frac{\tr(\rho_{j}\sigma_{j})}{\tr(\rho_{j}^{2})}=1-\epsilon.\label{Eq:F2:State-by-state}
\end{equation}
In such cases, we have the same result for the average fidelity as
for pure state inputs, namely: 
\begin{equation}
\F_{{\rm ave}}\left(\bm{p},\bm{\rho},\bm{\sigma}\right)=\sum_{j}p_{j}\F_{2}(\rho_{j}\sigma_{j})=1-\epsilon=\F_{2}(\rho,\sigma).\label{Eq:F2:compare}
\end{equation}

As with the pure state case, we emphasize that the average fidelity
is therefore only the same as the mixed state fidelity in rather special
circumstances, even when using the simpler $\F_{2}$ measure of fidelity.
In general, the assumption that the error state is orthogonal to all
inputs is not always valid, and as a result they measure different
properties of the quantum process. Note also that the requirement
of Eq.~\eref{Eq:AveVsMixed} amounts to demanding that the joint
concavity inequality of Eq.~\eref{Eq:ConcaveJoint} is saturated
in this communication scenario.

\subsection{Teleportation and cloning fidelity}

As a measure of the degree of similarity between two quantum states,
fidelity occurs naturally in assessing the quality of a quantum communication~\cite{schumacher1995quantum}
channel. A well-known example of such a channel is the so-called teleportation
channel~\cite{bennett1993,horodecki1999} where the \emph{unknown}
quantum state of a physical system is transferred from a sender to
a receiver via the help of a shared quantum resource between the two
ends. If one could transport the system itself directly to the receiver
intact, this task is trivial. However, since channels are usually
far from ideal\textemdash they may be lossy, for example, especially
over long distances\textemdash the quantum states are readily degraded
on transmission. Moreover, an \emph{unknown} quantum information cannot
be cloned~\cite{nocloningtheorem}, or else measured and regenerated,
without degradation. Thus, whenever a high-fidelity output state is
required at the end-location (for example, a qubit that is to be part
of a quantum secure network), a direct transmission of the quantum
signal would generally not be the preferred option. Instead, teleportation
is likely to serve as an essential component of future quantum communication
networks \cite{quantum-internet}.

A protocol for the \emph{ideal} teleportation of qubit states was
first proposed by Bennett {\em et al.} \cite{bennett1993} and
extensively discussed in a very general setting by Horodecki~\emph{et
al.}~\cite{horodecki1999}. Suppose Alice has in her possession an
arbitrary qudit state $|\psi_{i}\rangle$ that she wants to teleport
to Bob. The protocol involves Alice and Bob sharing an EPR resource,
or more precisely a maximally entangled two-qudit state.

Alice makes a joint measurement (in the basis of maximally entangled
states) on her half of the entangled qudit and the state to be teleported.
She transmits via a classical communication channel those measurement
results to Bob, who reconstructs the original state as $|\psi_{f}\rangle$
by applying a local unitary correction to his half of the maximally
entangled two-qudit state.

Fidelity between the initial state and the final state is the standard
figure of merit used to measure the effectiveness of the teleportation
transfer~\cite{popescu1994}. Specifically, the quality of the teleportation
channel $\Lambda_{\rho}$ is often quantified via the \emph{average}
fidelity~\cite{horodecki1999}: 
\begin{equation}
\Ftele(\Lambda_{\rho})=\int d\psi_{i}\bra{\psi_{i}}\Lambda_{\rho}(\proj{\psi_{i}}){\ket{\psi_{i}}}
\end{equation}
where $\rho$ is the density matrix of the shared resource and the
integral is performed over the Haar measure (i.e., a uniform distribution
over all possible pure input states of a \emph{fixed} Hilbert space
dimension). Not surprisingly, $\Ftele(\Lambda_{\rho})$ depends on
the degree of entanglement of $\rho$. For instance, if $\rho$ is
maximally-entangled, one has an ideal teleportation channel and thus
$\Ftele=1$. In general, for a given resource state $\rho$ of local
dimension $d$, the maximal average fidelity achievable was found
to be~\cite{horodecki1999} 
\begin{equation}
\Ftele^{{\rm max}}(\Lambda_{\rho})=\frac{{\F_{{\rm max}}}d+1}{d+1}\label{Eq:FteleMax}
\end{equation}
where $\F_{{\rm max}}$ is the singlet fraction (or more appropriately,
the fully-entangled fraction) of $\rho$, which is the largest possible
overlap (i.e., Schumacher's fidelity) between $\rho$ and a maximally
entangled two-qudit state: 
\begin{equation}
\F_{{\rm max}}=\max_{U_{A},U_{B}}\F(\rho,U_{A}\otimes U_{B}\proj{\Phi_{d}^{+}}U_{A}^{\dag}\otimes U_{B}^{\dag}).
\end{equation}
Here $U_{A}$ and $U_{B}$ are, respectively, local unitary operator
acting on the Hilbert space of Alice's and Bob's qudit.

\subsubsection{Teleportation fidelity and bounds}

The problem of establishing the quality of a practical teleportation
experimental process then becomes the problem of a fidelity measurement.
To this end, it is insightful to compare the fidelity bound of Eq.~\eref{Eq:FteleMax}
against that arising from employing an optimal cloning machine~\cite{hillery-buzek-clone,classical-fidelity,Bruss1998}.
For simplicity, we shall restrict our discussion to the $d=2$, i.e.,
the qubit case. For general fidelity bounds of cloning, we refer the
reader to~\cite{ScaraniRMP.77.1225}. In the case of an arbitrary
qubit, it is known that a classical strategy of teleportation, whereby
the state is measured and then regenerated, will incur extra noise
(see, e.g.,~\cite{popescu1994,horodecki1999,ham-1,fidelity-bounds}).
This limits the fidelity for any classical measure-and-regenerate
protocol, to ${\Ftele^{\tiny{\rm class}}}\leq\frac{2}{3}$~\cite{popescu1994,horodecki1999,classical-fidelity}.
It is also known that the fidelity $\F=\frac{5}{6}$ is the maximum
for any symmetric $1\to2$ cloning process, i.e., $\F>\frac{5}{6}$
ensures that there can be no ``copies" $|\psi_{c}\rangle^{\otimes2}$
taken of the state $|\psi_{i}\rangle$ that has a fidelity $\F=|\langle\psi_{i}|\psi_{c}\rangle|^{2}$
greater than $\frac{5}{6}$ \cite{hillery-buzek-clone,hillerybuzek-2,dag}.
The \emph{ideal} teleportation protocol clearly exceeds this bound
because the state $|\psi_{i}\rangle$ at Alice's location is destroyed
by her measurements. The final teleported state at Bob's location
is therefore not a copy of the state $|\psi_{i}\rangle$, but a unique
secure transfer of it. In view of these limits, a fidelity $\Ftele>\frac{2}{3}$
is the benchmark figure of merit used to justify quantum teleportation
in qubit teleportation experiments \cite{teleexp,teleexpdemartini}.
The no-cloning teleportation is achieved when $\Ftele>\frac{5}{6}$.
For qubit-teleportation experiments carried out photonically, the
fidelity estimates were evaluated for a post-selected subensemble,
selected conditionally on all photons being detected at Bob's location
\cite{teleexp} (see also~\cite{Wang:2015aa}). The post-selection
was required due to poor detection efficiencies.

\subsubsection{Continuous variable teleportation}

A protocol for continuous variable teleportation was developed by
Vaidmann \cite{vaidcvtele} and Braunstein and Kimble \cite{bkcvtele}.
Here, the EPR resource was a continuous-variable EPR entangled state
\cite{reid1989demonstration,reid2009colloquium}. Defining ``quantum
teleportation'' as taking place where there can be no classical measure-and-regenerate
strategy that can replicate the teleportation fidelity $\Ftele$,
it was shown that $\Ftele>1/2$ is the benchmark fidelity bound to
demonstrate the quantum teleportation of an unknown coherent state
\cite{ham-1}. This fidelity has been achieved experimentally for
optical states \cite{cvteleexperiments,cvteleexperiments-1,cvteleexperiments-2,cvtelenocloning}.
The fidelity at which there can be no replica of the state at a location
different to Bob's corresponds to the no-cloning fidelity \cite{clonein}.
For coherent states, the fidelity $\F=2/3$ is the maximum for any
cloning process \cite{cerf}, and the fidelity $\Ftele>2/3$ is hence
the no-cloning benchmark for the teleportation of a coherent state.
The no-cloning teleportation limit has been achieved for coherent
states \cite{cvtelenocloning}. Although reporting lower teleportation
fidelities, the continuous variable experiments did not rely on {\em
any} post-selection of data.

\subsubsection{Connection with other desired properties}

Fidelity is the figure of merit used to quantify the success and usefulness
of the teleportation protocol. For example, no-cloning teleportation
allows bounds to be placed on the quality of any unwanted copies of
the teleported state. As a consequence, fidelity is also used to determine
the constraints on the nature of the entangled resource, so that certain
conditions are met. For example, by using the fidelity bounds, it
was proved~\cite{mixed-state-tele-1} that any entangled two-qubit
(mixed) state is useful for quantum teleportation if arbitrary local
operations assisted by classical communications (including arbitrary
local filtering operations) are allowed prior to the teleportation
experiment. Similarly, a connection has been shown, between no-cloning
teleportation and the requirement of a steerable resource \cite{epr-steer-tele},
between the fully-entangled fraction and a steerable~\cite{wiseman2007}
resource, as well as a Bell-nonlocal~\cite{brunner-rmp} resource,
see, e.g.,~\cite{hsieh2016} and references therein.

In the context of the papers examined for this review, the teleportation
fidelity is most frequently defined with respect to a pure state that
Alice wants to transport. Bob's teleported state will generally be
mixed. However, one can see that more generally the fidelity for the
teleportation of a mixed state needs also to be considered, given
that the state prepared at Alice's location would not usually be pure.

\subsection{Fidelity in phase space}

\label{Sec:PhaseSpace}

In the following, we show how the fidelity can be computed with phase
space methods \cite{Hillery_Review_1984_DistributionFunctions}. The
symmetrically ordered Wigner function representation \cite{Wigner_1932}
was first applied to dynamical problems by Moyal \cite{Moyal_1949}.
Although it is is generally non-positive, it is common to use tomography
to measure the Wigner function \cite{lvovsky2009continuous}, to represent
a density matrix. Other schemes using a classical-like phase-space
for bosons include the anti-normally ordered, positive Q-function
distribution \cite{Husimi1940}, and the normally-ordered P-function
distribution \cite{Glauber_1963_P-Rep}, which is non-positive and
singular in some cases.

These have been generalized to positive distributions on non-classical
phase spaces \cite{Drummond1980posp,Gilchrist1997posp,Deuar:2002},
which are normally-ordered, non-singular, positive representations.
These have been employed in many different fields such as quantum
optics \cite{drummond1981_II_nonequilibriumparamp,DrummondGardinerWalls1981,Drummond_EPL_1993},
Bose-Einstein condensates \cite{Kheruntsyan2005BEC,Opanchuk2012BEC,Opanchuk2013WignerBEC}
and quantum opto-mechanical systems \cite{Kiesewetter2014opto,Kiesewetter2017opto,Teh2017opto},
as well as spin \cite{Arecchi_SUN,Agarwal:1981,Barry_PD_qubit_SU}
and Fermi \cite{corney2006gaussian} systems.

In these methods, a density operator $\hat{\rho}$ is generically
represented as 
\begin{equation}
\hat{\rho}=\intop P\left(\vec{\alpha}\right)\hat{\Lambda}\left(\vec{\alpha}\right)\,d\vec{\alpha}\,,\label{eq:density_op}
\end{equation}
where $\hat{\Lambda}\left(\vec{\alpha}\right)$ is a projection operator
that forms the basis in the description of a density operator and
$P\left(\vec{\alpha}\right)$ is the quasi-probability density that
corresponds to that operator basis. In the simplest cases, $\vec{\alpha}=\bm{\alpha}=\left(\alpha_{1},...,\,\alpha_{M}\right)$
is a real or complex vector in the relevant $M$-mode phase-space
for the first phase-space representations defined using a classical
phase-space. This can have increased dimensionality in more recent
mappings.

Using Eq. (\ref{eq:density_op}), we can immediately see how the $\F_{2}$
fidelity can be computed. In particular, we show explicitly how the
quantity $\tr\left(\rho\,\sigma\right)$ is obtained. 
\begin{eqnarray}
\tr(\rho\,\sigma) & =\tr\left[\intop\intop P_{\rho}\left(\vec{\alpha}\right)P_{\sigma}\left(\vec{\beta}\right)\hat{\Lambda}\left(\vec{\alpha}\right)\hat{\Lambda}\left(\vec{\beta}\right)\,d\vec{\alpha}\,d\vec{\beta}\right]\nonumber \\
 & =\intop\!\!\intop P_{\rho}\left(\vec{\alpha}\right)P_{\sigma}\left(\vec{\beta}\right)D(\vec{\alpha},\vec{\beta})\,d\vec{\alpha}\,d\vec{\beta}\,.\label{eq:fidelity}
\end{eqnarray}

Depending on the particular phase space representation, $D(\vec{\alpha},\vec{\beta})\equiv\tr\left[\hat{\Lambda}\left(\vec{\alpha}\right)\hat{\Lambda}\left(\vec{\beta}\right)\right]$
in Eq. (\ref{eq:fidelity}) takes different forms. An expression for
this quantity was calculated by Cahill and Glauber \cite{Cahill1969},
for classical phase-space methods, and it is generally only well-behaved
for the Wigner and P-function methods. The Glauber P-function, although
often singular, was proposed as an approximate method in fidelity
tomography using this approach \cite{lobino2008complete}.

Here we focus on two of the most useful representations in a typical
phase space numerical simulation: the Wigner and positive P-representations.
The first of these is a rather classical-like mapping, although generally
not positive-definite, while the second is always probabilistic, although
defined on a phase-space that doubles the classical dimensionality.
In the Wigner representation, $\vec{\alpha}\equiv\bm{\alpha}$, and
\begin{eqnarray}
D(\bm{\alpha},\bm{\beta}) & =\pi^{M}\delta^{M}\left(\bm{\alpha}-\bm{\beta}\right)\,,\label{eq:Tr_basis}
\end{eqnarray}
where $M$ is the dimension of the vector $\bm{\alpha}$ and the $M$-th
dimensional Dirac delta function is $\delta^{M}\left(\bm{\alpha}-\bm{\beta}\right)=\delta\left(\alpha_{1}-\beta_{1}\right)...\delta\left(\alpha_{M}-\beta_{M}\right)$.
This leads to~\cite{Cahill1969} 
\begin{equation}
\tr(\rho\,\sigma)=\pi^{M}\intop P_{\rho}\left(\bm{\alpha}\right)P_{\sigma}\left(\bm{\alpha}\right)\,d\bm{\alpha}\,.\label{eq:fidelity_wigner}
\end{equation}

In the positive P representation, a single mode is characterized by
two complex numbers, so $\vec{\alpha}=\left(\alpha,\alpha^{+}\right)$.
The notation $\alpha^{+}$ indicates that this variable represents
a conjugate operator, and is stochastically complex conjugate to $\alpha$
in the mean. This doubles the dimension of the relevant classical
phase space. There is an intuitive interpretation that it allows one
to map superpositions directly into a phase-space representation.
The density operator in positive P representation is then given by
Eq.~\eref{eq:densityop}, but now the operator bases have the form:
\begin{equation}
\hat{\Lambda}\left(\vec{\alpha}\right)=\frac{|\bm{\alpha}\rangle\langle\bm{\alpha^{+}}|}{\langle\bm{\alpha^{+\textnormal{*}}}|\bm{\alpha}\rangle}\,,
\end{equation}
and the product trace is: 
\begin{eqnarray}
D(\vec{\alpha},\vec{\beta}) & =\frac{\langle\bm{\beta^{+}}|\bm{\alpha}\rangle\langle\bm{\alpha^{+}}|\bm{\beta}\rangle}{\langle\bm{\alpha^{+\textnormal{*}}}|\bm{\alpha}\rangle\langle\bm{\beta^{+\textnormal{*}}}|\bm{\beta}\rangle}\,.\label{eq:Tr_basis_posp}
\end{eqnarray}
The quantity $\tr(\rho\,\sigma)$ is more complicated in this case
but still follows the structure of Eq.~\eref{eq:fidelity}. Fidelity
measures of the form given in Eq.~\eref{eq:Fmax} also involve
the purity, $\tr\left(\rho^{2}\right)$. This can be calculated similarly.

One great advantage of phase space methods is that quasi-probability
densities allow numerical simulation to be carried out. Typically,
samples are drawn from these probability densities, which are then
evolved dynamically. Finally, observables of interest are computed
by the Monte Carlo method, which is usually the only practical technique
for very large Hilbert spaces. Likewise, fidelity measures can be
computed numerically in a typical Monte Carlo scheme. In particular,
we consider $\mathcal{F}_{2}$, which, as we will discuss, is the
most tractable form of fidelity.

{Let $\rho$ be} the initial density operator of the system and
we want to compute the fidelity of a quantum state at a later time,
which is characterized by the density operator $\sigma$, with respect
to $\rho$. {Suppose that the initial state $\rho$} and its corresponding
phase space distribution are known in a numerical simulation.

{For simplicity, suppose that $\rho$ is a pure state, which is a
common but not essential assumption,} and implies that $\tr\left(\rho^{2}\right)=1$.
Even for cases where $\tr\left(\rho^{2}\right)<1$, the final state
after a time evolution will usually be no purer than the initial state.
There are exceptions to this rule, since a dissipative time-evolution
can evolve a mixed state of many particles to a pure vacuum state,
but we first consider the case of non-increasing purity here for definiteness.

In other words, ${\rm max}\left[\tr\left(\rho^{2}\right),\tr\left(\sigma^{2}\right)\right]=\tr\left(\rho^{2}\right)$.
This is convenient as the exact (quasi)-probability distribution for
$\sigma$ is not known and only a set of samples of this distribution
is available, which leads to the sampled fidelity we discuss next.

Next, consider the sampled fidelity in the Wigner representation.
The quantity $\tr\left(\rho\,\sigma\right)$ in Eq. (\ref{eq:fidelity_wigner})
in the Monte Carlo scheme is given by: 
\begin{eqnarray}
\tr\left(\rho\,\sigma\right) & =\pi^{M}\intop P_{\rho}\left(\bm{\alpha}\right)P_{\sigma}\left(\bm{\alpha}\right)\,d\bm{\alpha}\nonumber \\
 & \approx\pi^{M}\frac{1}{N_{{\rm samples}}}\sum_{i=1}^{N_{{\rm samples}}}P_{\rho}\left(\bm{\alpha}_{i}\right)\,,\label{eq:tr_rho_sig_wigner}
\end{eqnarray}
where $\Ns$ is the sample size of the probability distribution $P_{\sigma}$.
We note that there can be issues with the fact that the same random
variable occurs in both the distributions, leading to practical problems
if both the distributions are sampled. This can be avoided if one
of the Wigner functions is known analytically.

The same quantity can be computed in the positive P representation.
It is then possible to use two independent sets of random variables,
so that both the distributions can be obtained from random sampling:
\begin{eqnarray}
\tr\left(\rho\,\sigma\right) & \approx & \frac{1}{N_{{\rm samples}}^{2}}\sum_{i,j}^{N_{{\rm samples}}}D(\vec{\alpha}_{i},\vec{\beta}_{j})\,.\label{eq:tr_rho_sig_posp}
\end{eqnarray}
Here, the factor $N_{{\rm samples}}^{2}$ comes from the product of
$P_{\rho}\left(\vec{\alpha}\right)P_{\sigma}\left(\vec{\beta}\right)$
in Eq.~\eref{eq:fidelity} under the usual assumption of equally
weighted samples. This shows that it is possible to compute $\mathcal{F}_{2}$
fidelities from a phase-space simulation. This is useful when trying
to predict performance of a quantum technology or memory in an application
involving storage of an exotic quantum state. We emphasize that if
one of the calculated states is a pure state, then all fidelity measures
give the same result.

Admittedly, this quantity is more complicated than Eq.~(\ref{eq:tr_rho_sig_wigner})
in the Wigner representation. In addition, the sampling error can
be very large in some cases, as discussed by Rosales-Zarate and Drummond
\cite{Rosales-Zarate2011entropy}. When this occurs, representations
such as the generalized Gaussian representations \cite{corney2003gaussian,corney2005gaussian,corney2006gaussian,joseph2018phase}
can be employed, and clearly the purities can be estimated in a similar
way if the initial state is not pure.

Overall, $\mathcal{F}_{2}$ appears to be the most suitable fidelity
measure in a dynamical simulation or measurement using phase-space
techniques, where only the initial state with its probability distribution
is known. It is the only measure using easily computable Hilbert-Schmidt
norms that satisfies all of the Jozsa axioms.


\subsection{Techniques of fidelity measurement}

As pointed out in Section (\ref{subsec:Relevant-and-irrelevant}),
fidelity is a relative measure. The results of fidelity measurements
on different Hilbert spaces are \textbf{not }the same. The ${\mathcal{F}_{1}}$
fidelity may improve if the measured Hilbert space has a lower dimensionality,
simply because it is defined as a maximum over all possible purifications.
Hence, the ${\mathcal{F}_{2}}$ fidelity has the advantage that it
is generally less biased towards high values, as shown in the examples
of the previous section, and is always true for qubits.

The adage of being cautious in comparing apples to oranges should
be remembered. In general terms we will distinguish six different
approaches that are described below, as applied to typical physical
implementations \cite{nielsen2000quantum,divincenzo2000physical}.
The real utility of a given fidelity measure is how well it matches
the requirements of a given application. Analyzing quantum logic gates
and memories is one of the most widespread and useful applications
of fidelity, and hence we will give examples of these applications.

These general considerations about physical implementation apply to
all of the various applications listed in this section. The examples
referenced here are necessarily incomplete, as this is not a full
review of experimental implementations. Nevertheless, some typical
experimental measurements are referenced in each of the following
application examples.

\subsubsection{Atomic tomography fidelity }

Atomic or ionic fidelity measurements involve a finite, stationary,
closed quantum system, where each state can be accessed and projected
\cite{leibfried2003experimental,longdell2004experimental}. These
measurements are usually relatively simple. To obtain the entire density
matrix for a calculation of mixed state fidelity involves a tomographic
measurement. Thus, for example, in qubit tomography one must measure
both the diagonal elements, which are level occupations, as well as
off-diagonal elements that are obtained through Rabi rotations that
transform off-diagonal elements into level occupations for measurement.
This approach is easiest to implement when the Hilbert space has only
two or three levels. Typical examples of this technique involve trapped
ions, whose level occupations are measured using laser pulses and
fluorescence photo-detection.

It is not always clear in such measurements how the translational
state is measured, or if it is even part of the relevant Hilbert space,
which is necessary in order to understand how the fidelity is defined.
The problem is that the full quantum state of an isolated ion or atom
always has both internal and center-of-mass degrees of freedom, so
that a pure state is: 
\begin{equation}
\left|\Psi\right\rangle =\sum_{ij}C_{ij}\left|\psi_{i}\right\rangle _{{\rm int}}\left|\phi_{j}\right\rangle _{{\rm CM}}.
\end{equation}
Here $\left|\psi\right\rangle _{{\rm int}}$ is the internal state
defined by the level structure, while $\left|\phi\right\rangle _{{\rm CM}}$
defines the center-of-mass degree of freedom. The actual density matrix
even for a single ion or atom therefore involves different spatial
modes for the center-of-mass, such that: 
\begin{equation}
\rho=\sum_{ijkl}\rho_{ijkl}\left|\psi_{i}\right\rangle _{{\rm int}}\left|\phi_{j}\right\rangle _{{\rm CM}}\left\langle \phi_{k}\right|_{{\rm CM}}\left\langle \psi_{l}\right|_{{\rm int}}.
\end{equation}
Next, we can consider two possible situations: 
\begin{description}
\item [{Full~tomography:}] Suppose that the target state is $\left|\Psi_{0}\right\rangle =\left|\psi_{0}\right\rangle _{{\rm int}}\left|\phi_{0}\right\rangle _{{\rm CM}}$,
so that the center-of-mass position is part of the relevant Hilbert
space. Under these conditions, only the measured states with $k=j=0$
are in the same \emph{overall} quantum state as the target state.
This may prevent complete visibility in an interference measurement
in which the center-of-mass position is relevant. If this is the case,
one should consider the center-of-mass position as part of the relevant
Hilbert space. Hence one must consider rather carefully if it is necessary
to investigate the translational state fidelity as well in this type
of application. This has been investigated in ion-trap quantum computer
gate fidelity measurements \cite{ospelkaus2011microwave}. 
\item [{Partial~tomography:}] The center-of-mass part of the Hilbert space
may not matter if the internal degrees of freedom are decoupled sufficiently
from the spatial degrees of freedom, so that only the internal degrees
of freedom are relevant over the time-scales that are of interest.
In these cases the density matrix can be written as: 
\begin{equation}
\rho=\rho_{{\rm int}}\otimes\rho_{{\rm CM}}.
\end{equation}
Provided this factorization is maintained throughout the experiment,
it may well be enough to only measure the internal degrees of freedom.
However, any spin-orbit or similar effective force that couples the
internal and translational degrees of freedom will cause entanglement.
This will sometimes mean a reduction in fidelity, since the entangled
state can become mixed after tracing out the spatial degrees. Relatively
high fidelities have been measured with this approach. Depending on
the system, this can be viewed as occurring because coupling to phonons
is weak \cite{fuchs2011quantum} or because experiments occur on faster
time-scales than the atomic motion \cite{bernien2017probing}. 
\end{description}

\subsubsection{Photonic~fidelity }

Photonic measurements are typical of quantum memories \cite{Lvovsky:2009aa,CHANELIERE201877},
communications or cryptography when photons are used as the information
carrier. In the case of a quantum memory, a quantum state is first
encoded in a well-defined spatiotemporal mode(s), then dynamically
coupled into the memory subsystem, stored for a chosen period, and
coupled out into a second well-defined spatiotemporal field mode(s)
where it can be measured \cite{HeReidPhysRevA.79.022310}. We note
that temporal mode structure is an essential part of defining a quantum
state.

The actual quantum state in these cases is defined as an outer product
of photonic states in each possible mode $\left|n_{k}\right\rangle _{k}$,
where $n_{k}$ is the photon number in each mode, so that: 
\begin{equation}
\left|\Psi\right\rangle =\sum_{\bm{n}}C_{\bm{n}}\left|n_{1}\right\rangle _{1}{\otimes}\ldots\left|n_{K}\right\rangle _{K}.
\end{equation}
Here each mode has an associated mode function $\bm{u}_{k}$, which
is typically localized in space-time, since technology applications
are carried out in finite regions of space, over finite time-intervals.
We implicitly assume a finite total number of modes $K$, although
there is no physical upper bound except possibly that from quantum
gravity.

To obtain the output density matrix, the most rigorous approach is
to use pulsed homodyne detection to isolate the mode(s) used, with
a variable phase delay or other methods to measure the off-diagonal
elements. This gives a projected quadrature expectation value of the
relevant single mode operator. By tomographic reconstruction, one
can obtain the Wigner function \cite{lvovsky2009continuous}. We show
in Section~\ref{Sec:PhaseSpace} that this phase-space technique
directly gives the quantum fidelity as an $\Fmax$ measure. Obtaining
any other fidelity measure generally requires the reconstructed density
matrix. See, however,~\cite{miszczak2009sub,Bartkiewicz:PRA:2013}
where the authors discuss a direct measurement of the superfidelity,
$\Fn$, together with a lower bound called the subfidelity for photonic
states encoded in the polarization degree of freedom. Since $\Fn=\F_{1}$
for qubit states \cite{mendonca2008}, their method actually amounts
to a direct measurement of $\F_{1}$ for these qubit states without
resorting to quantum state tomography.

\subsubsection{Conditional~fidelity}

In some types of photonic fidelity measurement the state may not be
found at all in some of the measurements. This is typically the case
in photo-detection experiments with low photon number, where a qubit
can be encoded into two spatial or polarization modes, as $\left|\psi\right\rangle =\frac{1}{\sqrt{|a|^{2}+|b|^{2}}}{\left(a\left|0\right\rangle _{1}\left|1\right\rangle _{2}+b\left|1\right\rangle _{1}\left|0\right\rangle _{2}\right)}$.
Problems arise when no photon is detected at all, either because the
photodetector was inefficient, or because the photon was lost during
the transmission, thereby making the input a vacuum state, which is
in a larger Hilbert space.

As a result, reported measurements are sometimes defined by simply
conditioning all results on the presence of a detected photon(s).
Unless the target state is itself defined to be the conditioned state,
this conditional fidelity is best regarded as an upper bound for the
true mixed state fidelity, which includes these loss effects. The
potential difficulty with conditional fidelity measured in this way
is that it essentially involves an assumption of fair sampling. In
other words, photons may be lost through detector inefficiency, but
they may also be lost in any number of other ways. 

While detector loss can be regarded as simply a measurement issue,
unrelated to the state itself, there is also a possibility that the
state was already degraded before it reached the detector, and hence
the true fidelity is lower than estimated by the conditional measurement.
Yet in many applications, like quantum logic gates and computing,
one may have to repeat the same quantum memory process many times
in succession. In these cases any inefficiency that occurs prior to
detection grows exponentially with the number of gates, and can become
an important issue. In other quantum information processes however,
it might be argued that this effect is not relevant. In quoting fidelity,
it is thus important to match the target state with the final intended
application.

\subsubsection{Inferred~fidelity}

In a similar way to efficiency problems that lead to conditional fidelity
measures, the spatiotemporal mode may change from measurement to measurement
in photonic experiments. This leads to a mixed state. As a result,
the increased number of modes present can enlarge the Hilbert space
in a way that is not detectable through measurements of photodetection
events without using interferometry or local oscillators. This approach
is sometimes combined with a conditional measurement.

An example of this approach is a recently reported quantum memory
for orbital angular momentum qubits \cite{nicolas2014quantum}. This
experiment has many robust and useful features, using spatial mode
projection to ensure that the correct transverse mode is matched from
input to output. However, the report does not describe how longitudinal
or temporal mode structure was determined, leaving this issue as an
open question at this stage.

A photon-counting approach cannot usually detect the full mode structure.
For example, suppose one has a wide range of longitudinal modes that
can be occupied, having distinct frequencies and/or temporal mode
structures, and each occurring with a probability $P_{k}\ll1$, so
that: 
\begin{equation}
\rho=\sum_{k}P_{k}\left|0\right\rangle _{1}\ldots\left|1\right\rangle _{k}\ldots\left\langle 0\right|_{1}\ldots\left\langle 1\right|_{k}\ldots.
\end{equation}
This is a mixed state in which a single photon could be in any longitudinal
mode with a given probability $P_{k}$.

Let us now compare this with a desired pure state $\sigma$, for example:

\begin{equation}
\sigma=\left|1\right\rangle _{1}\ldots\left|0\right\rangle _{k}\ldots\left\langle 1\right|_{1}\ldots\left\langle 0\right|_{k}\ldots..
\end{equation}
It is clear that, for any definition of fidelity, $\mathcal{F}=\tr\left(\sigma\rho\right)=P_{1}\ll1$.
One could attempt to measure this fidelity with a photon-counting
measurement, combined with the {\em assumption} that there is only
one longitudinal mode present. If all measurements give exactly one
count, then this measurement, combined with the single-mode assumption
would lead to an inferred state fidelity of $\mathcal{F}_{{\rm inf}}=1$.
This does not match the true fidelity in this example.

Fidelity measurements like these generally make the assumption that
the mode structure that is measured matches the desired mode structure,
even when it is not measured directly. As a result, the inferred fidelity
may be higher than the true fidelity, and should be considered as
an upper bound. This may cause problems if one must carry out a binary
quantum logic operation with input signals derived from two different
sources such that the modes should be matched in time and/or frequency.
Under these conditions, it is the true fidelity, including the effects
of losses and modal infidelity that is important. These questions
have been investigated in experiments that carry out full tomographic
measurements to reconstruct single-photon Fock states using homodyne
measurement techniques \cite{lvovsky2001quantum}.

\subsubsection{Cloned~fidelity }

A fifth type of fidelity measurement is obtained as a variant of a
quantum game in which a number of copies of a quantum state may be
recorded or stored \cite{MassarPopPhysRevLett.74.1259}. From subsequent
measurements, it is possible to infer, for example, using maximum
likelihood measurements, what the original state was. The inferred
state can be compared with the original using fidelities. Unfortunately
the no-cloning theorem tells us that multiple copies of any single
quantum state cannot be obtained reliably. Hence, while one can infer
a fidelity from multiple copies of a state, the entire process that
includes first generating multiple copies of a quantum state will
always involve an initial reduction in fidelity. This is important
in some types of application.

\subsubsection{Logic and process fidelity}

Finally, we turn to a different type of fidelity used to analyze quantum
processes rather than states or density matrices. Quantum processes
are also known~\cite{preskill2015lecture} as quantum channels, or
mathematically as completely positive maps or super-operators. For
the case of a quantum process, one may wish to analyze the fidelity
of an actual quantum operation to an intended quantum operation. This
may include any quantum technology from logic gates to memories, or
indeed any input-output process. An operation is defined in the general
sense of any quantum map ${\cal E\left(\rho\right)}$, from an input
density matrix $\rho_{{\rm in}}$ to an output density matrix $\rho_{{\rm out}}.$
Their fidelity is discussed by Gilchrist \emph{et. al.}~\cite{gilchrist2005distance}.

Just as any density matrix has a matrix representation in the Hilbert
space of state vectors of dimension $d$, quantum channels have a
matrix representation in terms of a basis set of $d^{2}$ quantum
operators $A_{j}$, where tr$\left(A_{j}^{\dagger}A_{k}\right)=\delta_{jk}$.
Using this basis, any quantum operation can be written as: 
\begin{equation}
\rho_{{\rm out}}={\cal E}\left(\rho_{{\rm in}}\right)=\sum_{mn}P_{mn}A_{m}\rho_{{\rm in}}A_{n}^{\dagger}
\end{equation}
Here $P_{mn}$ are the elements of the so-called process matrix $P$,
which provides a convenient way of representing the operation ${\cal E}.$

At first, this seems rather different to density matrices, as discussed
throughout this review Yet it is easy to show via the Choi-Jamiolkowski
isomorphism \cite{Choi:1975,jamiolkowski1974effective}, that one
can define a new quantum ``density matrix"~\cite{gilchrist2005distance}
on the enlarged Hilbert space of dimension $d^{2}$, such that $\rho^{{\cal E}}=P/d$.
Hence any fidelity or distance measure for quantum states can also
be applied to processes, by the simple technique of dimension squaring.
We will not investigate this in detail here except to remark that
all of the different fidelity measures used for density matrices can
be applied directly to quantum processes. For process fidelity it
is common to impose additional requirements for the fidelity in addition
to the axioms used here.

A typical example is the measurement of quantum process fidelity in
a CNOT gate \cite{o2004quantum}. In this early photonic measurement,
the counting fidelity was measured using conditional techniques. Thus,
as explained above, these results should be regarded as an upper bound
to the actual quantum fidelity, once mode-mismatch errors and losses
are included. Other, more recent, process fidelity measurements with
quantum logic gates have been carried out with ion traps \cite{benhelm2008towards,knill2008randomized},
liquid nuclear magnetic resonance \cite{ryan2009randomized}, solid-state
silicon \cite{veldhorst2015two} and superconducting Josephson qubits
\cite{lucero2008high}.

\section{Summary}

We have reviewed the requirements for a mixed state fidelity measure
\cite{jozsa1994fidelity}, and analyzed a number of candidate measures
of fidelity for their compliance with these requirements, as well
as other considerations. While there are many candidates, most of
them do not fully comply with the Josza axioms, although some of these
alternatives do have useful properties.

Despite the above observation, there do exist an infinite number of
compliant fidelities. Among them, three well-defined measures that
fully satisfy the Josza axioms for fidelity measures are of particular
interest, due to their physical interpretation and measurement properties:
these are the Uhlmann-Josza fidelity $\Fuj$, the non-logarithmic
variety of the quantum Chernoff bound $\Fq$ and the Hilbert-Schmidt
fidelity, $\Fmax$. It is worth noting that both $\Fuj$ and $\Fmax$
are particular cases of an infinite family of Josza-compliant fidelity
measures $\Fp$, each associated with a Schatten-von-Neumann $p$-norm.

In this review, we have focused on two specific cases of these norm
based fidelity measures $\Fp$, as well as the quantum Chernoff bound
$\Fq$. Analyzing the properties of this family of candidate measures
for other integer values of $p>2$ is clearly something that may be
of independent interest. On the other hand, despite much effort, the
validity of a few desired properties of various candidate fidelity
measures remains unknown (see Table~\ref{tbl:Concavity}, Table~\ref{tbl:Monotonicity}
and Table~\ref{tbl:RelatedMetrics} for details). For each of these
conjectured properties, at least 2000 optimizations with different
initial starting points have been carried out for each Hilbert space
dimension $d=2,3,\ldots,10$. Given the fact that intensive numerical
searches have been carried out for counterexamples to these properties
for small Hilbert space dimensions, we are inclined to conjecture
that these properties are indeed valid.

An intriguing result is that of the $\Fp$ fidelities investigated,
the $\Fuj$ fidelity gives the largest average values when random
density matrices are compared. While this relationship is only universal
for the qubit case\textemdash otherwise there are occasional exceptions\textemdash it
is found on average for higher dimensional Hilbert spaces as well.
This is clearly related to the fact that the $\Fuj$ fidelity is defined
as a maximum fidelity over purifications. These purifications represent
an unmeasured portion of Hilbert space. Hence, measuring $\Fuj$ on
a subspace could introduce a bias compared to a more complete measurement
on a larger relevant space, if there are additional errors in the
unmeasured part of the relevant Hilbert space.

To conclude, while the Uhlmann-Josza measure is the most widely known
fidelity measure, there are other alternatives which have properties
that can make them preferable under some circumstances. They may be
either simpler to compute or more relevant to certain applications.
For example, the Hilbert-Schmidt fidelity measure $\Fmax$ is well-defined
even for unnormalized density matrices, and appears less biased towards
high values. Finding out the full implications of this and other mathematical
properties of the candidate measures, however, is too broad a research
topic to be considered within the present review.

\section*{Acknowledgements}

YCL, PEMFM, and PDD contributed equally towards this work. This work
is supported by the Ministry of Science and Technology, Taiwan (Grants
No. 104-2112-M-006-021-MY3 and 107-2112-M-006-005-MY2) and the Center
for Quantum Technology, Hsinchu, Taiwan. PDD and MDR thank the Australian
Research Council and the hospitality of the Institute for Atomic and
Molecular Physics (ITAMP) at Harvard University, supported by the
NSF. YCL acknowledges useful discussions with N. Gisin and N. Sangouard.
PDD thanks B. Sparkes for useful discussions.

\appendix

\section{Detailed proofs}

\label{Sec:Proofs}

In this Appendix, detailed proofs are obtained for the fidelity results
in the earlier sections, where they are not given already.

\subsection*{Norm based fidelity properties}

\label{App:p-norm}

\begin{theorem} All norm-based fidelities, $\Fp$, obey the Josza
axioms for $p\ge1$.

\end{theorem}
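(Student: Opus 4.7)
The plan is to verify each of the six Jozsa axioms separately for $\Fp$, relying on three standard facts about Schatten--von-Neumann norms: unitary invariance $\Vert UAV^\dagger \Vert_p = \Vert A \Vert_p$; the cyclic equality $\Vert XY \Vert_p = \Vert YX \Vert_p$, which follows because $XY$ and $YX$ share their nonzero singular values; and the H\"older-type inequality $\Vert A_1 A_2 \Vert_p \le \Vert A_1 \Vert_{2p} \Vert A_2 \Vert_{2p}$ recalled in the text. Throughout, I use the elementary identity $\Vert \sqrt{\rho} \Vert_{2p}^2 = \Vert \rho \Vert_p$, immediate from the definition of the Schatten norm on a positive operator.

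Axioms J2 and J4 follow directly. For J4, $\sqrt{U\rho U^\dagger} = U\sqrt{\rho}\,U^\dagger$ together with unitary invariance leaves both numerator and denominator of $\Fp$ unchanged; for J2, the cyclic equality gives $\Vert \sqrt{\rho}\sqrt{\sigma} \Vert_p = \Vert \sqrt{\sigma}\sqrt{\rho} \Vert_p$ and the denominator is manifestly symmetric. For J3, take $\rho = \proj{\psi}$ pure so that $\sqrt{\rho} = \rho$; then $\sqrt{\rho}\sigma\sqrt{\rho} = \bra{\psi}\sigma\ket{\psi}\,\rho$ is a rank-one positive operator with single nonzero eigenvalue $\tr(\rho\sigma)$, giving $\Vert \sqrt{\rho}\sqrt{\sigma} \Vert_p^2 = \tr(\rho\sigma)$. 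Because $\sigma$ is a density operator with eigenvalues $\lambda_i \in [0,1]$ summing to one, $\Vert \sigma \Vert_p^p = \sum_i \lambda_i^p \le \sum_i \lambda_i = 1$ for $p \ge 1$, so the denominator equals $\max(1, \Vert \sigma \Vert_p^2) = 1$, proving J3 (the case of $\sigma$ pure follows via J2). J1c is immediate: $\Fp = 0$ iff $\sqrt{\rho}\sqrt{\sigma} = 0$, which in turn holds iff the supports of $\rho$ and $\sigma$ are orthogonal, i.e., iff $\rho\sigma = 0$. For J1a, nonnegativity is clear, and the upper bound chains the three ingredients above,
\begin{equation}
\Vert \sqrt{\rho}\sqrt{\sigma} \Vert_p^2 \le \Vert \sqrt{\rho} \Vert_{2p}^2 \Vert \sqrt{\sigma} \Vert_{2p}^2 = \Vert \rho \Vert_p \Vert \sigma \Vert_p \le \max(\Vert \rho \Vert_p^2, \Vert \sigma \Vert_p^2),
\end{equation}
the last step being the elementary estimate $ab \le \max(a^2,b^2)$ for $a,b \ge 0$.

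The main obstacle is the ``only if'' direction of J1b. Sufficiency is immediate, since $\rho = \sigma$ makes $\Vert \sqrt{\rho}\sqrt{\rho} \Vert_p^2 = \Vert \rho \Vert_p^2$ equal to the denominator. For necessity, $\Fp(\rho,\sigma) = 1$ forces both inequalities in the displayed chain to be saturated. Saturation of the second inequality forces $\Vert \rho \Vert_p = \Vert \sigma \Vert_p$, while saturation of the H\"older-type inequality for the positive operators $\sqrt{\rho},\sqrt{\sigma}$ should force proportionality $\sqrt{\rho} = c\sqrt{\sigma}$ for some $c > 0$. This is the delicate step: I would justify it by first showing that saturation of H\"older in Schatten norms requires $\sqrt{\rho}$ and $\sqrt{\sigma}$ to be simultaneously diagonalizable (so that their product has $p$-Schatten norm equal to a classical $\ell^p$ inner product of eigenvalue sequences), and then invoking the classical H\"older equality condition on those sequences to conclude proportionality. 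Combined with $\Vert \rho \Vert_p = \Vert \sigma \Vert_p$ this forces $c = 1$ and hence $\rho = \sigma$, completing J1b and the proof.
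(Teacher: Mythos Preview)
Your proof is correct and follows essentially the same route as the paper: both arguments verify the axioms one by one, using unitary invariance for J4, the adjoint/cyclic symmetry $\Vert\sqrt{\rho}\sqrt{\sigma}\Vert_p=\Vert\sqrt{\sigma}\sqrt{\rho}\Vert_p$ for J2, the rank-one reduction for J3, and the H\"older chain $\Vert\sqrt{\rho}\sqrt{\sigma}\Vert_p^2\le\Vert\rho\Vert_p\Vert\sigma\Vert_p\le\max(\Vert\rho\Vert_p^2,\Vert\sigma\Vert_p^2)$ for J1a. On J1b you are in fact more explicit than the paper, which simply asserts that equality in H\"older together with normalization forces $\rho=\sigma$; your sketch (saturation of the second inequality gives $\Vert\rho\Vert_p=\Vert\sigma\Vert_p$, saturation of H\"older gives proportionality of the positive factors) is the standard way to unpack that assertion.
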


\begin{proof} ~ 
\begin{itemize}
\item[J1a)] $\Fp(\rho,\sigma)\in[0,1]$. The minimum bound is trivial, since
norms are positive semi-definite. That the maximum bound holds is
obtained from the Hölder inequality, since: $\left\Vert \sqrt{\rho}\sqrt{\sigma}\right\Vert _{p}^{2}\leq\left\Vert \sqrt{\rho}\right\Vert _{2p}^{2}\left\Vert \sqrt{\sigma}\right\Vert _{2p}^{2}=\left\Vert \rho\right\Vert _{p}\left\Vert \sigma\right\Vert _{p}\leq\max\left[\left\Vert \sigma\right\Vert _{p}^{2},\left\Vert \rho\right\Vert _{p}^{2}\right]$. 
\item[J1b)] $\Fp(\rho,\sigma)=1$ \emph{if and only if} $\rho=\sigma$. Clearly,
$\Fp(\rho,\rho)=1$ for identical operators, since $\left\Vert \sqrt{\rho}\sqrt{\rho}\right\Vert _{p}=\left\Vert \rho\right\Vert _{p}^{2}=\max\left[\left\Vert \rho\right\Vert _{p}^{2},\left\Vert \rho\right\Vert _{p}^{2}\right]$.
To prove the converse, {note from the above proof of J1a that the
maximum bound is attained if and only if the Hölder inequality becomes
an equality. Taking into account of the normalization of density matrices,
we thus see that the maximum bound of $\Fp(\rho,\sigma)=1$ is attained
if and only if $\rho=\sigma$.} 
\item[J1c)] $\Fp(\rho,\sigma)=0$ \emph{if and only if} $\rho\,\sigma=0$ . This
follows since $\rho\,\sigma=0\Longleftrightarrow\sqrt{\rho}\sqrt{\sigma}=0\Longleftrightarrow\left\Vert \sqrt{\rho}\sqrt{\sigma}\right\Vert _{p}=0$. 
\item[J2)] $\Fp(\rho,\sigma)=\Fp(\sigma,\rho)$ is clearly true from the symmetry
of the matrix norm under transposition. 
\item[J3)] $\Fp(\rho,\sigma)=\tr(\rho\,\sigma)$ if either $\rho$ or $\sigma$
is a pure state. From the definition, we have $\|\sqrt{\rho}\sqrt{\sigma}\|_{p}^{2}=\left[\tr\left(\sqrt{\rho}\,\sigma\,\sqrt{\rho}\right)^{\frac{p}{2}}\right]^{\frac{2}{p}}$.
Let $\rho$ be a pure state, then $\rho=\sqrt{\rho}=\proj{\psi}$
for some $\ket{\psi}$, the expression thus simplifies to $\left[\left(\bra{\psi}\sigma\ket{\psi}\right)^{\frac{p}{2}}\tr\left(\rho^{\frac{p}{2}}\right)\right]^{\frac{2}{p}}$,
which reduces to $\bra{\psi}\sigma\ket{\psi}=\tr(\rho\,\sigma)$ since
$\tr\left(\rho^{\frac{p}{2}}\right)=\tr\rho=1$. Similarly, the normalizing
factor is unity, since $\left\Vert \rho\right\Vert _{p}^{2}=1\ge\left\Vert \sigma\right\Vert _{p}^{2}$,
and the argument holds if $\rho,\sigma$ are interchanged. 
\item[J4)] $\Fp(U\rho\,U^{\dagger},U\sigma U^{\dagger})=\Fp(\rho,\sigma)$.
This follows from the unitary invariance of the matrix norm. 
\end{itemize}
\end{proof}

\subsection*{Normalization}

We now show that both $\Fam$ and $\Fgm$ obey axiom (J1a) and (J1b).

\begin{theorem} $\Fam(\rho,\sigma),\Fgm(\rho,\sigma)\in[0,1]$ with
the upper bound attained if and only if $\rho=\sigma$. \end{theorem}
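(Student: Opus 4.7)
The plan is to handle non-negativity and the upper bound separately, and then extract the equality cases. For non-negativity, I would note that for any density matrices $\rho$ and $\sigma$, the quantity $\tr(\rho\sigma)$ equals $\tr(\sqrt{\rho}\sigma\sqrt{\rho})$, which is the trace of a positive semidefinite operator and hence nonnegative. The denominators $\tr(\rho^2)+\tr(\sigma^2)$ and $\sqrt{\tr(\rho^2)\tr(\sigma^2)}$ are strictly positive for any density matrices (indeed, $\tr(\rho^2)\ge 1/d>0$), so both $\Fam$ and $\Fgm$ are well-defined and nonnegative.

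For the upper bound on $\Fam$, the key observation is that $(\rho-\sigma)$ is Hermitian, so
\begin{equation}
\tr[(\rho-\sigma)^2]\;=\;\tr(\rho^2)-2\tr(\rho\sigma)+\tr(\sigma^2)\;\ge\;0,
\end{equation}
which rearranges to $2\tr(\rho\sigma)\le \tr(\rho^2)+\tr(\sigma^2)$, giving $\Fam(\rho,\sigma)\le 1$. For the upper bound on $\Fgm$, the natural tool is the Cauchy--Schwarz inequality for the Hilbert--Schmidt inner product $\langle A,B\rangle_{\mathrm{HS}}\defeq\tr(A^\dagger B)$:
\begin{equation}
[\tr(\rho\sigma)]^2\;\le\;\tr(\rho^2)\,\tr(\sigma^2),
\end{equation}
so $\Fgm(\rho,\sigma)\le 1$ as well.

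For the equality conditions, both cases reduce to standard saturation statements. In the $\Fam$ case, $\Fam(\rho,\sigma)=1$ forces $\tr[(\rho-\sigma)^2]=0$; since $\rho-\sigma$ is Hermitian, its Hilbert--Schmidt norm vanishing means $\rho=\sigma$. In the $\Fgm$ case, equality in Cauchy--Schwarz forces $\rho$ and $\sigma$ to be linearly dependent as Hilbert--Schmidt vectors, i.e.\ $\rho=c\,\sigma$ for some scalar $c\ge 0$; taking the trace and using $\tr\rho=\tr\sigma=1$ gives $c=1$, hence $\rho=\sigma$. Conversely, substituting $\rho=\sigma$ into the definitions trivially gives $\Fam(\rho,\rho)=\Fgm(\rho,\rho)=1$.

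I do not expect a significant obstacle here: the entire argument is a combination of the elementary identity $\tr[(\rho-\sigma)^2]\ge 0$ and Cauchy--Schwarz in the Hilbert--Schmidt inner product, both of whose equality cases are well-known. The only minor care needed is in the $\Fgm$ equality case, where the linear-dependence conclusion from Cauchy--Schwarz has to be combined with the unit-trace normalization of $\rho$ and $\sigma$ to rule out proportionality constants other than $1$.
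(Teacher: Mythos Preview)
Your proof is correct. The paper's route is slightly different: it first uses the arithmetic--geometric mean inequality on the denominators to get the chain $\Fam(\rho,\sigma)\le\Fgm(\rho,\sigma)$, and then bounds $\Fgm$ by $1$ via Cauchy--Schwarz, so that both upper bounds (and both equality cases) are ultimately reduced to the Cauchy--Schwarz saturation condition. You instead treat $\Fam$ directly via the expansion $\tr[(\rho-\sigma)^2]\ge 0$, which is arguably cleaner because it gives the bound and the equality case $\rho=\sigma$ for $\Fam$ in one stroke, without passing through $\Fgm$. For $\Fgm$ the two arguments coincide. Both approaches are elementary and of comparable length; yours avoids invoking AM--GM, while the paper's has the side benefit of establishing the ordering $\Fam\le\Fgm$ along the way.
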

\begin{proof} The non-negativity of $\Fam$ and $\Fgm$ is obvious.
Next, we prove that these quantities are upper bounded by 1. To this
end, we recall that the geometric mean between two numbers is always
upper bounded by its arithmetic mean, hence, $\Fam(\rho,\sigma)\le\Fgm(\rho,\sigma)\le1$,
where the second inequality follows easily from the Cauchy-Schwarz
inequality.

Moreover, the Cauchy-Schwarz inequality is saturated if and only if
its entries are scalar multiples of each other. Since our entries
have unity trace (density matrices), saturation can only occur if
$\rho=\sigma$. It is easy to see by inspection that both $\Fam(\rho,\rho)$
and $\Fgm(\rho,\rho)$ indeed equal to unity. Thus $\Fam(\rho,\sigma)=\Fgm(\rho,\sigma)=1$
if and only if $\rho=\sigma$. \end{proof}

\subsection*{Multiplicativity}

\label{App:Proof:Mul}


\begin{theorem} {The measure} $\Fmax$ is {generally} super-multiplicative{,
but is multiplicative when appended by an uncorrelated ancillary state,
or when considering tensor powers of the same states.} \end{theorem}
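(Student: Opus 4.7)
The plan is to reduce every claim to elementary identities for the trace of tensor products, then handle the three assertions in increasing order of subtlety. Let me write $a_i = \tr(\rho_i\sigma_i)$, $b_i = \tr(\rho_i^2)$, $c_i = \tr(\sigma_i^2)$ for each $i$, so that by the multiplicativity of the trace over tensor factors,
\begin{equation}
\tr\!\left[\bigotimes_{i=1}^n \rho_i \bigotimes_{j=1}^n \sigma_j\right] = \prod_{i=1}^n a_i, \qquad \tr\!\left[\left(\bigotimes_{i=1}^n \rho_i\right)^{\!2}\right] = \prod_{i=1}^n b_i,
\end{equation}
and similarly for the $c_i$. This turns every question about $\F_2$ on tensor products into arithmetic on the positive scalars $a_i,b_i,c_i$.

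First I would dispose of the ancilla case: for $\rho\otimes\tau$ and $\sigma\otimes\tau$ the factor $\tr(\tau^2)$ is common to numerator and denominator and cancels, giving $\F_2(\rho\otimes\tau,\sigma\otimes\tau)=\F_2(\rho,\sigma)$ immediately. Next I would handle tensor powers, where $\F_2(\rho^{\otimes n},\sigma^{\otimes n}) = \tr(\rho\sigma)^n/\max[\tr(\rho^2)^n,\tr(\sigma^2)^n]$; since $\tr(\rho^2),\tr(\sigma^2)\ge 0$ we have $\max[x^n,y^n]=(\max[x,y])^n$, and the desired identity $\F_2(\rho^{\otimes n},\sigma^{\otimes n})=[\F_2(\rho,\sigma)]^n$ follows.

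For the general super-multiplicativity claim, the target inequality
\begin{equation}
\frac{\prod_i a_i}{\max\!\left[\prod_i b_i,\prod_i c_i\right]} \;\ge\; \prod_i \frac{a_i}{\max[b_i,c_i]}
\end{equation}
reduces, after clearing the (nonnegative) $\prod_i a_i$, to the purely combinatorial statement
\begin{equation}
\prod_{i=1}^n \max[b_i,c_i] \;\ge\; \max\!\left[\prod_{i=1}^n b_i,\prod_{i=1}^n c_i\right].
\end{equation}
This is immediate: each factor on the left dominates both $b_i$ and $c_i$, so the left-hand product dominates $\prod_i b_i$ and $\prod_i c_i$ separately, hence their maximum. (If $a_i=0$ for some $i$, both sides vanish and the inequality is trivial.)

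The only mildly subtle point is the edge case where the normalization in the definition of $\F_2$ involves the $\max$ rather than a smooth function; everything else is just bookkeeping with the multiplicative trace identity. In particular, the inequality becomes an equality precisely when the same argument (either all $b_i\ge c_i$ or all $c_i\ge b_i$) attains the maximum in every factor, which is automatically the case for tensor powers (explaining why that subcase upgrades from super-multiplicativity to equality) and for the ancilla subcase (where the $\tau$ factor contributes identically to both purities). I do not foresee any genuine obstacle; the main thing to be careful about is stating the super-multiplicative inequality with the correct direction and verifying the boundary behaviour when one of the states is pure so that $\tr(\rho_i^2)=1$.
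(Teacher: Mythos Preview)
Your proposal is correct and follows essentially the same approach as the paper: both reduce the question to the inequality $\prod_i \max[b_i,c_i] \ge \max\bigl[\prod_i b_i,\prod_i c_i\bigr]$ on the purities, and identify the ancilla and tensor-power cases as instances where equality is forced. Your presentation is slightly cleaner in that you prove the key inequality in one line for general $n$ (each factor on the left dominates both $b_i$ and $c_i$), whereas the paper works with $n=2$ and argues via a case split on the relative orderings of $\tr(\rho_i^2)$ and $\tr(\sigma_i^2)$; but the underlying idea is the same.
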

\begin{proof} Let us define $\rho=\rho_{1}\otimes\rho_{2}$ and $\sigma=\sigma_{1}\otimes\sigma_{2}$,
then start by noting that 
\begin{equation}
\F_{2}(\rho,\sigma)=\frac{\tr(\rho_{1}\sigma_{1})\tr(\rho_{2}\sigma_{2})}{\max\left[\tr(\rho_{1}^{2})\tr(\rho_{2}^{2}),\tr(\sigma_{1}^{2})\tr(\sigma_{2}^{2})\right]}
\end{equation}
and 
\begin{equation}
\F_{2}(\rho_{1},\sigma_{1})\F_{2}(\rho_{2},\sigma_{2})=\frac{\prod_{i=1}^{2}\tr(\rho_{i}\sigma_{i})}{\prod_{i=1}^{2}\max\left[\tr(\rho_{i}^{2}),\tr(\sigma_{i}^{2})\right]}
\end{equation}
Clearly, saturation of inequality~(\ref{Eq:SuperMultiplicative})
is obtained if the denominators of the two equations above coincide,
i.e., 
\begin{eqnarray}
\max\left[\prod_{i=1}^{2}\tr(\rho_{i}^{2}),\prod_{j=1}^{2}\tr(\sigma_{j}^{2})\right]=\prod_{i=1}^{2}\max\left[\tr(\rho_{i}^{2}),\tr(\sigma_{i}^{2})\right].\nonumber \\
\,\label{eq:maxeq}
\end{eqnarray}
It is easy to check that this equation holds when (at least) one of
the following is observed 
\begin{itemize}
\item $\tr(\rho_{i}^{2})=\tr(\sigma_{i}^{2})$ for some $i=1,2$, 
\item $\tr(\rho_{i}^{2})>\tr(\sigma_{i}^{2})$ for both $i=1,2$, 
\item $\tr(\rho_{i}^{2})<\tr(\sigma_{i}^{2})$ for both $i=1,2$. 
\end{itemize}
Note that {the} first condition is satisfied for the scenario when
each quantum state is appended{, respectively, by a quantum state
with the same purity (e.g., when they are both appended by} same
ancillary state $\tau$. {On the other hand,} the second/ third
condition {is satisfied for the} scenario of tensor powers, i.e.,
$\rho_{1}=\rho_{2}$ etc., cf., Eq.~\eref{Eq:MultiplicativeTensorPower}.
When none of the above conditions is satisfied, it is easy to see
that we must have the r.h.s. of Eq.~\eref{eq:maxeq} {larger}
than its l.h.s., and hence super-multiplicativity. For example, if
$\tr(\rho_{1}^{2})>\tr(\sigma_{1}^{2})$ and $\tr(\rho_{2}^{2})<\tr(\sigma_{2}^{2})$,
then the r.h.s. of \eref{eq:maxeq} becomes $\tr(\rho_{1}^{2})\tr(\sigma_{2}^{2})$
which has to be {larger} than both $\tr(\rho_{1}^{2})\tr(\rho_{2}^{2})$
or $\tr(\sigma_{1}^{2})\tr(\sigma_{2}^{2})$ by assumption. The proof
for the case when $\tr(\rho_{1}^{2})<\tr(\sigma_{1}^{2})$ and $\tr(\rho_{2}^{2})>\tr(\sigma_{2}^{2})$
proceeds analogously. \end{proof}


\begin{theorem} {The measure $\Fc$ is generally supermultiplicative.}
\end{theorem}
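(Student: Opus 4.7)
The plan is to reduce the claim to a polynomial inequality using the super-multiplicativity of $\Fn$ established earlier, and then exploit bilinearity to verify only a small number of cases. Since the definition $\Fc(\rho,\sigma) = \tfrac{1-r}{2} + \tfrac{1+r}{2}\Fn(\rho,\sigma)$ is affine in $\Fn$, it is natural to pass through $\Fn$. I would first note that if $\rho_i, \sigma_i$ act on a $d_i$-dimensional space, then $\rho_1\otimes\rho_2$ and $\sigma_1\otimes\sigma_2$ act on a $d_1 d_2$-dimensional space, so the relevant parameter for the composite system is $r_{12} = 1/(d_1 d_2 - 1)$, whereas $r_i = 1/(d_i-1)$ on the factors.

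Setting $x_i := \Fn(\rho_i,\sigma_i) \in [0,1]$, the super-multiplicativity of $\Fn$ gives $\Fn(\rho_1\otimes\rho_2, \sigma_1\otimes\sigma_2) \geq x_1 x_2$. Multiplying this by $(1+r_{12})/2 > 0$ and adding $(1-r_{12})/2$, it suffices to prove
\begin{equation*}
g(x_1,x_2) := \frac{1-r_{12}}{2} + \frac{1+r_{12}}{2}\,x_1 x_2 - \left(\tfrac{1-r_1}{2}+\tfrac{1+r_1}{2}x_1\right)\!\left(\tfrac{1-r_2}{2}+\tfrac{1+r_2}{2}x_2\right) \geq 0
\end{equation*}
for all $(x_1,x_2) \in [0,1]^2$. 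The decisive structural observation is that $g$ is bilinear in $(x_1,x_2)$: expanding it yields only $1$, $x_1$, $x_2$, and $x_1 x_2$ terms, since each factor is affine in a separate variable. Because a bilinear function on a rectangle is linear along every edge, its extrema on $[0,1]^2$ are attained only at the four corners, so it suffices to verify $g(0,0), g(1,0), g(0,1), g(1,1) \geq 0$ for all admissible $d_1, d_2 \geq 2$.

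The remaining step is to check the four corner inequalities; this is the only place where dimension-dependent algebra appears and will be the main obstacle, though it is routine rather than conceptually hard. After clearing denominators, each corner becomes a polynomial inequality in $d_1, d_2$: for example, $g(0,0) \geq 0$ reduces to $d_1^2 d_2^2 + 2(d_1 + d_2) \geq 5\, d_1 d_2$, which is immediate for $d_1 d_2 \geq 5$ and checked by hand in the remaining case $d_1 = d_2 = 2$, while $g(1,1) \geq 0$ reduces to $4(d_1 - 1)(d_2 - 1) \geq d_1 d_2$ (saturated at $d_1=d_2=2$, strict otherwise), and the mixed corners $g(1,0), g(0,1)$ are handled symmetrically. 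Finally, extension from $n=2$ to arbitrary $n \geq 2$ tensor factors is a one-line induction: applying the $n=2$ bound with $\rho_1\otimes\cdots\otimes\rho_{n-1}$ in place of $\rho_1$ and $\rho_n$ in place of $\rho_2$ propagates super-multiplicativity through additional factors.
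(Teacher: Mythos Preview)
Your approach is essentially the same as the paper's: both reduce, via the known super-multiplicativity of $\Fn$, to showing that a single scalar function of $(x_1,x_2)=(\Fn(\rho_1,\sigma_1),\Fn(\rho_2,\sigma_2))$ is nonnegative on $[0,1]^2$. The paper proves this by showing the function is monotone increasing in each variable and then checking only the corner $(1,1)$, whereas you observe the function is bilinear and check all four corners; these are minor variants of the same argument, and your bilinearity observation is arguably cleaner. One small slip: $g(1,1)$ is identically $0$ (since $\Fn=1$ forces $\Fc=1$ on both sides), so it does not reduce to $4(d_1-1)(d_2-1)\ge d_1 d_2$ as you wrote---but this is harmless, as $g(1,1)=0\ge 0$ trivially, and your treatment of the other corners is correct.
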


\begin{proof} Let $d_{1}\defeq\dim\rho_{1}=\dim\sigma_{1}$, $d_{2}\defeq\dim\rho_{2}=\dim\sigma_{2}$
and, in terms of these, define $r\defeq(d_{1}-1)^{-1}$, $s\defeq(d_{2}-1)^{-1}$
and $t\defeq(d_{1}d_{2}-1)^{-1}$ {[}or, equivalently, $t=rs/(1+r+s)${]},
in such a way that $r,s\in(0,1]$. Then, the statements of the supermultiplicativity
of $\Fc$ (to be proved) and of $\Fn$ (proved in~\cite{mendonca2008})
can be expressed, respectively, as

\begin{eqnarray}
2(1-t)+2(1+t)\Fn(\rho_{1}\otimes\rho_{2},\sigma_{1}\otimes\sigma_{2}) & \geq\nonumber \\
\left[(1-r)+(1+r)x\right]\left[(1-s)+(1+s)y\right]\,,\label{eq:tbp}\\
2(1-t)+2(1+t)\Fn(\rho_{1}\otimes\rho_{2},\sigma_{1}\otimes\sigma_{2}) & \geq\nonumber \\
2(1-t)+2(1+t)xy\,,\label{eq:ap}
\end{eqnarray}
where, for brevity, we have defined $x\defeq\Fn(\rho_{1},\sigma_{1})\in[0,1]$
and $y\defeq\Fn(\rho_{2},\sigma_{2})\in[0,1]$. In what follows, the
validity of inequality~inequality~(\ref{eq:tbp}) is established
by showing that the r.h.s. of~(\ref{eq:ap}) dominates the r.h.s.
of~inequality~(\ref{eq:tbp}); an inequality that can be written
as 
\begin{equation}
2(1-t)-(1-r)(1-s)\geq f_{r,s}(x,y)\label{eq:tbp2}
\end{equation}
where we have defined the functions 
\begin{eqnarray}
f_{r,s}(x,y) & \defeq(1+r)(1-s)x+(1-r)(1+s)y\nonumber \\
 & -\left[2(1+t)-(1+r)(1+s)\right]xy
\end{eqnarray}

To see that inequality~(\ref{eq:tbp2}) holds, it is our interest
to find out the maximum of the functions $f_{r,s}(x,y)$. The extreme
point of $f_{r,s}(x,y)$ is given by respectively setting the partial
derivative of $x$ and $y$ to zero. It can be verified that, unless
$r,s\in\{0,1\}$, the extreme points of $f_{r,s}(x,y)$ lie outside
the domain $x,y\in[0,1]$. In the former cases, the extreme points
lie on the boundaries of the domain. To identify them, note that the
functions $f_{r,s}(x,y)$ are increasing both in $x$ and $y$ for
all values of parameters $r,s\in[0,1]$. This follows, for example,
from the observation that their partial derivatives with respect to
$x$ and $y$ are always linear and assume non-negative values in
the extremes of the domain $x,y\in[0,1]$. Indeed, 
\begin{eqnarray}
\left.\frac{\partial f_{r,s}(x,y)}{\partial x}\right|_{y=0} & =(1+r)(1-s)\geq0\nonumber \\
\left.\frac{\partial f_{r,s}(x,y)}{\partial x}\right|_{y=1} & =\frac{2r(1+r)}{1+r+s}>0\nonumber \\
\left.\frac{\partial f_{r,s}(x,y)}{\partial y}\right|_{x=0} & =(1-r)(1+s)\geq0\nonumber \\
\left.\frac{\partial f_{r,s}(x,y)}{\partial y}\right|_{x=1} & =\frac{2s(1+s)}{1+r+s}>0
\end{eqnarray}
Thanks to that, it suffices to verify the validity of inequality~(\ref{eq:tbp2})
for $x=1$ and $y=1$, where $f_{r,s}(x,y)$ is maximal. In this case,
however, a straightforward simplification process shows that the inequality
is satisfied with saturation. \end{proof}


\begin{theorem} The measure $\Fq$ is generally super-multiplicative
under tensor products{, but is multiplicative when appended by an
uncorrelated ancillary state, or when considering tensor powers of
the same states.} \end{theorem}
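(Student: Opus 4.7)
The plan is to exploit the fundamental multiplicative identity of fractional powers of tensor products, namely $(\rho_{1}\otimes\rho_{2})^{s}=\rho_{1}^{s}\otimes\rho_{2}^{s}$, together with the multiplicativity of the trace over tensor products. Combining these with the definition~\eref{Eq:Fq} yields, for any $s\in[0,1]$,
\begin{equation}
\tr\!\left[(\rho_{1}\otimes\rho_{2})^{s}(\sigma_{1}\otimes\sigma_{2})^{1-s}\right]=\tr(\rho_{1}^{s}\sigma_{1}^{1-s})\,\tr(\rho_{2}^{s}\sigma_{2}^{1-s}).
\end{equation}
Both factors are non-negative (as traces of products of positive semidefinite operators), so I can apply the elementary inequality $\min_{s}[f(s)g(s)]\ge[\min_{s}f(s)][\min_{s'}g(s')]$ for non-negative $f,g$, obtaining
\begin{equation}
\Fq(\rho_{1}\otimes\rho_{2},\sigma_{1}\otimes\sigma_{2})\ge\Fq(\rho_{1},\sigma_{1})\,\Fq(\rho_{2},\sigma_{2}),
\end{equation}
which establishes super-multiplicativity in general. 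Iteration over $n$ factors then gives the full statement of Eq.~\eref{Eq:SuperMultiplicative} for $\Fq$.

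For the ancilla case $\rho_{2}=\sigma_{2}=\tau$, I would observe that $\tr(\tau^{s}\tau^{1-s})=\tr(\tau)=1$ for every $s\in[0,1]$, so the second factor in the above identity is identically unity; the minimisation over $s$ reduces exactly to that defining $\Fq(\rho,\sigma)$, yielding equality. For the tensor-power case $\rho_{1}=\rho_{2}=\rho$ and $\sigma_{1}=\sigma_{2}=\sigma$ (and its obvious generalisation to $n$ copies), the function to be minimised becomes $\bigl[\tr(\rho^{s}\sigma^{1-s})\bigr]^{n}$. Because the map $x\mapsto x^{n}$ is strictly increasing on $[0,\infty)$, the minimum of this power is attained at the same optimal $s^{\star}$ as the minimum of $\tr(\rho^{s}\sigma^{1-s})$ itself, so $\Fq(\rho^{\otimes n},\sigma^{\otimes n})=[\Fq(\rho,\sigma)]^{n}$, in line with Eq.~\eref{Eq:MultiplicativeTensorPower}.

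The only mildly subtle step is the elementary inequality on minima of products, for which I would either give a one-line proof (for any $s$, $f(s)g(s)\ge\min_{t}f(t)\cdot\min_{t'}g(t')$ by non-negativity, then take $\min_{s}$ on the left) or invoke it as standard. No hard obstacle arises, since none of the technical difficulties that appear for $\Fuj$ (products of matrix square roots, Uhlmann's theorem) are present here: the $\Fq$ definition commutes with tensor products at the level of each fixed $s$, and the only non-trivial move is handling the single remaining minimisation. I expect the most delicate point in the write-up to be making explicit when the super-multiplicativity inequality is strict versus saturated, and in particular verifying that the ancilla and tensor-power cases are precisely the sub-cases in which the two separate minimisations happen to be optimised at a common $s^{\star}$.
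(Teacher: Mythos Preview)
Your proof is correct and shares the same factorisation step as the paper, namely reducing $\Fq(\rho_{1}\otimes\rho_{2},\sigma_{1}\otimes\sigma_{2})$ to $\min_{s}f_{1}(s)f_{2}(s)$ with $f_{i}(s)=\tr(\rho_{i}^{s}\sigma_{i}^{1-s})$. The ancilla and tensor-power arguments are also essentially identical to the paper's.

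Where you differ is in the proof of super-multiplicativity itself. You invoke the one-line inequality $\min_{s}[f(s)g(s)]\ge[\min_{s}f(s)][\min_{s'}g(s')]$ for non-negative $f,g$, which is entirely sufficient for the theorem as stated. The paper instead appeals to the convexity of each $f_{i}$ (established in~\cite{audenaert2007discriminating}) to argue that the minimiser sets $\mathcal{S}_{i}\subseteq[0,1]$ are intervals, and then runs a case analysis on whether $\mathcal{S}_{1}\cap\mathcal{S}_{2}$ is empty. Your route is more elementary and avoids invoking convexity altogether; the paper's route, while heavier, pins down precisely when equality is attained in the general case (namely, when the two minimiser intervals overlap), which is the ``delicate point'' you anticipated but did not actually need to resolve for the statement at hand.
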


\begin{proof} For given density matrices $\rho_{1}$, $\rho_{2}$,
$\sigma_{1}$ and $\sigma_{2}$, we see that 
\begin{eqnarray}
\Fq(\rho_{1}\otimes\rho_{2},\sigma_{1}\otimes\sigma_{2}) & =\min_{s}\,\,\tr\left[\left(\rho_{1}\otimes\rho_{s}\right)^{s}\left(\sigma_{1}\otimes\sigma_{2}\right)^{1-s}\right]\nonumber \\
 & =\min_{s}\,\,\tr(\rho_{1}^{s}\,\sigma_{1}^{1-s})\tr(\rho_{2}^{s}\,\sigma_{2}^{1-s})\nonumber \\
 & =\min_{s}\,\,f_{1}(s)\,f_{2}(s)\nonumber \\
 & =f_{1}(s^{*})f_{2}(s^{*})
\end{eqnarray}
where $f_{i}(s)\defeq\tr(\rho_{i}^{s}\,\sigma_{i}^{1-s})$, $s^{*}$
is a minimizer of the function $f_{1}(s)\,f_{2}(s)$, and it is worth
reminding that each $f_{i}(s)$ is a convex function of $s$~\cite{audenaert2007discriminating}.

On the other hand, it also follows from the definition of $\Fq$ that
\begin{eqnarray}
\Fq(\rho_{1},\sigma_{1})\Fq(\rho_{2},\sigma_{2}) & =\min_{s_{1}}\,\,f_{1}(s_{1})\,\min_{s_{2}}\,\,f_{2}(s_{2})\nonumber \\
 & =f_{1}(s_{1}^{*})\,f_{2}(s_{2}^{*}),
\end{eqnarray}
where $s_{i}^{*}\in\mathcal{S}_{i}$ is a minimizer of $f_{i}(s_{i})$,
and $\mathcal{S}_{i}\subseteq[0,1]$ is the set of minimizers of $f_{i}(s_{i})$.
Note that the convexity of $f_{i}$ guarantees that $\mathcal{S}_{i}$
is a convex interval in $[0,1]$.

Evidently, for general density matrices $\rho_{1}$, $\rho_{2}$,
$\sigma_{1}$ and $\sigma_{2}$, the set of minimizers for $f_{1}(s)$
and $f_{2}(s)$ differ, i.e., $\mathcal{S}_{1}\neq\mathcal{S}_{2}$.
Without loss of generality, let us assume that $\min\mathcal{S}_{1}\le\min\mathcal{S}_{2}$.
There are now two cases to consider, if $\max\mathcal{S}_{1}\ge\min\mathcal{S}_{2}$,
we must also have $\mathcal{S}_{2}\cap\mathcal{S}_{1}\neq\{\}$, and
hence the minimizer $s^{*}$ for $f_{1}(s)f_{2}(s)$ must also minimize
$f_{1}$ and $f_{2}$. In this case, we have, 
\begin{equation}
\Fq(\rho_{1}\otimes\rho_{2},\sigma_{1}\otimes\sigma_{2})=\Fq(\rho_{1},\sigma_{1})\Fq(\rho_{2},\sigma_{2}),
\end{equation}
meaning that multiplicativity holds true for this set of density matrices.

In the event that $\max\mathcal{S}_{1}<\min\mathcal{S}_{2}$, we have
$\mathcal{S}_{2}\cap\mathcal{S}_{1}=\{\}$, which implies that any
minimizer $s^{*}$ for $f_{1}(s)f_{2}(s)$ must be such that {$s^{*}\not\in\S_{1}$
and/or $s^{*}\not\in\S_{2}$}. As a result, we must have $f_{i}(s^{*})\ge f_{i}(s_{i}^{*})$
for all $i$, and thus 
\begin{eqnarray}
\Fq(\rho_{1}\otimes\rho_{2},\sigma_{1}\otimes\sigma_{2}) & =f_{1}(s^{*})f_{2}(s^{*}),\nonumber \\
 & \ge f_{1}(s_{1}^{*})f_{2}(s_{2}^{*}),\nonumber \\
 & =\Fq(\rho_{1},\sigma_{1})\Fq(\rho_{2},\sigma_{2}),
\end{eqnarray}
which demonstrates the super-multiplicativity of $\Fq$.

For tensor powers of the same state, cf. Eq.~\eref{Eq:MultiplicativeTensorPower},
the fact that $\tr(A\otimes B)=\left(\tr\,A\right)\left(\tr\,B\right)$
and the above arguments make it evident that the minimizer for a single
copy is also the minimizer for an arbitrary number of copies. Thus,
$\Fq$ is multiplicative under tensor powers. Similarly, for an uncorrelated
ancillary state $\tau$, the definition of $\Fq$, and axiom (J1b)
immediately imply that the measure is multiplicative when each quantum
state is appended by $\tau$, i.e., $\Fq(\rho\otimes\tau,\sigma\otimes\tau)=\Fq(\rho,\sigma)$.
\end{proof}


\begin{theorem} The measure $\Fam$ is multiplicative under uncorrelated
ancilla state, that is 
\begin{equation}
\Fam(\rho\otimes\tau,\sigma\otimes\tau)=\Fam(\rho,\sigma)\,
\end{equation}
\end{theorem}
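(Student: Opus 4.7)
The plan is to unfold the definition of $\Fam$ directly on the tensor-product states, exploit the multiplicativity of the trace over tensor products, and observe that a common factor cancels between numerator and denominator. Specifically, recall that $\Fam(\rho,\sigma) = \frac{2\tr(\rho\sigma)}{\tr(\rho^2)+\tr(\sigma^2)}$, so I only need to track three Hilbert--Schmidt inner products: $\tr\bigl((\rho\otimes\tau)(\sigma\otimes\tau)\bigr)$, $\tr\bigl((\rho\otimes\tau)^2\bigr)$, and $\tr\bigl((\sigma\otimes\tau)^2\bigr)$.

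First, I would use the identity $(A\otimes B)(C\otimes D) = (AC)\otimes(BD)$ together with $\tr(X\otimes Y) = \tr(X)\tr(Y)$ to rewrite each of these three quantities. This immediately gives
\begin{equation}
\tr\bigl((\rho\otimes\tau)(\sigma\otimes\tau)\bigr) = \tr(\rho\sigma)\,\tr(\tau^2),
\end{equation}
and analogously $\tr\bigl((\rho\otimes\tau)^2\bigr) = \tr(\rho^2)\tr(\tau^2)$ and $\tr\bigl((\sigma\otimes\tau)^2\bigr) = \tr(\sigma^2)\tr(\tau^2)$.

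Next, I would substitute these into the definition of $\Fam$ for the enlarged states, factor $\tr(\tau^2)$ out of the denominator as $\bigl[\tr(\rho^2)+\tr(\sigma^2)\bigr]\tr(\tau^2)$, and cancel it against the identical factor in the numerator. The result is precisely $\Fam(\rho,\sigma)$, which establishes the claim.

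There is no real obstacle here: the statement reduces to a one-line algebraic cancellation once the tensor-product trace identities are invoked, and $\tau$ need not even be normalized for the cancellation to occur, so long as $\tr(\tau^2) \neq 0$ (which holds for any nonzero Hermitian $\tau$, and in particular any density matrix). The only point worth flagging explicitly is that this cancellation relies on $f$ being homogeneous in the sense that $f[\lambda a, \lambda b] = \lambda f[a,b]$, which the arithmetic mean satisfies; this is why the analogous proof would also go through for $\Fgm$ but not for general symmetric $f$ in Eq.~\eref{eq:Ff}.
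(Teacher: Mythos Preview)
Your proof is correct and follows essentially the same approach as the paper: both invoke the tensor-product identities $(A\otimes B)(C\otimes D)=AC\otimes BD$ and $\tr(A\otimes B)=\tr(A)\tr(B)$, then cancel the common factor $\tr(\tau^{2})$ from numerator and denominator. Your additional remark about homogeneity of $f$ is a nice observation but not part of the paper's proof.
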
 \begin{proof} The proof follows trivially from the
application of the tensor product identities 
\begin{eqnarray}
(A\otimes B)(C\otimes D) & = & AC\otimes BD\label{eq:otimesmult}\\
\tr(A\otimes B) & = & \tr(A)\tr(B)\label{eq:trotimes}
\end{eqnarray}
Explicitly, 
\begin{eqnarray}
\Fam(\rho\otimes\tau,\sigma\otimes\tau) & = & \frac{2\tr(\rho\,\sigma\otimes\tau^{2})}{\tr(\rho^{2}\otimes\tau^{2})+\tr(\sigma^{2}\otimes\tau^{2})}\nonumber \\
 & = & \frac{2\tr(\rho\,\sigma)\tr(\tau^{2})}{[\tr(\rho^{2})+\tr(\sigma^{2})]\tr(\tau^{2})}\nonumber \\
 & = & \frac{2\tr(\rho\,\sigma)}{[\tr(\rho^{2})+\tr(\sigma^{2})]}\nonumber \\
 & = & \mathcal{F}_{AM}(\rho,\sigma),
\end{eqnarray}
where Eq.~(\ref{eq:otimesmult}) was used to establish the second
equality and Eq.~(\ref{eq:trotimes}) was used to establish the third
equality. \end{proof}

\subsection*{Proofs of average fidelity properties}

Here, we give the proofs showing when the respective fidelity measure
satisfies Eq.~\eref{Eq:AveVsMixed}. Throughout, as mentioned in
Section~\ref{Sec:PureVsMixed} and Section~\ref{Sec:PureVsMixed-1},
we will assume that the error state $\rho_{0}$ is orthogonal to {\em
all} of the signal state, i.e., 
\begin{equation}
\rho_{j}\rho_{0}=\rho_{0}\rho_{j}=0\quad\forall\,j\neq0.\label{Eq:OrthoError}
\end{equation}
If, in addition to Eq.~\eref{Eq:OrthoError}, all signal states
are orthogonal to each other, i.e., 
\begin{equation}
\rho_{j}\rho_{k}=0\quad\forall\,j\neq k,\label{Eq:OrthoAll}
\end{equation}
then all $\rho_{j}$ commute pairwise, and hence diagonalizable in
the same basis. In this case, we will see that Eq.~\eref{Eq:AveVsMixed}
holds for $\Fuj$, $\Fq$ and $\Fa$ independent of the purity of
the signal state $\rho_{j}$. \begin{proof} Recall from the main
text that in the present discussion, the average state-by-state fidelity
is 
\begin{eqnarray}
\F_{{\rm ave}}(\bm{p},\bm{\rho},\bm{\sigma})=\sum_{i\neq0}p_{i}\F(\rho_{i},\sigma_{i}).\label{Eq:AveF}
\end{eqnarray}
Note, however, that Eq.~\eref{Eq:OrthoAll} implies that for all
$i\neq0$, 
\begin{eqnarray}
\Fuj(\rho_{i},\sigma_{i}) & =\left(\tr\sqrt{\sqrt{\rho_{i}}\sigma_{i}\sqrt{\rho_{i}}}\right)^{2},\nonumber \\
 & =\left(\tr\sqrt{\sqrt{\rho_{i}}\left[\epsilon\rho_{0}+(1-\epsilon)\rho_{i}\right]\sqrt{\rho_{i}}}\right)^{2},\nonumber \\
 & =\left(\sqrt{1-\epsilon}\tr\sqrt{\sqrt{\rho_{i}}\rho_{i}\sqrt{\rho_{i}}}\right)^{2},\nonumber \\
 & =(1-\epsilon)\left(\tr\rho_{i}\right)^{2}=1-\epsilon\label{Eq:F1:State-by-state}
\end{eqnarray}
while 
\begin{eqnarray*}
\Fuj(\rho,\sigma) & =\left(\tr\sqrt{\sqrt{\rho}\sigma\sqrt{\rho}}\right)^{2},\\
 & =\left(\tr\sqrt{\sqrt{\rho}\left[\epsilon\rho_{0}+(1-\epsilon)\rho\right]\sqrt{\rho}}\right)^{2},\\
 & =\left(\sqrt{1-\epsilon}\tr\sqrt{\sqrt{\rho}\,\rho\,\sqrt{\rho}}\right)^{2},\\
 & =(1-\epsilon)\left(\tr\sqrt{\rho^{2}}\right)^{2}=1-\epsilon
\end{eqnarray*}
Substituting Eq.~\eref{Eq:F1:State-by-state} to Eq.~\eref{Eq:AveF}
and comparing the resulting expression with the above equation then
verifies Eq.~\eref{Eq:AveVsMixed} for $\Fuj$.

Next, we shall prove the equivalence for $\Fq$. As with $\Fuj$,
Eq.~\eref{Eq:OrthoAll} implies that for all $i\neq0$, 
\begin{eqnarray}
\Fq(\rho_{i},\sigma_{i}) & =\min_{0\le s\le1}\tr(\rho_{i}^{s}\,\sigma_{i}^{1-s}),\nonumber \\
 & =\min_{0\le s\le1}\tr(\rho_{i}^{s}\,\left[\epsilon\rho_{0}+(1-\epsilon)\rho_{i}\right]^{1-s}),\nonumber \\
 & =\min_{0\le s\le1}\tr(\rho_{i}^{s}\,\left[\epsilon^{1-s}\rho_{0}^{1-s}+(1-\epsilon)^{1-s}\rho_{i}^{1-s}\right]),\nonumber \\
 & =\min_{0\le s\le1}(1-\epsilon)^{1-s}\tr(\rho_{i})=1-\epsilon,\label{Eq:Fq:State-by-state}
\end{eqnarray}
where the last equality follows from the fact that $0\le1-\epsilon\le1$,
and thus $(1-\epsilon)^{1-s}\ge1-\epsilon$ for all $0\le s\le1$.

In a similar manner, the simultaneous diagonalizability of $\rho$
and $\rho_{0}$ gives 
\begin{eqnarray}
\Fq(\rho,\sigma) & =\min_{0\le s\le1}\tr(\rho^{s}\,\sigma^{1-s}),\nonumber \\
 & =\min_{0\le s\le1}\tr\left\{ \rho^{s}\,\left[\epsilon\rho_{0}+(1-\epsilon)\rho\right]^{1-s}\right\} ,\nonumber \\
 & =\min_{0\le s\le1}\tr\left\{ \rho^{s}\,\left[\epsilon^{1-s}\rho_{0}^{1-s}+(1-\epsilon)^{1-s}\rho^{1-s}\right]\right\} ,\nonumber \\
 & =\min_{0\le s\le1}(1-\epsilon)^{1-s}\tr(\rho)=1-\epsilon.
\end{eqnarray}
Substituting Eq.~\eref{Eq:Fq:State-by-state} into Eq.~\eref{Eq:AveF}
and comparing with the last equation immediately leads to the verification
of Eq.~\eref{Eq:AveVsMixed} for $\Fq$ whenever Eq.~\eref{Eq:OrthoAll}
holds.

To prove the equivalence for $\Fa$, we note from Eq.~\eref{Eq:OrthoAll}
that for all $i\neq0$, 
\begin{eqnarray}
\Fa(\rho_{i},\sigma_{i}) & =\left[\tr\left(\sqrt{\rho_{i}}\sqrt{\sigma_{i}}\right)\right]^{2},\nonumber \\
 & =\left[\tr\left(\sqrt{\rho_{i}}\sqrt{\epsilon\rho_{0}+(1-\epsilon)\rho_{i}}\right)\right]^{2},\nonumber \\
 & =\left\{ \tr\left[\sqrt{\rho_{i}}\left(\sqrt{\epsilon}\sqrt{\rho_{0}}+\sqrt{1-\epsilon}\sqrt{\rho_{i}}\right)\right]\right\} ^{2},\nonumber \\
 & =\left[\sqrt{1-\epsilon}\tr\left(\rho_{i}\right)\right]^{2}=1-\epsilon.\label{Eq:Fa:State-by-state}
\end{eqnarray}

Similarly, Eq.~\eref{Eq:OrthoAll} implies that 
\begin{eqnarray}
\Fa(\rho,\sigma) & =\left[\tr\left(\sqrt{\rho}\sqrt{\sigma}\right)\right]^{2},\nonumber \\
 & =\left[\tr\left(\sqrt{\rho}\sqrt{\epsilon\rho_{0}+(1-\epsilon)\rho}\right)\right]^{2},\nonumber \\
 & =\left\{ \tr\left[\sqrt{\rho}\left(\sqrt{\epsilon}\sqrt{\rho_{0}}+\sqrt{1-\epsilon}\sqrt{\rho}\right)\right]\right\} ^{2},\nonumber \\
 & =\left[\sqrt{1-\epsilon}\tr\left(\rho\right)\right]^{2}=1-\epsilon.
\end{eqnarray}
Hence, by substituting Eq.~\eref{Eq:Fa:State-by-state} into Eq.~\eref{Eq:AveF}
and comparing with the last equation immediately leads to the verification
of Eq.~\eref{Eq:AveVsMixed} for $\Fa$ whenever Eq.~\eref{Eq:OrthoAll}
holds. \end{proof}

Notice that if instead of Eq.~\eref{Eq:OrthoAll}, we have the
promise that 
\begin{equation}
\tr(\rho_{j}^{2})\ge\tr(\sigma_{j}^{2}),\quad\tr(\rho^{2})\ge\tr(\sigma^{2}),
\end{equation}
then 
\begin{eqnarray}
\Fmax(\rho,\sigma) & =\frac{\tr(\rho\,\sigma)}{\max\left[\tr(\rho^{2}),\tr(\sigma^{2})\right]},\nonumber \\
 & =\frac{\tr\left\{ \rho\,\left[\epsilon\rho_{0}+(1-\epsilon)\rho\right]\right\} }{\tr(\rho^{2})},\nonumber \\
 & =1-\epsilon.
\end{eqnarray}
Eq.~\eref{Eq:F2:compare} then follows by combining Eq.~\eref{Eq:F2:State-by-state},
Eq.~\eref{Eq:AveF} and the last equation above.

Finally, note that if instead we have the premise that $\tr(\rho^{2})=1$,
then $\rho=\rho_{j}$, $\sigma=\sigma_{j}$, and there is only term
in the sum of Eq.~\eref{Eq:AveF}. Consequently, we must also have
$\Fn(\rho,\sigma)=\Fn(\rho_{j},\sigma_{j})=\F_{{\rm ave}}(\bm{p},\bm{\rho},\bm{\sigma})$
in this case.

\subsection*{Counterexamples}

\label{App:CountExamples}

We provide here some counterexamples that have been left out in the
main text for showing certain desired properties of various fidelity
measures.

To verify that $\Fam$ is generally not (super)multiplicative, it
suffices to consider $\rho=\frac{1}{5}(\Pi_{0}+4\Pi_{1})$ and $\sigma=\frac{1}{2}\Id_{2}$.
An explicit calculation gives $\Fam(\rho\otimes\rho,\sigma\otimes\sigma)\approx0.702$
while $[\Fam(\rho,\sigma)]^{2}\approx0.718>\Fam(\rho\otimes\rho,\sigma\otimes\sigma)$,
thus showing a violation of the desired (super)multiplicative property.

To see that $\Fc$, $\Fgm$ and $\Fam$ can be contractive under the
partial trace operation, it suffices to consider the following pair
of two-qubit density matrices (written in the product basis): 
\begin{equation}
\rho=\Pi_{0}\otimes\Pi_{0},\quad\sigma=\frac{1}{8}\left(\begin{array}{cccc}
3 & 0 & 0 & \sqrt{3}\\
0 & 4 & 0 & 0\\
0 & 0 & 0 & 0\\
\sqrt{3} & 0 & 0 & 1
\end{array}\right),
\end{equation}
and consider the partial trace of $\rho$ and $\sigma$ over the first
qubit subsystem.

\section{Metric properties}

\label{Sec:Proofs-1}

In this Appendix, detailed proofs are obtained for the metric properties
of fidelities.

\label{App:Met}

\subsection*{General definitions}

For a given fidelity measure $\F(\rho,\sigma)$, let us define the
following functionals of $\F$: 
\begin{eqnarray}
A[\F(\rho,\sigma)] & \defeq\arccos[{\sqrt{\F(\rho,\sigma)}}],\nonumber \\
B[\F(\rho,\sigma)] & \defeq\sqrt{2-2\sqrt{\F(\rho,\sigma)}},\nonumber \\
C[\F(\rho,\sigma)] & \defeq\sqrt{1-\F(\rho,\sigma)}.
\end{eqnarray}

In what follows, we will provide the proofs of the metric properties
of these functionals of $\F$ for the various fidelity measures discussed
in Sec.~\ref{Sec:Metric}. By a metric, we mean a mapping $\mathfrak{D}$
on a set $S$ such that for every $a,b,c\in S$, the mapping $\mathfrak{D}:S\times S\to\mathbb{R}$
satisfies the following properties: 
\begin{enumerate}
\item[(M1)] $\mathfrak{D}(a,b)\geq0$ (Nonnegativity)\,, 
\item[(M2)] $\mathfrak{D}(a,b)=0$ iff $a=b$ (Identity of Indiscernible)\,, 
\item[(M3)] $\mathfrak{D}(a,b)=\mathfrak{D}(b,a)$ (Symmetry)\,, 
\item[(M4)] $\mathfrak{D}(a,c)\leq\mathfrak{D}(a,b)+\mathfrak{D}(b,c)$ (Triangle
Inequality)\,. 
\end{enumerate}
Our main tool is a simplified version of Schoenberg's theorem~\cite{Schoenberg1938},
reproduced as follows. \begin{theorem}[Schoenberg]\label{thm:schoenberg}
Let $\mathcal{X}$ be a nonempty set and $K:\mathcal{X}\times\mathcal{X}\to\mathbb{R}$
a function such that $K(x,y)=K(y,x)$ and $K(x,y)\geq0$ for all $x,y\in\mathcal{X}$,
with saturation iff $x=y$. If the implication 
\begin{equation}
\sum_{i=1}^{n}{c_{i}}=0\Rightarrow\sum_{i,j=1}^{n}{K(x_{i},x_{j})c_{i}c_{j}}\leq0\label{eq:ndk}
\end{equation}
holds for all $n\geq2$, $\{x_{1},\ldots,x_{n}\}\subseteq\mathcal{X}$
and $\{c_{1},\ldots,c_{n}\}\subseteq\mathbb{R}$, then $\sqrt{K}$
is a metric. \end{theorem}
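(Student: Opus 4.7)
The plan is to verify the four metric axioms (M1)--(M4) for $\mathfrak{D}(x,y)\defeq\sqrt{K(x,y)}$. Three of these are essentially free from the hypotheses on $K$: nonnegativity (M1) follows from $K\geq0$, the identity-of-indiscernibles (M2) is precisely the saturation clause $K(x,y)=0\iff x=y$, and symmetry (M3) is inherited from $K(x,y)=K(y,x)$. In particular, we record for later use that $K(x,x)=0$ for every $x\in\mathcal{X}$.

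The substantial content is the triangle inequality (M4), and the plan is to extract it from the conditional negative semi-definiteness hypothesis~(\ref{eq:ndk}) by a judicious choice of test coefficients. Given $a,b,c\in\mathcal{X}$, I would take $n=3$, set $(x_1,x_2,x_3)=(a,b,c)$, and choose $(c_1,c_2,c_3)=(s,-(s+t),t)$ for parameters $s,t>0$. Since $c_1+c_2+c_3=0$, the hypothesis applies, and after expanding $\sum_{i,j}K(x_i,x_j)c_ic_j\leq0$ and using $K(x_i,x_i)=0$, one obtains, upon dividing by $2st$, the three-point bound
\begin{equation}
K(a,c)\leq\frac{s+t}{t}\,K(a,b)+\frac{s+t}{s}\,K(b,c).
\end{equation}

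The final step is to minimize the right-hand side over $s,t>0$. Parametrizing $u=s/t$ and setting the derivative to zero pinpoints $u^{\ast}=\sqrt{K(b,c)/K(a,b)}$ (assuming both $K(a,b)$ and $K(b,c)$ are strictly positive), at which the right-hand side collapses to $\bigl(\sqrt{K(a,b)}+\sqrt{K(b,c)}\bigr)^{2}$. Taking square roots then yields $\sqrt{K(a,c)}\leq\sqrt{K(a,b)}+\sqrt{K(b,c)}$, as desired. The degenerate cases $K(a,b)=0$ or $K(b,c)=0$ are handled separately: by (M2) they force $a=b$ or $b=c$, and (M4) is then immediate from (M3).

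Conceptually, the only real obstacle is recognizing that the quadratic form hypothesis should be probed with a two-parameter family of coefficients rather than fixed values; once this ansatz is made, everything reduces to an elementary one-variable optimization. This is precisely the elementary shadow of the more abstract Hilbert-space embedding that underlies Schoenberg's original theorem: the implication~(\ref{eq:ndk}) guarantees that $K$ behaves like the squared distance of a Euclidean embedding, and the above calculation is just the explicit verification that such an embedding respects the triangle inequality.
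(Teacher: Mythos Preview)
Your argument is correct. The paper, however, does not actually prove this theorem: it is quoted verbatim as a tool from Schoenberg's original work~\cite{Schoenberg1938} and then invoked as a black box in the subsequent metric proofs for $C[\Fn]$, $C[\Fc]$, $C[\Fgm]$, $C[\Fa]$, and $B[\Fa]$. So there is no ``paper's own proof'' to compare against.

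That said, your direct verification is the standard elementary route and is worth recording. The key insight---probing the conditional negative semi-definiteness with the two-parameter family $(s,-(s+t),t)$ and then optimizing over the ratio $s/t$---is exactly what reduces the quadratic-form hypothesis to the squared triangle inequality. One could equivalently fix $(c_1,c_2,c_3)=(\sqrt{K(b,c)},-\sqrt{K(a,b)}-\sqrt{K(b,c)},\sqrt{K(a,b)})$ from the outset and avoid the optimization step, but your approach makes clearer why that particular choice is optimal. The handling of the degenerate cases via (M2) is also clean.
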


The theorem has previously been used in~\cite{mendonca2008} to show
that $C[\Fn]$ is a metric for the space of density matrices. However,
the alternative proof for the metric properties of $B[\Fuj(\rho,\sigma)]$
and $C[\Fuj(\rho,\sigma)]$ given in~\cite{mendonca2008} is flawed
due to an erroneous application of the above theorem.


\subsection*{$\Fmax$ metric properties}

Here, we will show that $C[\mathcal{F}_{2}(\rho,\sigma)]$ is a metric
for the space of density matrices. Because $\mathcal{F}_{2}$ complies
with Jozsa's axioms (J1) and (J2), it is immediate that $C[\mathcal{F}_{2}]$
is non-negative (M1), fulfills the indiscernible identity (M2) and
is symmetric (M3). Hence, in order to establish $C[\mathcal{F}_{2}]$
as a genuine metric, we only have to prove that it satisfies the triangle
inequality 
\begin{equation}
C[\mathcal{F}_{2}(\rho,\sigma)]\leq C[\mathcal{F}_{2}(\rho,\tau)]+C[\mathcal{F}_{2}(\tau,\sigma)]\label{eq:triangineqrhosigmatau}
\end{equation}
for \emph{arbitrary} $d$-dimensional density matrices $\rho$, $\sigma$
and $\tau$. To this end, we shall first prove the following lemma:
\begin{lemma}\label{Lemma:trigonometric} For any $\vartheta,\varphi\in[0,2\pi]$
and $p,q\in[0,1]$ 
\begin{eqnarray}
\sqrt{1-pq\cos(\vartheta+\varphi)} & \leq\sqrt{1-p\cos\vartheta}+\sqrt{1-q\cos\varphi},\nonumber \\
\sqrt{1-pq\cos(\vartheta-\varphi)} & \geq\sqrt{1-p\cos\vartheta}-\sqrt{1-q\cos\varphi}\nonumber \\
\,\label{eq:lemmaineq1}
\end{eqnarray}
\end{lemma}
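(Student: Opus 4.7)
The plan is to realize each of the three square-root expressions as a Euclidean distance between three unit vectors $V_{0}, V_{1}, V_{2}$ in a common real Hilbert space, whereupon both claimed inequalities collapse (respectively) to the standard triangle inequality and to its reverse.

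For the upper bound, I would introduce the $3 \times 3$ symmetric matrix
\[
G \defeq \begin{pmatrix} 1 & p\cos\vartheta & pq\cos(\vartheta+\varphi) \\ p\cos\vartheta & 1 & q\cos\varphi \\ pq\cos(\vartheta+\varphi) & q\cos\varphi & 1 \end{pmatrix}.
\]
If I can show $G \succeq 0$, then $G$ is the Gram matrix of three unit vectors $V_{0}, V_{1}, V_{2} \in \mathbb{R}^{3}$, and the identity $\|V_{i} - V_{j}\|^{2} = 2(1 - \langle V_{i}, V_{j} \rangle)$ immediately converts the Euclidean triangle inequality $\|V_{0} - V_{2}\| \le \|V_{0} - V_{1}\| + \|V_{1} - V_{2}\|$, after dividing by $\sqrt{2}$, into the first claimed inequality.

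All that remains is to verify $G \succeq 0$, which I would check via the principal-minors test. The diagonal entries are $1$, and each of the three $2 \times 2$ principal minors has the form $1 - (pq\cos\cdot)^{2} \ge 0$ since $pq \le 1$. For the determinant, a direct cofactor expansion combined with the identity $\cos(\vartheta+\varphi) = \cos\vartheta\cos\varphi - \sin\vartheta\sin\varphi$ yields, after the cross terms cancel, the clean factorization
\[
\det G = (1 - p^{2}\cos^{2}\vartheta)(1 - q^{2}\cos^{2}\varphi) - p^{2} q^{2} \sin^{2}\vartheta \sin^{2}\varphi.
\]
Since $p \le 1$ implies $1 - p^{2}\cos^{2}\vartheta = (1-p^{2}) + p^{2}\sin^{2}\vartheta \ge p^{2}\sin^{2}\vartheta$, and analogously for $\varphi$, a pointwise multiplication of these two non-negative inequalities gives $\det G \ge 0$, and hence $G \succeq 0$.

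The second (reverse) inequality is handled by the same construction with $\cos(\vartheta+\varphi)$ replaced by $\cos(\vartheta-\varphi)$ (equivalently, $\varphi \mapsto -\varphi$); the cross-term cancellation in the determinant is insensitive to this sign flip, so the analogous Gram matrix $G'$ admits the same factorization and is also positive semi-definite, and the reverse triangle inequality $\|V_{0}' - V_{2}'\| \ge \|V_{0}' - V_{1}'\| - \|V_{1}' - V_{2}'\|$ applied to the realizing unit vectors yields the second claim. The only non-routine step in the whole argument is the determinant calculation; it is elementary but calls for some care with the trigonometric bookkeeping, and once the factorization above is in place both halves of the lemma become immediate consequences of $p, q \in [0, 1]$.
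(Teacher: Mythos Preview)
Your proof is correct and takes a genuinely different route from the paper. The paper proceeds in two stages: it first verifies both inequalities in the special case $p=q=1$ via the half-angle identity $\sqrt{1-\cos x}=\sqrt{2}\,|\sin(x/2)|$ and elementary estimates, and then reduces the general case to this one by introducing auxiliary angles $\Theta,\Phi\in[0,\pi]$ with $\cos\Theta=p\cos\vartheta$, $\cos\Phi=q\cos\varphi$, showing that $\pm\cos(\Theta\pm\Phi)\le\pm pq\cos(\vartheta\pm\varphi)$. Your Gram-matrix approach is more structural: a single positive-semidefiniteness check (all principal minors, which you correctly verify\textemdash though note only one of the $2\times 2$ minors literally has the form $1-(pq\cos\cdot)^{2}$; the other two are $1-p^{2}\cos^{2}\vartheta$ and $1-q^{2}\cos^{2}\varphi$, still clearly non-negative) realizes all three square roots simultaneously as Euclidean distances between unit vectors, whereupon both claims are the triangle and reverse triangle inequalities, with the second inequality following from the first construction via $\varphi\mapsto-\varphi$. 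Interestingly, the decisive estimate is the same in both proofs: the paper's step $\sqrt{1-p^{2}\cos^{2}\vartheta}\,\sqrt{1-q^{2}\cos^{2}\varphi}\ge pq\,|\sin\vartheta\sin\varphi|$ is precisely the square root of your $\det G\ge 0$. Your argument is cleaner and avoids the case split; the paper's is more explicit about the trigonometric mechanism.
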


\begin{proof} We start by showing that both inequalities hold in
the special case of $p=q=1$. To see this, note that 
\begin{eqnarray}
\sqrt{1-\cos(\vartheta+\varphi)} & =\sqrt{2}\left|\sin\frac{\vartheta}{2}\cos\frac{\varphi}{2}+\sin\frac{\varphi}{2}\cos\frac{\vartheta}{2}\right|\nonumber \\
 & \leq\sqrt{2}\left|\sin\frac{\vartheta}{2}\cos\frac{\varphi}{2}\right|+\sqrt{2}\left|\sin\frac{\varphi}{2}\cos\frac{\vartheta}{2}\right|\nonumber \\
 & \leq\sqrt{2}\left|\sin\frac{\vartheta}{2}\right|+\sqrt{2}\left|\sin\frac{\varphi}{2}\right|\nonumber \\
 & =\sqrt{1-\cos\vartheta}+\sqrt{1-\cos\varphi}
\end{eqnarray}
where we have used the trigonometric identity $\sqrt{1-\cos{x}}=\sqrt{2}|\sin{\frac{x}{2}}|$
in the first and last lines, the inequality $|x+y|\leq|x|+|y|$ in
the second line, and the fact that cosines are upper bounded by $1$
in the third line.

To see that inequality (\ref{eq:lemmaineq1}) also holds for $p=q=1$,
we rewrite it in the equivalent form 
\begin{equation}
\left|\sin\left(\frac{\vartheta-\varphi}{2}\right)\right|\geq\sin\left(\frac{\vartheta}{2}\right)-\sin\left(\frac{\varphi}{2}\right)\,,\label{eq:ineqsin}
\end{equation}
which is clearly valid if $\sin\frac{\vartheta}{2}\leq\sin\frac{\varphi}{2}$.
If instead $\sin\frac{\vartheta}{2}>\sin\frac{\varphi}{2}$, then
both sides of inequality (\ref{eq:ineqsin}) are non-negative and
can thus be squared to yield an equivalent inequality that can be
simplified to 
\begin{equation}
4\sin\left(\frac{\vartheta}{2}\right)\sin\left(\frac{\varphi}{2}\right)\sin^{2}\left(\frac{\vartheta-\varphi}{4}\right)\geq0.
\end{equation}
Since this clearly holds for $\vartheta,\varphi\in[0,2\pi]$, inequality~(\ref{eq:lemmaineq1})
for $p=q=1$ must also hold.

In order to show Lemma~\ref{Lemma:trigonometric} for general $p,q\in[0,1]$,
let us {\em define} the angles $\Theta,\Phi\in[0,\pi]$ as follows
\begin{eqnarray}
\cos\Theta & =p\cos\vartheta,\quad\sin\Theta=\sqrt{1-p^{2}\cos^{2}\vartheta}\,,\nonumber \\
\cos\Phi & =q\cos\varphi,\quad\sin\Phi=\sqrt{1-q^{2}\cos^{2}\varphi}\,.
\end{eqnarray}
This gives 
\begin{eqnarray}
 & \pm\cos(\Theta\pm\Phi)\nonumber \\
= & \pm\left(pq\cos\vartheta\cos\varphi\mp\sqrt{1-p^{2}\cos^{2}\vartheta}\sqrt{1-q^{2}\cos^{2}\varphi}\right)\nonumber \\
\leq & \pm\left(pq\cos\vartheta\cos\varphi\mp pq\sqrt{1-\cos^{2}\vartheta}\sqrt{1-\cos^{2}\varphi}\right)\nonumber \\
= & \pm pq(\cos\vartheta\cos\varphi\mp|\sin\vartheta\sin\varphi|)\nonumber \\
\leq & \pm pq\cos(\vartheta\pm\varphi).\label{eq:cossumrel}
\end{eqnarray}

Inequalities (\ref{eq:lemmaineq1}) can now be obtained from inequality
(\ref{eq:cossumrel}) as follows: 
\begin{eqnarray*}
\pm\sqrt{1-pq\cos(\vartheta\pm\varphi)} & \leq\pm\sqrt{1-\cos(\Theta\pm\Phi)}\\
 & \leq\pm\left(\sqrt{1-\cos\Theta}\pm\sqrt{1-\cos\Phi}\right)\\
 & =\pm\left(\sqrt{1-p\cos{\vartheta}}\pm\sqrt{1-q\cos{\varphi}}\right),
\end{eqnarray*}
where the second inequality follows from the already verified inequality
of (\ref{eq:lemmaineq1}) in the case of $p=q=1$. \end{proof}

\begin{theorem} $C[\mathcal{F}_{2}(\rho,\sigma)]$ is a metric for
the space of density matrices \end{theorem}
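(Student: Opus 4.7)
The plan is to combine three ingredients: the parametrization of density matrices in an orthonormal Hermitian basis already used in the proof of Theorem~\ref{teo1}, the spherical triangle inequality for angles between unit vectors in Euclidean space, and Lemma~\ref{Lemma:trigonometric}. Properties (M1), (M2) and (M3) follow immediately from Jozsa's axioms (J1a), (J1b) and (J2), since $\F_{2}$ satisfies them; the real work lies in establishing the triangle inequality (M4).

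Writing $\rho=\vec{u}\cdot\vec{\Upsilon}$, $\sigma=\vec{v}\cdot\vec{\Upsilon}$, $\tau=\vec{w}\cdot\vec{\Upsilon}$ as in \eref{eq:param}, I would set $u=\|\vec{u}\|_{2}$, $v=\|\vec{v}\|_{2}$, $w=\|\vec{w}\|_{2}$, and define the angles $\vartheta=\arccos(\vec{u}\cdot\vec{w}/uw)$, $\varphi=\arccos(\vec{w}\cdot\vec{v}/wv)$ and $\theta=\arccos(\vec{u}\cdot\vec{v}/uv)$. Positivity of the Hilbert--Schmidt inner product on density matrices (from $\tr(\rho\sigma)=\|\sqrt{\rho}\sqrt{\sigma}\|_{2}^{2}\ge 0$) places all three angles in $[0,\pi/2]$, and the standard spherical triangle inequality for unit vectors in $\mathbb{R}^{d^{2}}$ gives $\theta\le\vartheta+\varphi\le\pi$. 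In this notation one has $\F_{2}(\rho,\sigma)=[\min(u,v)/\max(u,v)]\cos\theta$, with analogous expressions for $\F_{2}(\rho,\tau)$ and $\F_{2}(\tau,\sigma)$.

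I would then invoke the first inequality of Lemma~\ref{Lemma:trigonometric} with $p=\min(u,w)/\max(u,w)$ and $q=\min(w,v)/\max(w,v)$, so that $p\cos\vartheta=\F_{2}(\rho,\tau)$ and $q\cos\varphi=\F_{2}(\tau,\sigma)$. The desired triangle inequality for $C[\F_{2}]$ then reduces to showing
\[
pq\cos(\vartheta+\varphi)\le\frac{\min(u,v)}{\max(u,v)}\cos\theta.
\]
When $\cos(\vartheta+\varphi)\le 0$ this is automatic, since the right-hand side is non-negative; when $\cos(\vartheta+\varphi)\ge 0$ it follows from monotonicity of $\cos$ on $[0,\pi]$ applied to $\theta\le\vartheta+\varphi$, combined with the purely algebraic claim $pq\le\min(u,v)/\max(u,v)$.

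I expect the main obstacle to be the verification of this last algebraic claim, which, although elementary, requires a short case split on the relative ordering of $u$, $v$, $w$. Taking $u\ge v$ without loss of generality, one finds $pq=w^{2}/(uv)\le v/u$ when $w\le v$, $pq=v/u$ exactly when $v\le w\le u$, and $pq=uv/w^{2}\le v/u$ when $w\ge u$; each case reduces to a one-line cross-multiplication against $u\ge v$. Assembling these pieces delivers (M4) and, together with (M1)--(M3), establishes $C[\F_{2}]$ as a bona fide metric on the space of density matrices.
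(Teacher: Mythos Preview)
Your argument is correct and rests on the same three ingredients as the paper's proof: the expansion of density matrices in an orthonormal Hermitian basis, the spherical triangle inequality for the angles between the resulting coefficient vectors (which the paper proves explicitly as inequality~\eref{eq:costhetartineq}), and Lemma~\ref{Lemma:trigonometric}.

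The organizational difference is worth noting. The paper first fixes the purity ordering $r\ge s\ge t$ and is then forced to verify three separate triangle inequalities (one for each choice of ``middle'' vertex), which in turn requires both inequalities of Lemma~\ref{Lemma:trigonometric} and both directions of the spherical bound. You instead leave the ordering free and prove the single inequality $C[\F_{2}(\rho,\sigma)]\le C[\F_{2}(\rho,\tau)]+C[\F_{2}(\tau,\sigma)]$ directly, using only the first inequality of the Lemma together with the elementary observation
\[
\frac{\min(u,w)}{\max(u,w)}\cdot\frac{\min(w,v)}{\max(w,v)}\le\frac{\min(u,v)}{\max(u,v)},
\]
which you verify by a short case split on the position of $w$. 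Your additional remark that all three angles lie in $[0,\pi/2]$ (since $\tr(\rho\sigma)\ge 0$ for density matrices) is what makes the case $\cos(\vartheta+\varphi)\le 0$ trivial. The net effect is a somewhat more streamlined route to the same conclusion: the second half of Lemma~\ref{Lemma:trigonometric} is never needed, and the case analysis on purities is absorbed into a single two-line algebraic check.
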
 \begin{proof} Using
equations~\eref{eq:param} and \eref{eq:f2-geom}, the triangle
inequality of \eref{eq:triangineqrhosigmatau} can be rewritten
as 
\begin{equation}
\sqrt{1-\frac{\vec{r}\cdot\vec{s}}{\mu\left(\vec{r},\vec{s}\right)}}\leq\sqrt{1-\frac{\vec{r}\cdot\vec{t}}{\mu\left(\vec{r},\vec{t}\right)}}+\sqrt{1-\frac{\vec{t}\cdot\vec{s}}{\mu\left(\vec{t},\vec{s}\right)}},\label{Ineq:triangle}
\end{equation}
where $\mu\left(\vec{r},\vec{s}\right)\equiv\max(r^{2},s^{2})$ and
the entries of the vectors $\vec{r},\vec{s},\vec{t}\in\mathbb{R}^{d^{2}}$
are the expansion coefficients of $\rho,\sigma,$ and $\tau$, respectively,
in some basis of Hermitian matrices.

Without loss of generality, we henceforth assume that $\rho$ and
$\tau$ are, respectively, the density matrices of maximal and minimal
purities (i.e., $r\geq s\geq t$). Hence, \eref{eq:triangineqrhosigmatau}
unfolds into three inequalities to be proven, which correspond to
$C[\mathcal{F}_{2}(\rho,\sigma)]\leq C[\mathcal{F}_{2}(\rho,\tau)]+C[\mathcal{F}_{2}(\tau,\sigma)]$,
$C[\mathcal{F}_{2}(\sigma,\tau)]\leq C[\mathcal{F}_{2}(\rho,\sigma)]+C[\mathcal{F}_{2}(\rho,\tau)]$
and $C[\mathcal{F}_{2}(\rho,\tau)]\leq C[\mathcal{F}_{2}(\rho,\sigma)]+C[\mathcal{F}_{2}(\tau,\sigma)]$
respectively. In terms of the expansion coefficients vectors $\vec{r}$,
$\vec{s}$, and $\vec{t}$, these be written as:

\begin{eqnarray}
\sqrt{1-\frac{s}{r}\cos\theta_{rs}} & \leq\sqrt{1-\frac{t}{r}\cos\theta_{rt}}+\sqrt{1-\frac{t}{s}\cos\theta_{ts}}\,,\nonumber \\
\sqrt{1-\frac{t}{s}\cos\theta_{ts}} & \leq\sqrt{1-\frac{t}{r}\cos\theta_{rt}}+\sqrt{1-\frac{s}{r}\cos\theta_{rs}}\,,\nonumber \\
\sqrt{1-\frac{t}{r}\cos\theta_{rt}} & \leq\sqrt{1-\frac{s}{r}\cos\theta_{rs}}+\sqrt{1-\frac{t}{s}\cos\theta_{ts}}\,,\nonumber \\
\,\label{eq:ineqtoproveangles3}
\end{eqnarray}
where the angles $\theta_{rs}$, $\theta_{rt}$, and $\theta_{ts}\in[0,\pi]$
have been {\em defined} (in accordance with the Cauchy-Schwarz
inequality) as follows: 
\begin{equation}
\cos\theta_{rs}=\frac{\vec{r}\cdot\vec{s}}{rs}\,,\quad\cos\theta_{rt}=\frac{\vec{r}\cdot\vec{t}}{rt}\,,\quad\cos\theta_{ts}=\frac{\vec{t}\cdot\vec{s}}{ts}\,.
\end{equation}

Besides, since 
\begin{equation}
\cos(\theta_{rs}+\theta_{ts})\leq\cos\theta_{rt}\leq\cos(\theta_{rs}-\theta_{ts})\,,\label{eq:costhetartineq}
\end{equation}
(a demonstration of which will be given at the end of this proof),
the replacement of $\cos\theta_{rt}$ with either $\cos(\theta_{rs}-\theta_{ts})$
or $\cos(\theta_{rs}+\theta_{ts})$ in inequality (\ref{eq:ineqtoproveangles3}),
yields the following \emph{stronger} set of inequalities to be proven,
where we define $\Delta^{\pm}=\theta_{rs}\pm\theta_{ts}$:

\begin{eqnarray}
\sqrt{1-\frac{t}{r}\cos(\Delta^{-})} & \geq\sqrt{1-\frac{s}{r}\cos\theta_{rs}}-\sqrt{1-\frac{t}{s}\cos\theta_{ts}}\nonumber \\
\sqrt{1-\frac{t}{r}\cos(\Delta^{-})} & \geq\sqrt{1-\frac{t}{s}\cos\theta_{ts}}-\sqrt{1-\frac{s}{r}\cos\theta_{rs}}\nonumber \\
\sqrt{1-\frac{t}{r}\cos(\Delta^{+})} & \leq\sqrt{1-\frac{s}{r}\cos\theta_{rs}}+\sqrt{1-\frac{t}{s}\cos\theta_{ts}}\,\nonumber \\
\,\label{eq:Ineq:stronger}
\end{eqnarray}

Note that $\frac{t}{r}=\frac{s}{r}\cdot\frac{t}{s}$ and $\frac{t}{r},\frac{s}{r},\frac{t}{s}\in(0,1]$
by our assumption that $r\ge s\ge t>0$. Thus, through appropriate
identifications of these ratios with $p,q$ and applying Lemma~\ref{Lemma:trigonometric},
inequalities~(\ref{eq:Ineq:stronger}), and hence the desired triangle
inequalities~(\ref{Ineq:triangle}) follow.

We complete this proof with a demonstration of inequalities (\ref{eq:costhetartineq}).
Consider, first, the following pair of vectors, each of which being
orthogonal to $\vec{s}$ 
\begin{equation}
\vec{u}=\vec{r}-\frac{r}{s}\cos\theta_{rs}\vec{s}\quad\mbox{and}\quad\vec{v}=\vec{t}-\frac{t}{s}\cos\theta_{ts}\vec{s}.
\end{equation}
Using these and the orthogonality of $\vec{u}$, $\vec{v}$ to $\vec{s}$,
we may expand the scalar product between $\vec{r}$ and $\vec{t}$
as 
\begin{equation}
\vec{r}\cdot\vec{t}=rt\cos\theta_{rt}=\vec{u}\cdot\vec{v}+rt\cos\theta_{rs}\cos\theta_{ts}.
\end{equation}
Then, using $\vec{u}\cdot\vec{v}=uv\cos\theta_{uv}$ with $u=r\sin\theta_{rs}$,
$v=t\sin\theta_{ts}$, and $\theta_{uv}\in[0,\pi]$, we arrive at
\begin{equation}
\cos\theta_{rt}=\sin\theta_{rs}\sin\theta_{ts}\cos\theta_{uv}+\cos\theta_{rs}\cos\theta_{ts}.
\end{equation}
From here, the trivial inequalities $\cos\theta_{uv}\geq-1$ and $\cos\theta_{uv}\leq1$
imply, respectively, the upper and lower bounds on $\cos\theta_{rt}$
in (\ref{eq:costhetartineq}). \end{proof}

\subsection*{$\Fc$ metric properties}

\begin{theorem} $C[\Fc(\rho,\sigma)]$ is a metric for the space
of density matrices of fixed Hilbert space dimension. \end{theorem}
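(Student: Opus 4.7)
My plan is to reduce the claim directly to the already-established metric property of $C[\Fn]$, which was proved in~\cite{mendonca2008}. The key observation is that, for a fixed Hilbert space dimension $d$, the parameter $r=\frac{1}{d-1}$ is a constant, so the definition of $\Fc$ in Eq.~\eref{Eq:Fc} is just an affine transformation of $\Fn$ with constant coefficients.

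Specifically, I would first compute
\begin{equation}
1-\Fc(\rho,\sigma)=1-\frac{1-r}{2}-\frac{1+r}{2}\Fn(\rho,\sigma)=\frac{1+r}{2}\bigl[1-\Fn(\rho,\sigma)\bigr],
\end{equation}
so that
\begin{equation}
C[\Fc(\rho,\sigma)]=\sqrt{\tfrac{1+r}{2}}\,C[\Fn(\rho,\sigma)].
\end{equation}
This displays $C[\Fc]$ as a strictly positive multiple of $C[\Fn]$, with the same proportionality constant for every pair of density matrices in the $d$-dimensional state space.

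From here, the four metric axioms (M1)--(M4) for $C[\Fc]$ follow immediately from those of $C[\Fn]$: nonnegativity, symmetry, and the identity of indiscernibles are preserved under multiplication by any positive constant, and the triangle inequality carries over because if $C[\Fn(\rho,\sigma)]\leq C[\Fn(\rho,\tau)]+C[\Fn(\tau,\sigma)]$ then multiplying both sides by $\sqrt{(1+r)/2}$ gives the corresponding inequality for $C[\Fc]$. I do not expect any real obstacle here: the whole content of the proposition is that the dimension-dependent normalization in~\eref{Eq:Fc} does not spoil the metric structure, and indeed it cannot, since on a fixed-dimensional state space it is only an overall rescaling. The only subtlety worth flagging explicitly is that the argument is valid for density matrices of a \emph{fixed} Hilbert space dimension, because $r$ depends on $d$; across different dimensions the rescaling factor changes and the statement would need to be interpreted accordingly.
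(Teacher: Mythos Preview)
Your proof is correct and rests on the same core identity the paper uses, namely $1-\Fc=\tfrac{1+r}{2}\,(1-\Fn)$ for fixed $d$. The paper, however, does not take the final step of reading this as $C[\Fc]=\sqrt{(1+r)/2}\,C[\Fn]$ and invoking that a positive scalar multiple of a metric is a metric; instead it plugs the identity into Schoenberg's condition, reducing the negative-definiteness kernel sum for $1-\Fc$ to $\tfrac{1+r}{2}$ times the corresponding sum for $1-\Fn$, and then cites the Schoenberg computation for $\Fn$ from~\cite{mendonca2008}. Your route is strictly more elementary, since it bypasses Schoenberg's theorem entirely and needs only the metric property of $C[\Fn]$ as a black box; the paper's route has the minor advantage of keeping everything inside the Schoenberg framework used throughout the appendix. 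Your caveat about the fixed-dimension hypothesis is exactly right and matches the paper's phrasing of the theorem.
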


\begin{proof} For density matrices $\rho_{i}$ acting on $d$-dimensional
complex Hilbert space $\mathbb{C}^{d}$, note that $r=\frac{1}{d-1}$
is a constant. Thus, for any $c_{i}'s\in\mathbb{R}$ and which are
such that $\sum_{i}c_{i}=0$, we have 
\begin{eqnarray}
 & \sum_{i,j}c_{i}c_{j}\left[1-\Fc(\rho_{i},\rho_{j})\right]\nonumber \\
= & -\sum_{i,j}c_{i}c_{j}\left[\frac{1-r}{2}+\frac{1+r}{2}\Fn(\rho_{i},\rho_{j})\right]\nonumber \\
= & -\frac{1+r}{2}\sum_{i,j}c_{i}c_{j}\,\Fn(\rho_{i},\rho_{j}).
\end{eqnarray}
{From the proof of} the metric property of $C[\Fn(\rho,\sigma)]$
{given in}~\cite{mendonca2008}, we know that the last expression
above must be non-positive. Hence, $\sqrt{1-\Fc(\rho,\sigma)}$ is
a metric for the space of density matrices of fixed dimension. \end{proof}

\subsection*{$\Fgm$ metric properties}

\begin{theorem} $C[\Fgm(\rho,\sigma)]$ is a metric for the space
of density matrices. \end{theorem}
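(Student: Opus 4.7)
The plan is to exploit the fact that $\Fgm(\rho,\sigma)$, in the orthonormal Hermitian-basis parametrization used in the proof of Theorem~\ref{teo1}, is literally the cosine of the angle between $\rho$ and $\sigma$ viewed as real vectors in $\mathbb{R}^{d^2}$. Once this is recognized, $C[\Fgm]$ reduces, up to a constant factor, to the Euclidean chord distance on the unit sphere, and the metric properties follow essentially for free.

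Concretely, I would first dispose of properties (M1), (M2), and (M3) in a single line by quoting axioms (J1a), (J1b), and (J2) for $\Fgm$, which were already established earlier in the appendix via Cauchy--Schwarz together with the trace-one constraint. Only the triangle inequality (M4) requires any real work.

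For (M4), expand $\rho = \vec{u}\cdot\vec{\Upsilon}$, $\sigma = \vec{v}\cdot\vec{\Upsilon}$ and $\tau = \vec{w}\cdot\vec{\Upsilon}$ in an orthonormal Hermitian basis $\vec{\Upsilon}$ as in Eq.~\eref{eq:param}. Then $\tr(\rho\,\sigma) = \vec{u}\cdot\vec{v}$ and $\tr(\rho^{2}) = u^{2} \equiv \|\vec{u}\|_{2}^{2}$, so $\Fgm(\rho,\sigma) = (\vec{u}\cdot\vec{v})/(uv) = \hat{u}\cdot\hat{v}$, where $\hat{u} = \vec{u}/u$, $\hat{v} = \vec{v}/v$ and $\hat{w} = \vec{w}/w$ are unit vectors in $\mathbb{R}^{d^{2}}$. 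A direct computation then gives $C[\Fgm(\rho,\sigma)]^{2} = 1 - \hat{u}\cdot\hat{v} = \tfrac{1}{2}\|\hat{u} - \hat{v}\|_{2}^{2}$, so $C[\Fgm(\rho,\sigma)] = \|\hat{u} - \hat{v}\|_{2}/\sqrt{2}$. The triangle inequality for $C[\Fgm]$ is then immediate from the Euclidean triangle inequality $\|\hat{u}-\hat{w}\|_{2} \leq \|\hat{u}-\hat{v}\|_{2} + \|\hat{v}-\hat{w}\|_{2}$ upon dividing through by $\sqrt{2}$.

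Because the normalization by the geometric mean of the purities is precisely what turns $\Fgm$ into a bona fide cosine of an angle in a real Euclidean space, there is no serious obstacle in this proof; this is in marked contrast to the $\Fmax$ case treated just above, where the analogous argument required a careful trigonometric lemma in order to accommodate states of unequal purity. As an alternative, a Schoenberg-theorem route also goes through with equally little effort, since for any real $c_{i}$ with $\sum_{i} c_{i} = 0$ one has $\sum_{i,j} c_{i} c_{j}\bigl[1 - \Fgm(\rho_{i},\rho_{j})\bigr] = -\bigl\|\sum_{i} c_{i}\hat{u}_{i}\bigr\|_{2}^{2} \leq 0$; I would nonetheless prefer the direct geometric route above for its transparency and its parallel with the structure of the preceding proof for $\Fmax$.
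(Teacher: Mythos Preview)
Your argument is correct. The paper's own proof, however, takes precisely the route you relegate to the alternative: it applies Schoenberg's theorem directly, computing
\[
\sum_{i,j} c_i c_j\bigl[1-\Fgm(\rho_i,\rho_j)\bigr]
= -\tr\!\left[\left(\sum_i \frac{c_i\,\rho_i}{\sqrt{\tr(\rho_i^2)}}\right)^{\!2}\right] \le 0,
\]
which is your Schoenberg line rewritten in operator language rather than in coordinates $\hat{u}_i$. Your preferred route\textemdash recognizing $\Fgm$ as the cosine of the Euclidean angle and identifying $C[\Fgm]$ with $\tfrac{1}{\sqrt{2}}\|\hat{u}-\hat{v}\|_2$\textemdash is more elementary and more geometric: it avoids invoking Schoenberg's theorem entirely and makes the metric property visibly inherited from the ambient Euclidean norm. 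The Schoenberg route, on the other hand, is what keeps the appendix uniform, since the same template is reused verbatim for $\Fc$, $\Fa$, and (in \cite{mendonca2008}) $\Fn$. Either argument is complete; yours has the bonus of explaining \emph{why} the $\Fgm$ case is so much easier than the $\F_2$ case just above it.
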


\begin{proof} For any $c_{i}'s\in\mathbb{R}$ and which are such
that $\sum_{i}c_{i}=0$, note that 
\begin{eqnarray}
 & \sum_{i,j}c_{i}c_{j}\left[1-\Fgm(\rho_{i},\rho_{j})\right]=-\sum_{i,j}c_{i}c_{j}\frac{\tr(\rho_{i}\,\rho_{j})}{\sqrt{\tr(\rho_{i}^{2})\tr(\rho_{j}^{2})}}\nonumber \\
 & =-\tr\left[\left(\sum_{i}\frac{c_{i}\rho_{i}}{\sqrt{\tr(\rho_{i}^{2})}}\right)^{2}\right]\le0.
\end{eqnarray}
Hence, $C[\Fgm(\rho,\sigma)]=\sqrt{1-\Fgm(\rho,\sigma)}$ is a metric
for the space of density matrices.

\end{proof}

\subsection*{$\Fa$ metric properties}

The metric properties of $B[\Fa(\rho,\sigma)]$ were first mentioned
in~\cite{Raggio1984}, while those of $A[\Fa(\rho,\sigma)]$ and
$C[\Fa(\rho,\sigma)]$ were numerically investigated in~\cite{ma2008geometric},
suggesting that they may indeed be metrics for the space of density
matrices. In Section C.3,~\cite{mendonca2008}, it was briefly mentioned
that both $B[\Fa(\rho,\sigma)]$ and $C[\Fa(\rho,\sigma)]$ can be
proved to be a metric using Schoenberg's theorem. Here, we shall provide
an explicit proof of these facts. An alternative proof for $B[\Fa(\rho,\sigma)]$
can also be found in~\cite{ma2011}.

\begin{proof} For any $c_{i}'s\in\mathbb{R}$ and which are such
that $\sum_{i}c_{i}=0$, note that 
\begin{eqnarray}
 & \sum_{i,j}c_{i}c_{j}\left[1-\sqrt{\Fa(\rho_{i},\rho_{j})}\right]=-\sum_{i,j}c_{i}c_{j}\tr\left(\sqrt{\rho_{i}}\sqrt{\rho_{j}}\right)\nonumber \\
 & =-\tr\left[\left|\sum_{i}c_{i}\sqrt{\rho_{i}}\right|^{2}\right]\le0.
\end{eqnarray}
Likewise, we can show that: 
\begin{eqnarray}
 & \sum_{i,j}c_{i}c_{j}\left[1-\Fa(\rho_{i},\rho_{j})\right]=-\sum_{i,j}c_{i}c_{j}\left[\tr\left(\sqrt{\rho_{i}}\sqrt{\rho_{j}}\right)\right]^{2}\nonumber \\
= & -\sum_{i,j}c_{i}c_{j}\tr\left[\left(\sqrt{\rho_{i}}\otimes\sqrt{\rho_{i}}\right)\left(\sqrt{\rho_{j}}\otimes\sqrt{\rho_{j}}\right)\right]\nonumber \\
= & -\tr\left[\left|\sum_{i}c_{i}\sqrt{\rho_{i}}\otimes\sqrt{\rho_{i}}\right|^{2}\right]\le0.
\end{eqnarray}
This concludes our proof for the metric properties of $B[\Fa(\rho,\sigma)]$
and $C[\Fa(\rho,\sigma)]$. \end{proof}

\section*{References}

 \bibliographystyle{unsrt}
\bibliography{survey}

} 
\end{document}